\documentclass[11pt]{article}

\usepackage[margin=1in]{geometry}
\usepackage[utf8]{inputenc}
\usepackage{amsmath,amssymb,amsfonts,amsthm}
\usepackage{booktabs}
\usepackage{enumitem}
\usepackage{algorithm}
\usepackage[noend]{algpseudocode}
\usepackage{xcolor}
\usepackage{mathtools}

\usepackage[expansion=false]{microtype}
\usepackage{natbib}
\usepackage{hyperref}
\newcommand{\indic}{\mathbb{I}}

\newtheorem{theorem}{Theorem}[section]

\newtheorem{lemma}[theorem]{Lemma}
\newtheorem{definition}[theorem]{Definition}
\newtheorem{example}[theorem]{Example}
\newtheorem{corollary}[theorem]{Corollary}
\newtheorem{proposition}[theorem]{Proposition}
\newtheorem{remark}[theorem]{Remark}

\DeclareMathOperator*{\argmin}{arg\,min}
\DeclareMathOperator*{\argmax}{arg\,max}
\DeclareMathOperator{\Fix}{Fix}
\newcommand{\E}{\mathbb{E}}

\newcommand{\Var}{\operatorname{Var}}
\newcommand{\CDR}[2]{\mathcal{R}^{#2}_{#1}}

\title{Ergodic Deviation-Robust Equilibrium under Mirror Descent Learning in Finite Games}

\author{Joshua Steier\thanks{Corresponding author: Joshua Steier, \texttt{joshsteier@gmail.com}.}\\
\textit{Independent Researcher}\\
\texttt{joshsteier@gmail.com}}
\date{}

\begin{document}

\maketitle

\begin{abstract}
We introduce \emph{Ergodic Deviation‑Robust Equilibrium} (EDRE), a
dynamics‑relative equilibrium concept for repeated finite games in which
agents learn via entropic mirror descent (EMD). EDRE requires three
properties to hold simultaneously for the same profile and learning run:
(E1)~the limit profile is an $\varepsilon$‑Nash equilibrium at a product
distribution; (E2)~along the entire learning trajectory, every fixed
coalition's cumulative aggregate (summed‑unilateral) deviation gain is
$\tilde{\mathcal{O}}(\sqrt{T})$ with high probability; and (E3)~the limit
profile is a fixed point of the EMD map, so that it is selected by the
dynamics rather than merely certified as an equilibrium. We prove that the
$\sqrt{T}$ deviation‑regret rate is order‑tight, establish existence in
exact‑potential games (via Nash's theorem, with a constructive proximal route
under concavity) together with Lyapunov monotonicity of EMD (and pointwise
convergence when the fixed‑point set is a singleton), and extend the selection
property to monotone polymatrix games through variational inequalities. Although a static EDRE
coincides with an $\varepsilon$‑Nash equilibrium, its content is dynamic: \emph{robust}
(positive‑measure) selection under EMD excludes linearly unstable equilibria, so EDRE acts as a
Nash equilibrium equipped with a dynamic certificate rather than a static refinement. On the
complexity side, we show that computing EDRE is PPAD‑hard in general polymatrix
games and belongs to promise‑PPAD for potential games. A worked
$2\times 2$ coordination‑game example illustrates all components of the
framework. Additional results---including a bandit‑feedback extension,
a period‑doubling route to Li--Yorke chaos for the two‑strategy EMD map at large
step size, a linear‑program formulation for minimum‑cost steering, and supporting
simulations---appear in the appendices.
\end{abstract}

\medskip
\noindent\textbf{Mathematics Subject Classification:}
Primary: 91A10, 91A26;
Secondary: 68Q17, 37N40, 37E05.

\smallskip
\noindent\textbf{Keywords:}
Nash equilibrium, mirror descent, learning in games, deviation‑regret,
potential games, polymatrix games, variational inequalities, PPAD,
period‑doubling bifurcation, Li--Yorke chaos.

\section{Introduction}\label{sec:intro}
In repeated game play, how do learning dynamics select among multiple equilibria, and what
guarantees can we provide about the entire learning path? While Nash equilibrium certifies
static optimality and no‑regret learning yields time‑averaged guarantees, neither addresses
robustness during transient learning phases when agents are still adapting.

We introduce \emph{Ergodic Deviation‑Robust Equilibrium} (EDRE), a dynamics‑relative
equilibrium concept that couples three properties: (E1) a \emph{static} near‑Nash certificate
at the limiting product distribution; (E2) a \emph{pathwise} guarantee that, along the
entire learning trajectory, the aggregate (summed‑unilateral) gain available to any fixed
coalition from a fixed deviation is small (high‑probability $\tilde{\mathcal O}(\sqrt{T})$
bound); and (E3) a \emph{selection} principle tying the limit set to fixed points of entropic
mirror descent (EMD), with pointwise convergence when the fixed‑point set is a singleton.

The static certificate~(E1) is, by design, the standard $\varepsilon$‑Nash condition at a
product distribution. What makes EDRE non‑trivial is the requirement that (E1), (E2), and~(E3)
hold \emph{simultaneously for the same profile and learning run}. A standard $\varepsilon$‑Nash
equilibrium provides no control over the transient learning path: a coalition could extract
$\Theta(T)$ cumulative benefit by deviating during the approach. No‑regret learning, on the
other hand, guarantees low average regret but typically converges to a \emph{correlated}
equilibrium (a convex combination of product distributions), not to a single product‑distribution
Nash equilibrium.  EDRE fills the gap
between these two extremes.

\paragraph{Contributions.}
(1) We formalize EDRE and prove an all‑prefix, high‑probability $\mathcal O(\sqrt{T})$
aggregate deviation‑regret bound (order‑tight) under diminishing steps; the bound coincides
with genuine joint coalitional regret for non‑interacting coalitions.
(2) In exact‑potential games, EMD is Lyapunov‑monotone: for any constant step
$0<\eta\le 1/L$ the potential is non‑increasing and the limit set is contained in
the MD fixed‑point set; if that fixed‑point set is a singleton, we obtain pointwise
convergence for $0<\eta<1/L$.
(3) Beyond potentials, we give MD‑invariance results in monotone polymatrix games
via variational inequalities (VI), with two non‑potential examples.
(4) We show that EDRE's selection has teeth: while every Nash equilibrium is a static EDRE,
\emph{robust} selection (positive‑measure basin) excludes linearly unstable equilibria, so the
robustly‑selected EDRE are the dynamically stable, EMD‑reachable equilibria (local potential
maximizers in potential games). We also illustrate selection non‑uniqueness relative to NE in a
congestion example and contrast EDRE selection with folk‑theorem sustainability.
(5) We provide PPAD‑hardness of EDRE‑Search and promise‑membership
in PPAD for a potential‑game subclass.
(6) In the appendices, we give a bandit‑feedback extension,
a linear‑program for minimum‑cost EDRE steering, and a complex‑dynamics study of EMD at
large step size: a period‑doubling bifurcation and a rigorous Li--Yorke chaos result
(via an explicit period‑three orbit) for a two‑strategy congestion game, together with
numerical observations of oscillations in a heterogeneous‑step RPS ring.

\paragraph{Organization.}
Section~\ref{sec:related} reviews related work, including the folk‑theorem tradition.
Section~\ref{subsec:prelim} fixes notation and recalls mirror descent.
Section~\ref{sec:def-edre} states the static and dynamic EDRE properties.
Section~\ref{sec:sep-iter2} gives the congestion‑game selection example.
Sections~\ref{sec:cd-bound}--\ref{sec:exist} prove the main deviation‑regret bounds
and existence/convergence results, including polymatrix games.
Section~\ref{sec:dynamics} discusses non‑convergence in zero‑sum games.
Section~\ref{sec:complexity} treats computational complexity.
Section~\ref{sec:limits} outlines limitations.
The appendices contain the bandit extension, heterogeneous‑step dynamics with experiments,
and the LP steering formulation.

\subsection{Preliminaries and notation}\label{subsec:prelim}
\noindent\textbf{Standing notation.}
Players $i\in N$ have finite actions $\mathcal A_i$; $\Delta(\mathcal A_i)$ is the simplex.
A mixed profile is $\sigma=(\sigma_i)_{i\in N}\in\Delta(\mathcal A)\triangleq\prod_i\Delta(\mathcal A_i)$.
Payoffs $u_i:\mathcal A\to[0,1]$ extend multilinearly to $\Delta(\mathcal A)$. Each player $i$
minimizes a loss function $\ell_i: \mathcal A\to[0,1]$ defined by $\ell_i:=-u_i$, also extended
multilinearly. We use $\E$ and $\mathbb P$ for expectation and probability. The Kullback--Leibler
divergence is $D_{\!KL}(x\|y)=\sum_a x(a)\log\frac{x(a)}{y(a)}$. Norms are $\ell_1/\ell_\infty$;
the dual of $\|\cdot\|_1$ is $\|\cdot\|_\infty$, and the negative‑entropy regularizer is
$1$‑strongly convex w.r.t.\ $\|\cdot\|_1$.
For exact‑potential games we write $\Phi$ for the payoff potential and $\Psi=-\Phi$ for the cost
potential (both extended to $\Delta(\mathcal A)$). The deviation‑regret operator
$\CDR{C}{y_C}(\cdot)$ refers to a coalition $C\subseteq N$ and a fixed deviation $y_C$
(defined in \S\ref{par:cdr-def}).
We use utilities in $[0,1]$ throughout by normalization. Players minimize losses
$\ell_i=-u_i$; mirror descent on $\ell$ equals mirror ascent on the payoff potential $\Phi$
and descent on the cost potential $\Psi=-\Phi$, and all Lyapunov inequalities use this convention.
\vspace{0.5em}

\paragraph{Potentials.}
An \emph{exact payoff potential} \(\Phi\) satisfies
\(u_i(a'_i,a_{-i})-u_i(a)=\Phi(a'_i,a_{-i})-\Phi(a)\).
We define the \emph{cost potential} \(\Psi=-\Phi\) and work with their multilinear extensions to \(\Delta(\mathcal A)\). 
A differentiable \(\Psi\) is \(L\)-smooth w.r.t.\ \(\|\cdot\|_1\) if 
\(|\Psi(x)-\Psi(y)-\langle\nabla\Psi(y),x-y\rangle|\le \tfrac{L}{2}\|x-y\|_1^2\) for all \(x,y\); equivalently, \(\nabla\Psi\) is \(L\)-Lipschitz in \(\ell_\infty\).

\paragraph{Asymptotics.} We write $\tilde{\mathcal O}(\cdot)$ to suppress polylogarithmic factors
in the relevant parameters (e.g., in $T$).

\paragraph{Standing assumption for learning runs.}
Unless stated otherwise, full-information EMD runs start in the relative interior 
($\min_a \sigma_i^1(a)>0$ for all $i$). Alternatively, we allow a fixed exploration mixture 
$p_i^t=(1-\gamma)\sigma_i^t+\gamma\,\mathbf{u}_i$ with $\gamma>0$, which ensures positivity of all coordinates.
These conditions render $D_{\!KL}(y_i\Vert \sigma_i^1)$ finite and enable the KL telescoping arguments used below.

\section{Related Work}\label{sec:related}
\paragraph{Equilibrium concepts.}
Classical refinements such as correlated equilibrium (CE) and coarse correlated equilibrium (CCE) certify \emph{static} incentive constraints and are known to arise as time averages of no‑regret dynamics (see, e.g., \citep{hart2000simple} for convergence of adaptive procedures to correlated equilibrium).
 Nash equilibrium (NE) enforces unilateral optimality at a product distribution; it need not be selected by generic no‑regret dynamics and can be unstable under common learning rules in non‑potential games. Quantal‑response / logit equilibria regularize the best‑response map with an entropy penalty, but do not provide deviation-regret guarantees along the path.
The strategic‑form refinements literature---notably the stability‑based program of \citet{kohlberg1986strategic}; see also the textbook treatment in \citet{fudenberg1991game}---aims to select among Nash equilibria by imposing topological or perturbation‑robustness conditions on the best‑response correspondence. EDRE takes a different approach: rather than perturbing the game or the best‑response map, it selects among equilibria by requiring that the limit profile be reachable by a concrete learning algorithm (EMD) while the trajectory satisfies deviation‑regret bounds. The two perspectives are complementary---refinements restrict the \emph{static} solution set, while EDRE restricts the \emph{dynamic} path leading to it.

\paragraph{Folk theorems and repeated‑game sustainability.}
The perfect (subgame‑perfect) folk theorem of \citet{fudenberg1986folk} establishes that,
in infinitely repeated games with sufficiently patient players (discount factor $\delta$
close to~$1$) and a full‑dimensionality condition, any feasible and strictly individually
rational payoff vector can be sustained as a subgame‑perfect equilibrium outcome via credible
punishment strategies. The long‑run‑average (Nash‑threats) tradition---e.g.\
\citet{aumann1994long}---sustains feasible individually rational payoffs as Nash (not
necessarily subgame‑perfect) equilibria of the undiscounted repeated game.
The monograph by \citet{mailath2006repeated} develops these results within
the broader theory of repeated interaction, emphasizing the role of monitoring structure,
belief consistency, and the interplay between short‑run incentives and long‑run cooperation.

EDRE's deviation‑regret property is conceptually adjacent to folk‑theorem
sustainability---both ask whether deviating coalitions can profitably exploit the
ongoing play---but the two frameworks differ in several structural respects. First,
folk theorems operate in the infinitely repeated game with discounting and assume that
players can condition on the full history to implement punishment phases; EDRE operates
in a finite‑horizon learning setting where agents follow a fixed mirror‑descent rule
and do not implement history‑dependent punishments. Second, folk‑theorem sustainability
holds for \emph{any} feasible individually rational payoff when $\delta$ is large enough,
whereas EDRE's deviation‑regret bound is specific to the learning trajectory generated by EMD
and applies to fixed (non‑adaptive) deviations only. Third, folk theorems characterize
the \emph{set} of supportable payoffs but do not select among them; EDRE adds a
selection principle via the MD fixed‑point structure. In short, the folk theorem asks
``what \emph{can} be sustained by patient, strategic players?'' while EDRE asks ``what
\emph{is} selected by bounded learners, and how robust is the transient path to deviation?''

\paragraph{Learning in games.}
Mirror descent (MD) and its entropic special case (multiplicative‑weights / exponentiated gradient) are canonical no‑regret schemes. In exact‑potential games, they act as ascent methods on the payoff potential (equivalently descent on the cost potential), yielding Lyapunov monotonicity; in stable/monotone games they connect to variational inequalities (VI) through mirror descent / mirror‑prox viewpoints. However, standard convergence claims focus on \emph{averages} or on special structure; they do not by themselves impose a static near‑Nash condition at the limit point. See also \citet{mertikopoulos2018cycles,papadimitriou2018game} for global cycling
and bifurcation phenomena in adversarial/discrete learning.

\paragraph{EDRE in context.}
EDRE couples (i) a static near‑Nash certificate at the limit profile with (ii) high‑probability \(\mathcal O(\sqrt{T})\) aggregate deviation-regret along the entire learning path, and (iii) selection/invariance under entropic MD (full convergence in strictly‑concave potential games, Cesàro selection more broadly). This positions EDRE between NE (efficiency) and CCE (learning robustness), while adding a dynamics‑aware selection principle. Relative to the folk‑theorem tradition, EDRE does not assume history‑dependent punishment strategies or infinite patience; instead, it quantifies the robustness of a \emph{concrete learning trajectory} against fixed deviations.
\paragraph{Dynamics‑relative notion.}
EDRE is defined relative to a learning rule (here, entropic mirror descent). 
The static part (i) is algorithm‑agnostic, while (ii)-(iii) explicitly depend on the dynamics. 
Extending EDRE to broader FTRL/OMD classes is left open.

\begin{table}[h]
\centering\small
\begin{tabular}{@{}lcccc@{}}
\toprule
Concept & Static optimality & Dev.-regret (path) & Selection (EMD) & Complexity \\
\midrule
NE & exact unilateral & --- & no (in general) & $\varepsilon$‑NE PPAD‑complete (bimatrix) \\
CCE/CE & correlated (coarse/exact) & --- & yes (averages) & polytime (LP) \\
QRE/logit & regularized best‑resp. & --- & depends on logit & polytime (fixed point) \\
\textbf{EDRE} & \(\varepsilon\)‑NE (prod.) & \(\mathcal O(\sqrt{T})\) (w.h.p.) & yes (Cesàro / full) & PPAD‑hard; promise‑PPAD \\ \bottomrule
\end{tabular}
\caption{EDRE vs.\ nearby notions. "Selection (EMD)" refers to selection under entropic mirror descent from arbitrary initials: full convergence in strictly‑concave potential games; Cesàro selection more broadly. "Dev.-regret (path)" is the aggregate (summed‑unilateral) deviation‑regret of Definition~\ref{def:dynamic-edre}.}
\label{tab:comparison}
\end{table}

\subsection{Mirror descent and the entropic case}\label{subsec:md-overview-rel}
\paragraph{Overview:}
Mirror descent (MD) picks $\sigma^{t+1}$ to balance a one-step linearized gain against
staying close to $\sigma^t$ in a Bregman divergence $D_h$. With the \emph{entropy} mirror map
$h(\sigma)=\sum_a \sigma(a)\log\sigma(a)$, $D_h$ is $D_{\!KL}$ and MD coincides with the
multiplicative-weights / exponentiated-gradient update: multiply each action's weight
by $\exp(\eta\,\text{estimated payoff})$ and renormalize. We use this entropic MD (EMD)
as the learning rule relative to which EDRE is defined and analyzed.

Let \(h_i\) be a Legendre mirror map on \(\Delta(\mathcal A_i)\) and \(D_h\) the induced Bregman divergence. 
Given a stochastic gradient \(g_i^t\in\mathbb R^{|\mathcal A_i|}\), the MD update is
\[
\sigma_i^{t+1}=\argmin_{\sigma\in\Delta(\mathcal A_i)} 
\Big\{\langle \sigma, -\eta_t g_i^t\rangle + D_{h_i}(\sigma\|\sigma_i^t)\Big\}.
\]
With the \emph{entropic} mirror map \(h_i(\sigma)=\sum_a \sigma(a)\log\sigma(a)\), \(D_{h_i}=D_{\!KL}\) and MD coincides with multiplicative-weights:
\[
\sigma_i^{t+1}(a)\;=\;\frac{\sigma_i^t(a)\,\exp\big(\eta_t\,g_i^t(a)\big)}{\sum_b \sigma_i^t(b)\,\exp\big(\eta_t\,g_i^t(b)\big)}.
\]
In exact-potential games, taking \(g_i^t\) as the (estimated) payoff vector yields mirror \emph{ascent} on \(\Phi\) (equivalently descent on \(\Psi=-\Phi\)).
We will use a standard descent inequality: for constant \(0<\eta\le 1/L\),
\[
\Psi(\sigma^{t+1})-\Psi(\sigma^t)\ \le\ \Big(-\tfrac1\eta+L\Big)\,D_{\!KL}(\sigma^{t+1}\|\sigma^t),
\]
proved via the Bregman three-point identity and \(L\)-smoothness.

\paragraph{Fixed points and distance.}
Let $\mathrm{MD}_\eta:\Delta(\mathcal A)\to\Delta(\mathcal A)$ denote one EMD step with step size $\eta$ and
\[
\Fix(\mathrm{MD}_\eta)=\{\sigma\in\Delta(\mathcal A):\ \mathrm{MD}_\eta(\sigma)=\sigma\}.
\]
We write $\mathrm{dist}(x,S)=\inf_{y\in S}\|x-y\|_1$ for the $\ell_1$ distance to a set $S$.
The map $\mathrm{MD}_\eta$ is continuous on $\Delta(\mathcal A)$ (a composition of the
exponential map and renormalization).

\paragraph{Deviation--regret.}\label{par:cdr-def}
For a coalition $C\subseteq N$ and a fixed deviation $y_C\in\prod_{i\in C}\Delta(\mathcal A_i)$, define the prefix \emph{aggregate (summed‑unilateral) deviation--regret} up to $\tau$ by
\[
\CDR{C}{y_C}(\tau)\;:=\;\sum_{t=1}^{\tau}\ \sum_{i\in C}\Big(u_i(y_i,\sigma_{-i}^{\,t})-u_i(\sigma^{t})\Big),\qquad 1\le \tau\le T,
\]
where $y_i$ is member $i$'s component of $y_C$. This is the total benefit available to the
members of $C$ from \emph{unilateral} deviations to their respective targets, summed over the
coalition. Utilities are scaled to $[0,1]$ unless stated otherwise. (The coalition $C$ and deviation $y_C$ are fixed \emph{ex ante}; allowing adaptive deviations $y_C(\sigma^1,\ldots,\sigma^T)$ can drive the regret to $\Omega(T)$ even in two‑action settings, so the $\tilde{\mathcal O}(\sqrt{T})$ rate requires this restriction.)

\begin{definition}[Non‑interacting coalition]\label{def:noninteracting}
A coalition $C\subseteq N$ is \emph{non‑interacting} if, for every $i\in C$, the payoff $u_i$
does not depend on $\sigma_{i'}$ for any $i'\in C\setminus\{i\}$ (equivalently, in a polymatrix
game, $C$ is an independent set of the interaction graph). For a non‑interacting coalition,
$u_i(y_i,\sigma_{-i}^t)=u_i(y_C,\sigma_{-C}^t)$ for all $i\in C$, so the aggregate
deviation--regret $\CDR{C}{y_C}$ coincides with the \emph{joint} coalitional deviation--regret
$\sum_t\sum_{i\in C}\big(u_i(y_C,\sigma_{-C}^t)-u_i(\sigma^t)\big)$ in which the whole coalition
deviates simultaneously.
\end{definition}

\section{Definition and Mirror Descent Overview}\label{sec:def-edre}

\begin{definition}[Static EDRE]\label{def:edre}
A profile $\sigma^\star$ is a static $(\varepsilon)$‑EDRE if
\[
\max_{a_i} u_i(a_i,\sigma^\star_{-i}) \ \le \ u_i(\sigma^\star)+\varepsilon
\quad\text{for all }i.
\]
A \emph{strict} EDRE sets $\varepsilon=0$.
\end{definition}

\begin{remark}[What EDRE adds beyond $\varepsilon$‑Nash equilibrium]\label{rem:edre-vs-ne}
Definition~\ref{def:edre} alone is identical to the standard $\varepsilon$‑Nash condition at a
product distribution. The content of EDRE lies in the \emph{conjunction} of this static
certificate with the dynamic properties in Definition~\ref{def:dynamic-edre} below: an
$\varepsilon$‑Nash equilibrium, by itself, says nothing about the learning process that reaches
it (a coalition might extract $\Theta(T)$ cumulative benefit along the transient trajectory),
while no‑regret learning controls time‑averaged regret but its Cesàro limit may be a
\emph{correlated} equilibrium rather than a Nash equilibrium at a single product distribution.
EDRE requires (E1) the limit profile is $\varepsilon$‑Nash at a product distribution; (E2) along
the entire prefix, every fixed coalition's aggregate (summed‑unilateral) deviation gain is
$\tilde{\mathcal O}(\sqrt{T})$ w.h.p.; and (E3) the limit profile is an EMD fixed point.
Three points sharpen what this buys and what it does not. First, (E2) is a \emph{uniform,
all‑prefix, high‑probability} bound on the summed‑unilateral quantity, and it is
\emph{order‑tight} (Lemma~\ref{lem:tight}); it equals genuine \emph{joint} coalitional regret
exactly for non‑interacting coalitions (Definition~\ref{def:noninteracting}), while for
interacting coalitions a joint $\Theta(T)$ gain is possible and provably out of scope (the
congestion example of \S\ref{sec:sep-iter2}). We state (E2) at this precise strength rather than
claim a general coalitional guarantee. Second, statically EDRE \emph{equals} $\varepsilon$‑Nash,
so its content is not a static refinement but a dynamic certificate attached to an equilibrium;
the certificate has teeth, since robust selection excludes linearly unstable equilibria
(Proposition~\ref{prop:unstable-excluded}). Third, the (E3) product‑distribution requirement is
what separates EDRE from generic no‑regret learning, whose time averages certify only a
\emph{correlated} equilibrium. No existing equilibrium concept enforces this triple coupling.
\end{remark}

\begin{definition}[Dynamic EDRE properties]\label{def:dynamic-edre}
Fix $L>0$ and consider entropic MD with either:
\begin{itemize}[leftmargin=1.5em]
\item[(A)] constant steps $0<\eta\le 1/L$ (Lyapunov monotonicity and selection to the fixed‑point set; when that set is a singleton, pointwise convergence holds for $0<\eta<1/L$), or
\item[(B)] diminishing steps $\eta_t=\eta_0/\sqrt{t}$ with $\eta_0>0$ (deviation‑regret).
\end{itemize}
We say that a static EDRE $\sigma^\star$ satisfies the dynamic properties if:
\begin{enumerate}[label=(\roman*)]
\item \textbf{(Deviation–regret)} Under (B), for \emph{each} fixed coalition $C$ and fixed deviation $y_C$, from any interior initial condition (or under fixed exploration), for each horizon $T\ge1$, with probability at least $1-\delta$,
\begin{equation}\label{eq:cdr-bound}
\begin{aligned}
\max_{1\le \tau\le T}\ \CDR{C}{y_C}(\tau)
&\ \le\
2|C|\sqrt{\,2T\ln\!\frac{4T}{\delta}\,}
\\[-1mm]&\quad
+\ \sqrt{T}\Bigg[
\frac{1}{\eta_0}\sum_{i\in C} D_{\!KL}(y_i\Vert \sigma_i^1)
\;+\;\eta_0\,|C|
\Bigg].
\end{aligned}
\end{equation}
\item \textbf{(MD selection / invariance)} Under (A), for any interior start,
\[
\mathrm{dist}\bigl(\sigma^{t},\Fix(\mathrm{MD}_\eta)\bigr)\ \longrightarrow\ 0,
\qquad \text{and we require } \sigma^\star\in\Fix(\mathrm{MD}_\eta).
\]
If the fixed‑point set is a singleton, then for $0<\eta<1/L$ we have $\sigma^t\to\sigma^\star$.
\end{enumerate}
\end{definition}

\begin{remark}[Uniformity over families]\label{rem:uniform}
Item (i) is a per‑instance guarantee for each fixed $(C,y_C)$. A union bound yields uniform guarantees over any finite family \(\mathcal F\) of \((C,y_C)\) by replacing \(\delta\) with \(\delta/|\mathcal F|\). A fully uniform statement over the continuum of mixed deviations and all $2^{|N|}$ coalitions would additionally require a covering/chaining argument and is not claimed here.
\end{remark}

When the fixed‑point set is not a singleton, item~(ii) asserts convergence to the set $\Fix(\mathrm{MD}_\eta)$ in the sense $\mathrm{dist}(\sigma^t,\Fix(\mathrm{MD}_\eta))\to 0$; in the monotone polymatrix case (Theorem~\ref{thm:stable-polymatrix}), it asserts Cesàro convergence to the VI solution set. Under strict concavity or strong monotonicity the solution set is a singleton and we get pointwise convergence.

\begin{proposition}[Strict EDRE $\Leftrightarrow$ NE]
\label{prop:strict-edre-subset-ne}
A profile is a strict EDRE if and only if it is a mixed Nash equilibrium.
\end{proposition}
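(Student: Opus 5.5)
The plan is to unfold Definition~\ref{def:edre} with $\varepsilon=0$ and compare it directly with the definition of a mixed Nash equilibrium. A strict EDRE is a profile $\sigma^\star\in\Delta(\mathcal A)$ with
\[
\max_{a_i\in\mathcal A_i} u_i(a_i,\sigma^\star_{-i})\le u_i(\sigma^\star)\quad\text{for all } i\in N.
\]
A mixed Nash equilibrium is a profile with $u_i(\sigma_i',\sigma^\star_{-i})\le u_i(\sigma^\star)$ for all $i$ and all $\sigma_i'\in\Delta(\mathcal A_i)$. So the proof reduces to showing that checking deviations to pure actions is the same as checking deviations to mixed strategies. This is the standard fact that best responses can be taken pure.

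\emph{($\Leftarrow$)} Assume $\sigma^\star$ is a mixed NE. Every pure action $a_i$ is a degenerate mixed strategy $\delta_{a_i}\in\Delta(\mathcal A_i)$, and $u_i(\delta_{a_i},\sigma^\star_{-i})=u_i(a_i,\sigma^\star_{-i})$ by the multilinear extension. So the NE inequality applied to $\sigma_i'=\delta_{a_i}$ gives the strict-EDRE inequality for each $a_i$. Taking the maximum over the finite set $\mathcal A_i$ completes this direction.

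\emph{($\Rightarrow$)} Assume $\sigma^\star$ is a strict EDRE, and fix $i$ and $\sigma_i'\in\Delta(\mathcal A_i)$. Because $u_i$ is multilinear, it is linear in player $i$'s coordinate, so
\[
u_i(\sigma_i',\sigma^\star_{-i})=\sum_{a_i}\sigma_i'(a_i)\,u_i(a_i,\sigma^\star_{-i})\le \max_{a_i}u_i(a_i,\sigma^\star_{-i})\le u_i(\sigma^\star).
\]
The first inequality holds because a convex combination is at most its largest term. The second is the strict-EDRE hypothesis. Hence no mixed deviation is profitable, and $\sigma^\star$ is a mixed NE.

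None of the steps is a real obstacle. The one point to handle explicitly is the linearity identity $u_i(\sigma_i',\sigma_{-i})=\sum_{a_i}\sigma_i'(a_i)u_i(a_i,\sigma_{-i})$, which is immediate from the multilinear extension fixed in \S\ref{subsec:prelim}. The dynamic properties of Definition~\ref{def:dynamic-edre} play no role, since the proposition concerns only the static notion.
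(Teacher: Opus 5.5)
Your proposal is correct and takes the same approach as the paper, which simply observes that Definition~\ref{def:edre} with $\varepsilon=0$ is exactly the mixed-Nash best-response condition. You additionally spell out the standard step the paper leaves implicit: by linearity of $u_i$ in $\sigma_i$, checking pure deviations is equivalent to checking mixed ones.
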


\begin{proof}
Definition~\ref{def:edre} with $\varepsilon=0$ is exactly the mixed‑Nash best‑response condition.
\end{proof}

\section{NE vs.\ EDRE: selection can be non‑unique in non‑strict potentials}\label{sec:sep-iter2}

\paragraph{Congestion game with three parallel links.}
Two players each choose one of three \emph{links} (resources) $E=\{1,2,3\}$. For an action profile $a$,
let $x_e(a)\in\{0,1,2\}$ be the number of players using link $e$. Each link $e$ has
a \emph{per-user cost} function $\ell_e:\{1,2\}\to\mathbb R_{\ge0}$; when $j$ players
use $e$, each of those $j$ players pays $\ell_e(j)$. Thus player $i$'s cost at profile $a$
equals $\ell_{a_i}\big(x_{a_i}(a)\big)$ and her utility is $u_i(a)=-\ell_{a_i}\big(x_{a_i}(a)\big)$.
The Rosenthal potential is
\[
\Psi(a)=\sum_{e\in E}\sum_{j=1}^{x_e(a)} \ell_e(j),
\qquad \Phi=-\Psi.
\]
We instantiate the costs by
\[
\ell_1(1)=\ell_1(2)=2,\qquad
\ell_2(1)=\ell_2(2)=2,\qquad
\ell_3(1)=3,\ \ \ell_3(2)=1.5.
\]
This unambiguously distinguishes link costs $\ell_e(\cdot)$ from a player's realized
cost $\ell_{a_i}(\cdot)$.

\begin{lemma}\label{lem:eq-iter2}
The game admits pure Nash equilibria $(1,1)$ and $(2,2)$ (both \emph{weak}), and $(3,3)$ is a
\emph{strict} pure Nash equilibrium.
\end{lemma}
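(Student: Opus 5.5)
The proof is a direct finite check of unilateral deviations, using the cost formulation introduced above. A player's cost at $a$ is $\ell_{a_i}(x_{a_i}(a))$, and $u_i=-\ell_{a_i}(x_{a_i}(a))$. The payoff of a mixed deviation $y_i$ against a pure $a_{-i}$ is linear in $y_i$, so it is maximized at a pure action. It therefore suffices to compare each player's cost at the candidate profile with the cost of every pure unilateral switch; by Proposition~\ref{prop:strict-edre-subset-ne}, this is exactly the Nash condition. The game is symmetric in the two players, so I check only player~1 at each of the profiles $(1,1)$, $(2,2)$ and $(3,3)$.

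\textbf{Profile $(1,1)$.} Both players share link~1, so $x_1=2$ and each pays $\ell_1(2)=2$.
\begin{itemize}[leftmargin=1.5em]
\item If player~1 moves to link~2, she is alone there and pays $\ell_2(1)=2$. The change in cost is $0$.
\item If she moves to link~3, she pays $\ell_3(1)=3>2$.
\end{itemize}
No deviation strictly lowers her cost, so $(1,1)$ is a Nash equilibrium. It is only weak, because the switch to link~2 leaves her indifferent. The same computation with links~1 and~2 interchanged handles $(2,2)$: the switch to link~1 costs $\ell_1(1)=2=\ell_2(2)$, and the switch to link~3 costs $3$.

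\textbf{Profile $(3,3)$.} Here each player pays $\ell_3(2)=1.5$. A switch to link~1 or to link~2 costs $\ell_1(1)=\ell_2(1)=2>1.5$. Every unilateral pure deviation is therefore strictly worse, so $(3,3)$ is a strict pure Nash equilibrium. Strictness carries over to mixed deviations: any $y_1\ne\delta_3$ puts positive mass on a strictly worse action, so it strictly lowers her expected utility.

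\textbf{Potential sanity check.} As a cross-check, the Rosenthal potential takes the values $\Psi(1,1)=\Psi(2,2)=2+2=4$ and $\Psi(3,3)=3+1.5=4.5$. Every unilateral change in $\Psi$ equals the corresponding change in the deviating player's cost. This confirms that the incentive comparisons above agree with the exact-potential structure.

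There is no real obstacle here. The only point needing care is to keep the per-user link cost $\ell_e(j)$ distinct from the realized cost, and to evaluate each deviation at the correct load: a player who leaves a shared link arrives alone on the new link, so the load there is $1$, not $2$. The lemma claims only that these three profiles are equilibria with the stated strictness. It does not claim they are the only equilibria, so no exhaustive enumeration of the remaining profiles is required.
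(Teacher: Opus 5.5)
Your proof is correct and is the paper's argument: a direct check of unilateral switches at $(1,1)$, at $(2,2)$ by symmetry, and at $(3,3)$. The tie $\ell_2(1)=\ell_1(2)=2$ gives weakness, and $\ell_1(1)=\ell_2(1)=2>1.5=\ell_3(2)$ gives strictness. Your additions (reducing mixed deviations to pure ones, and the Rosenthal cross-check $\Psi(1,1)=\Psi(2,2)=4$, $\Psi(3,3)=4.5$) are accurate but unnecessary, and so is citing Proposition~\ref{prop:strict-edre-subset-ne}, since Definition~\ref{def:edre} already quantifies only over pure actions $a_i$.
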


\begin{proof}
At $(1,1)$, each player pays $\ell_1(2)=2$; deviating alone to link $2$ (now used by one player)
costs $\ell_2(1)=2$, a tie, and to link $3$ costs $\ell_3(1)=3>2$. Hence $(1,1)$ is a Nash
equilibrium, but only \emph{weak} (the deviation to link $2$ is payoff‑indifferent). The case
$(2,2)$ is symmetric. At $(3,3)$, each player pays $\ell_3(2)=1.5$; deviating alone to link $1$
or $2$ costs $2>1.5$ strictly, so $(3,3)$ is a strict Nash equilibrium.
\end{proof}

\begin{remark}[A safe bound on \(L\) in finite games]\label{rem:L-safe}
With utilities scaled to \([0,1]\), the multilinear cost potential \(\Psi\) satisfies
\[
\|\nabla\Psi(\sigma)-\nabla\Psi(\sigma')\|_\infty
\ \le\ \|\sigma-\sigma'\|_1,
\]
since each component \((\nabla_{\sigma_i}\Psi(\sigma))_{a_i}=-\,u_i(a_i,\sigma_{-i})\) depends only on \(\sigma_{-i}\) and, by Lemma~\ref{lem:lipschitz}, is \(1\)-Lipschitz in \(\|\cdot\|_1\). Thus one can take \(L\le 1\) with respect to the \(\ell_1/\ell_\infty\) pairing. This suffices for the step‑size requirement
\(0<\eta\le 1/L\). We do not use finite‑grid estimates of \(L\).
\end{remark}

\begin{lemma}[Potential descent and fixed points]\label{lem:sep-conv-iter2}
Entropic MD on an $L$‑smooth cost potential satisfies
\[
\Psi(\sigma^{t+1})-\Psi(\sigma^t)
\;\le\;
\bigl(-\tfrac{1}{\eta}+L\bigr)\,D_{\!KL}(\sigma^{t+1}\Vert\sigma^t).
\]
Hence for $0<\eta\le 1/L$, the potential $\Psi(\sigma^t)$ is non‑increasing. If
$0<\eta<1/L$, then also $\sum_{t}D_{\!KL}(\sigma^{t+1}\Vert\sigma^t)<\infty$, so
$\|\sigma^{t+1}-\sigma^{t}\|_1\to 0$ and every $\omega$–limit point of $\{\sigma^t\}$
is a fixed point of the MD map. If the fixed‑point set is a singleton, convergence is
pointwise for $0<\eta<1/L$.
\end{lemma}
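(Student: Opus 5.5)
The plan is to derive the descent inequality from the variational characterization of the EMD step, combined with $L$‑smoothness of $\Psi$. Then we telescope to obtain monotonicity and summability, and finish with a continuity argument for the $\omega$‑limit set.

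\textbf{Step 1: the one‑step inequality.} The update is $\sigma^{t+1}=\argmin_{\sigma}\{\eta\langle\nabla\Psi(\sigma^t),\sigma\rangle+D_{\!KL}(\sigma\Vert\sigma^t)\}$, taken blockwise over players. With exact gradients, $g_i^t=-\nabla_{\sigma_i}\Psi(\sigma^t)=u_i(\cdot,\sigma^t_{-i})$. Since $\sigma^{t+1}$ minimizes this objective over the product simplex, comparing it with the feasible point $\sigma^t$ gives
\[
\eta\langle\nabla\Psi(\sigma^t),\sigma^{t+1}-\sigma^t\rangle\le -D_{\!KL}(\sigma^{t+1}\Vert\sigma^t).
\]
An equivalent route is the three‑point identity applied with comparator $\sigma^t$, which yields $\langle\nabla\Psi(\sigma^t),\sigma^{t+1}-\sigma^t\rangle\le-\tfrac1\eta\bigl(D_{\!KL}(\sigma^{t+1}\Vert\sigma^t)+D_{\!KL}(\sigma^t\Vert\sigma^{t+1})\bigr)$, an even stronger bound. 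Next, $L$‑smoothness gives
\[
\Psi(\sigma^{t+1})\le\Psi(\sigma^t)+\langle\nabla\Psi(\sigma^t),\sigma^{t+1}-\sigma^t\rangle+\tfrac L2\|\sigma^{t+1}-\sigma^t\|_1^2 .
\]
Pinsker's inequality, applied blockwise, bounds the quadratic term by $L\,D_{\!KL}(\sigma^{t+1}\Vert\sigma^t)$. Here I must be careful with the product structure: $\|\cdot\|_1^2$ of a sum of blocks is not at most the sum of the squares. Cauchy–Schwarz gives $\|x\|_1^2\le |N|\sum_i\|x_i\|_1^2$. I would either absorb the factor $|N|$ into $L$ (the paper's smoothness is stated with respect to $\|\cdot\|_1$ on the product) or use $\tfrac L2\cdot 2D_{\!KL}$ under the convention that $\|\cdot\|_1$ is the blockwise norm for which entropy is $1$‑strongly convex, as in the standing notation. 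Adding the two displays gives $(-\tfrac1\eta+L)D_{\!KL}(\sigma^{t+1}\Vert\sigma^t)$.

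\textbf{Step 2: monotonicity and summability.} For $\eta\le1/L$ the coefficient is nonpositive, so $\Psi(\sigma^t)$ is non‑increasing. For $\eta<1/L$, telescoping gives
\[
(\tfrac1\eta-L)\sum_t D_{\!KL}(\sigma^{t+1}\Vert\sigma^t)\le\Psi(\sigma^1)-\inf\Psi<\infty,
\]
and $\Psi$ is bounded because it is multilinear on a compact set. Hence $D_{\!KL}(\sigma^{t+1}\Vert\sigma^t)\to0$, and by Pinsker $\|\sigma^{t+1}-\sigma^t\|_1\to0$.

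\textbf{Step 3: limit points.} Let $\sigma^{t_k}\to\bar\sigma$. Then $\sigma^{t_k+1}=\mathrm{MD}_\eta(\sigma^{t_k})\to\mathrm{MD}_\eta(\bar\sigma)$ by continuity of $\mathrm{MD}_\eta$, while $\|\sigma^{t_k+1}-\sigma^{t_k}\|_1\to0$ forces $\sigma^{t_k+1}\to\bar\sigma$. Therefore $\bar\sigma\in\Fix(\mathrm{MD}_\eta)$. Compactness of $\Delta(\mathcal A)$ then gives $\mathrm{dist}(\sigma^t,\Fix)\to0$: otherwise some subsequence would stay at distance at least $\epsilon$ from $\Fix$ and would have a limit point outside $\Fix$, a contradiction. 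If $\Fix=\{\sigma^\star\}$, every subsequence has a further subsequence converging to $\sigma^\star$, so $\sigma^t\to\sigma^\star$.

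\textbf{Main obstacle.} The delicate point is Step 1's reconciliation of the smoothness norm with the strong‑convexity modulus on the product of simplices. I would first try to interpret $\|\cdot\|_1$ as in the standing notation, for which the entropy is $1$‑strongly convex. If that fails, I would state the result with $L$ replaced by $|N|L$.
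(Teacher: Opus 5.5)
Your proposal is correct and follows the paper's own route. The descent inequality comes from the optimality/three-point bound plus $L$-smoothness and Pinsker; telescoping gives monotonicity and summability; and continuity of $\mathrm{MD}_\eta$ makes every $\omega$-limit point a fixed point. Your direct compactness argument for $\mathrm{dist}(\sigma^t,\Fix(\mathrm{MD}_\eta))\to0$ and for the singleton case is a leaner substitute for the paper's connectedness/constancy-of-$\Psi$ remark.

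Your worry about the product structure is well founded. The paper's Appendix~\ref{app:exist-md} uses $\tfrac L2\|\sigma^{t+1}-\sigma^t\|_1^2\le L\,D_{\!KL}(\sigma^{t+1}\Vert\sigma^t)$ on the product as if there were a single simplex. That step is valid for the blockwise norm $\bigl(\sum_i\|x_i\|_1^2\bigr)^{1/2}$, but it loses up to a factor $|N|$ for the full $\ell_1$ norm. Your fallback repairs this. Pairing your symmetrized three-point identity with $\|\sigma^{t+1}-\sigma^t\|_1^2\le |N|\bigl(D_{\!KL}(\sigma^{t+1}\Vert\sigma^t)+D_{\!KL}(\sigma^t\Vert\sigma^{t+1})\bigr)$ does even better: it yields monotonicity for $0<\eta\le 2/(|N|L)$, so nothing is lost for two players.
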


\begin{proof}
For entropic MD on the $L$‑smooth cost potential $\Psi$ we have
$\Psi(\sigma^{t+1})-\Psi(\sigma^{t})\le (-\tfrac{1}{\eta}+L)\,D_{\!KL}(\sigma^{t+1}\Vert\sigma^{t})$,
so $\Psi(\sigma^t)$ is nonincreasing when $0<\eta\le 1/L$ and, if $0<\eta<1/L$, then
$\sum_t D_{\!KL}(\sigma^{t+1}\Vert\sigma^t)<\infty$. By Pinsker, $\|\sigma^{t+1}-\sigma^{t}\|_1\to0$.
Since $\mathrm{MD}_\eta$ is continuous on the (compact) simplex, for any convergent subsequence
$\sigma^{t_k}\to\sigma^\ast$ we have
$\mathrm{MD}_\eta(\sigma^\ast)=\lim_k \mathrm{MD}_\eta(\sigma^{t_k})=\lim_k\sigma^{t_k+1}=\sigma^\ast$,
where the last equality uses $\|\sigma^{t+1}-\sigma^t\|_1\to0$; hence every $\omega$–limit point
is a fixed point. The $\omega$–limit set $\Omega$ of the bounded, asymptotically regular orbit
is nonempty, compact, and connected, and $\Psi$ is constant on $\Omega$ (it is monotone and
convergent along the orbit); thus $\Omega\subseteq\{\Psi=\Psi_\infty\}\cap\Fix(\mathrm{MD}_\eta)$.
Since $\mathrm{dist}(\sigma^t,\Omega)\to0$ for any bounded orbit and $\Omega\subseteq\Fix$, we get
$\mathrm{dist}(\sigma^t,\Fix(\mathrm{MD}_\eta))\to0$. If $\Fix(\mathrm{MD}_\eta)$ is a singleton,
$\Omega$ is that point and convergence is pointwise.
\end{proof}

\begin{lemma}[Joint coalitional regret at a held profile]\label{lem:sep-regret-iter2}
Suppose play is \emph{held} at the fixed point $(2,2)$ for $T$ rounds. The (interacting)
coalition $C=\{1,2\}$ with fixed deviation $y_C=(3,3)$ accrues \emph{joint} coalitional
deviation--regret $\sum_t\sum_{i\in C}\big(u_i(y_C,\sigma_{-C}^t)-u_i(\sigma^t)\big)=T$, which
exceeds the bound of Definition~\ref{def:dynamic-edre}(i) for large $T$.
\end{lemma}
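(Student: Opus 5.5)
The plan is a direct computation in three steps: evaluate the per-round joint gain at the held profile, sum over $T$ rounds, and compare the linear total with the sublinear right-hand side of \eqref{eq:cdr-bound}.

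\textbf{Per-round joint gain.} Since play is held, $\sigma^t=(2,2)$ for every $t\le T$. Both players use link $2$, so each pays $\ell_2(2)=2$ and $u_i(\sigma^t)=-2$ for $i=1,2$. Under the joint deviation $y_C=(3,3)$ both players move to link $3$. Then $x_3=2$, each pays $\ell_3(2)=1.5$, and $u_i(y_C,\sigma^t_{-C})=-1.5$. Because $C=N$, the term $\sigma^t_{-C}$ is vacuous. The per-round joint gain is therefore $\sum_{i\in C}\bigl(-1.5-(-2)\bigr)=2\cdot 0.5=1$.

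\textbf{Summing.} Summing over $t=1,\dots,T$ gives joint coalitional deviation--regret exactly $T$. No randomness enters, since the profile is held deterministically. The same prefix computation gives $\tau$ at every $\tau\le T$, so the maximum over prefixes is also $T$.

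\textbf{Comparison with the bound.} For a fixed $\delta\in(0,1)$, $\eta_0>0$ and the KL constants, the right-hand side of \eqref{eq:cdr-bound} has the form $4\sqrt{2T\ln(4T/\delta)}+c\sqrt T$, with $c=\eta_0^{-1}\sum_{i\in C}D_{\!KL}(y_i\Vert\sigma_i^1)+2\eta_0$. This is $\mathcal O(\sqrt{T\log T})$. Since $T/\sqrt{T\ln(4T/\delta)}\to\infty$, the value $T$ exceeds it for all sufficiently large $T$.

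\textbf{Main obstacle.} The delicate point is interpretive: what the constant $c$ in the bound means for a held pure profile. Literally, with $\sigma^1=(2,2)$ pure and $y_i=\delta_3$, we get $D_{\!KL}(y_i\Vert\sigma_i^1)=+\infty$, which would make the bound vacuous. I would handle this in two ways.

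\begin{itemize}[leftmargin=1.5em]
\item Read ``exceeds the bound'' as a comparison of orders: any instance of \eqref{eq:cdr-bound} with finite constants, as under the standing interior-start or exploration assumption, is $o(T)$, whereas the joint regret here is $\Theta(T)$.
\item Alternatively, take an interior start arbitrarily close to $(2,2)$, so that the constant $c$ is finite. The joint gain per round is then near $1$, by the $1$-Lipschitz continuity of the multilinear extension (Lemma~\ref{lem:lipschitz}), and the conclusion follows for large $T$.
\end{itemize}

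\textbf{Why this does not contradict Definition~\ref{def:dynamic-edre}(i).} I would close by contrasting the joint gain with the aggregate summed-unilateral regret, which is what Definition~\ref{def:dynamic-edre}(i) actually controls. A unilateral switch to link $3$ at $(2,2)$ costs $\ell_3(1)=3$, a loss of $1$ per player per round. Hence $\CDR{C}{y_C}(\tau)=-2\tau$, which is consistent with the bound. The two quantities diverge here precisely because $C=\{1,2\}$ is interacting in the sense of Definition~\ref{def:noninteracting}.
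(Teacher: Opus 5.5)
Your proposal is correct and follows the paper's proof, which is just the per-round computation $\sum_{i\in C}\bigl((-1.5)-(-2)\bigr)=1$ summed over the $T$ held rounds. Your additions are sound and make explicit what the paper leaves implicit: the $o(T)$ comparison with \eqref{eq:cdr-bound}; the caveat that $D_{\!KL}(\delta_3\Vert\delta_2)=+\infty$ makes the literal bound vacuous at a held pure profile, so ``exceeds'' is really an order comparison under the interior-start assumption; and the unilateral contrast $\CDR{C}{y_C}(\tau)=-2\tau$, which the paper places in Remark~\ref{rem:interacting} (one nit: the example's utilities lie in $[-3,-1.5]$, so strictly you should rescale by $1.5$ before invoking \eqref{eq:cdr-bound} or Lemma~\ref{lem:lipschitz}, which turns $T$ into $2T/3$ but does not change the conclusion).
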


\begin{proof}
At $(2,2)$ each player pays $\ell_2(2)=2$, so $u_i=-2$; under the joint deviation $(3,3)$ each
pays $\ell_3(2)=1.5$, so $u_i=-1.5$. The per‑round joint gain is $\sum_{i\in C}((-1.5)-(-2))=1$,
hence regret $T$ over $T$ rounds.
\end{proof}

\begin{remark}[Why this does not contradict Lemma~\ref{lem:regret}]\label{rem:interacting}
Lemma~\ref{lem:regret} bounds the \emph{aggregate (summed‑unilateral)} deviation--regret
$\CDR{C}{y_C}$, which equals the joint coalitional regret only for \emph{non‑interacting}
coalitions (Definition~\ref{def:noninteracting}). The coalition $C=\{1,2\}$ above is interacting
(each player's cost depends on the other's link choice through congestion), so the $\Theta(T)$
joint quantity of Lemma~\ref{lem:sep-regret-iter2} is outside the scope of the $\mathcal O(\sqrt T)$
guarantee. Moreover the lemma concerns the \emph{static} profile $(2,2)$ held fixed, not the EMD
trajectory, which from interior initializations descends $\Psi$ toward the global minimizer
(Lemma~\ref{lem:sep-conv-iter2}).
\end{remark}

\begin{corollary}[Selection non‑uniqueness in the congestion example]\label{cor:sep-edre-iter2}
In the congestion game above, the MD fixed‑point set contains multiple pure points (e.g., $(1,1)$ and $(2,2)$; in fact $(3,3)$ is also a NE). Unless the potential is strictly concave (it is not here), Definition~\ref{def:dynamic-edre}(ii) yields set‑valued (Cesàro) selection rather than a singleton. Hence a given NE (e.g., $(2,2)$) need not be the \emph{uniquely selected} outcome from all interior initials; Nash equilibria need not coincide with EDRE under singleton‑selection.
\end{corollary}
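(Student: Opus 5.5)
The plan is to verify the three assertions of Corollary~\ref{cor:sep-edre-iter2} in order: that several pure profiles are MD fixed points, that the cost potential is not strictly convex so Definition~\ref{def:dynamic-edre}(ii) gives only set-valued selection, and that consequently no single Nash equilibrium is selected from all interior initial conditions.

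First, I will show that every pure profile is a fixed point of $\mathrm{MD}_\eta$. The entropic update is multiplicative: $\sigma_i^{t+1}(a)\propto\sigma_i^t(a)\exp(\eta g_i^t(a))$. A coordinate that is zero therefore stays zero, and a vertex is mapped to itself after renormalisation. Thus $(1,1)$, $(2,2)$ and $(3,3)$ all lie in $\Fix(\mathrm{MD}_\eta)$. By Lemma~\ref{lem:eq-iter2} they are also Nash equilibria, hence strict EDRE by Proposition~\ref{prop:strict-edre-subset-ne}. It follows that $\Fix(\mathrm{MD}_\eta)$ is not a singleton, so the singleton clause of Lemma~\ref{lem:sep-conv-iter2} does not apply. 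What remains is the set convergence $\mathrm{dist}(\sigma^t,\Fix)\to0$.

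Second, I will show that the multilinear extension of $\Psi$ is not strictly convex; equivalently, $\Phi$ is not strictly concave. I would restrict $\Psi$ to a segment along which it is affine or concave, for instance by varying only $\sigma_1$ between links $1$ and $2$ while player $2$ is fixed at link $1$. Multilinearity makes $\Psi$ affine along any single player's simplex, which rules out strict convexity.

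Third, I will argue that $(2,2)$ is not selected from every interior start. The exclusion of $(2,2)$ should rely only on the trivial observation below, not on any detailed claim about where a particular orbit ends up. Because $\Psi$ is non-increasing along the orbit by Lemma~\ref{lem:sep-conv-iter2}, it suffices to find an interior $\sigma^1$ with $\Psi(\sigma^1)<\Psi(2,2)=4$. Concretely, I take $\sigma^1$ to be an interior profile close to $(3,3)$, where $\Psi(3,3)=3+1.5=4.5$. That value is larger than $4$, so this choice does not work. The argument should instead be the contrapositive: $(1,1)$ and $(2,2)$ have equal potential, so monotonicity alone cannot exclude either of them. I will therefore use the symmetry swapping links $1$ and $2$. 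This symmetry commutes with $\mathrm{MD}_\eta$, so an orbit converging to $(2,2)$ has a mirror-image orbit converging to $(1,1)$. Hence $(2,2)$ cannot be the limit from all interior initial conditions.

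The main obstacle is justifying that some interior orbit actually converges near $(1,1)$. For this I would start near $(1,1)$ and use that $(1,1)$ is a local minimiser of $\Psi$ on the product simplex. Combined with the non-increase of $\Psi$ and $\mathrm{dist}(\sigma^t,\Fix)\to 0$, the orbit stays in a sublevel neighbourhood of $(1,1)$, and in particular stays away from $(2,2)$. The symmetric argument near $(2,2)$ gives the converse, establishing non-uniqueness of selection.
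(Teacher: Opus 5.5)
Your first two steps and the relabelling argument are correct, and together they already prove the corollary. They also go beyond the paper. The paper gives no separate proof and justifies the statement inline: pure profiles are multiplicative-weights fixed points, Lemma~\ref{lem:eq-iter2} supplies several Nash equilibria among them, and without strict concavity Lemma~\ref{lem:sep-conv-iter2} guarantees only $\mathrm{dist}(\sigma^t,\Fix(\mathrm{MD}_\eta))\to0$. So ``need not be uniquely selected'' there just means there is no singleton guarantee. Your swap $S$ of links $1$ and $2$ is legitimate because $\ell_1\equiv\ell_2$, so $S$ preserves the interior and commutes with $\mathrm{MD}_\eta$. It turns the paper's point into a statement about the actual dynamics: if every interior orbit converged to $(2,2)$, the orbit of $S\sigma^1$, being the $S$-image of the orbit of $\sigma^1$, would converge both to $(2,2)$ and to $S(2,2)=(1,1)$. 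This needs no knowledge of where any particular orbit goes, so you can stop there. Also delete the abandoned search for an interior $\sigma^1$ with $\Psi(\sigma^1)<4$: none exists, since $4$ is the global minimum of $\Psi$.

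Your final paragraph, however, fails, and the picture behind it is wrong. With $r_i=\sigma_i(3)$ one computes $\Psi(\sigma)=4+r_1+r_2-\tfrac32 r_1r_2$. So $(1,1)$ is only a non-strict local minimiser: $\Psi\equiv4$ on the whole face $\Delta(\{1,2\})\times\Delta(\{1,2\})$, which contains $(1,1)$, $(2,2)$, $(1,2)$, $(2,1)$ and consists entirely of MD fixed points. Sublevel sets of $\Psi$ around $(1,1)$ therefore reach $(2,2)$ through this flat face. Neither monotonicity of $\Psi$ nor $\mathrm{dist}(\sigma^t,\Fix(\mathrm{MD}_\eta))\to0$ keeps the orbit away from $(2,2)$.

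What actually confines orbits is a conserved quantity. Links $1$ and $2$ always pay exactly $-2$, so the multiplicative update rescales $\sigma_i^t(1)$ and $\sigma_i^t(2)$ by the same factor, and $\sigma_i^t(1)/\sigma_i^t(2)$ is invariant. Two consequences follow:
\begin{itemize}
\item No interior orbit converges to $(1,1)$ or to $(2,2)$ at all.
\item If $r_1^1,r_2^1<2/3$, link~$3$ has gain $-1+\tfrac32 r_{-i}<0$ for both players. Both $r_i$ then decrease to $0$, and the orbit converges to the mixed point of the face that carries the initial ratios.
\end{itemize}
This gives non-uniqueness directly, as a continuum of distinct limits, and also the stronger fact that $(2,2)$ is selected from no interior start. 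The ``basin of $(1,1)$ versus basin of $(2,2)$'' you were trying to establish is empty on both sides.
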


\begin{remark}[Contrast with folk‑theorem sustainability]\label{rem:folk-contrast}
In the infinitely repeated version with sufficient patience, the folk theorem would sustain
\emph{any} individually rational payoff (including the efficient profile $(3,3)$) via
history‑dependent punishments. EDRE assumes no punishment capability: the selection among
$(1,1)$, $(2,2)$, and $(3,3)$ is determined by the learning dynamics and initialization. The
folk theorem delineates what is supportable in principle; EDRE identifies what is selected in
practice by a specific learning process.
\end{remark}

\section{Aggregate deviation-regret bound}\label{sec:cd-bound}
\noindent\textbf{Notation reminder.}
We fix utilities in $[0,1]$, a coalition $C\subseteq N$, and a \emph{fixed} (ex ante) deviation
$y_C\in\prod_{i\in C}\Delta(\mathcal A_i)$. The aggregate (summed‑unilateral) prefix
deviation--regret up to $\tau$ is denoted $\CDR{C}{y_C}(\tau)$ (see \S\ref{par:cdr-def}). 
Players run entropic mirror descent with diminishing steps $\eta_t=\eta_0/\sqrt{t}$, $\eta_0>0$.

\begin{lemma}[High‑probability, all prefixes with diminishing steps]\label{lem:regret}
Assume utilities in $[0,1]$. Let players run entropic mirror descent with steps $\eta_t=\eta_0/\sqrt{t}$ for any $\eta_0>0$. From any interior initial state 
($\min_a\sigma_i^1(a)>0$ for all $i$), for any fixed coalition $C$ and deviation $y_C$,
\[
\max_{1\le \tau\le T}\ \CDR{C}{y_C}(\tau)
\;\le\;
2|C|\sqrt{2T\ln\frac{4T}{\delta}}
\;+\;\sqrt{T}\Bigg[
\frac{1}{\eta_0}\sum_{i\in C} D_{\!KL}(y_i\Vert \sigma_i^1)
\;+\;\eta_0\,|C|
\Bigg]
\quad\text{with probability }1-\delta.
\]
The same bound holds under a fixed exploration mixture $p_i^t=(1-\gamma)\sigma_i^t+\gamma\,\mathbf u_i$ with $\gamma>0$.
\end{lemma}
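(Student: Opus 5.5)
The plan is to reduce the coalition bound to $|C|$ single-player regret bounds and sum them. Fix $i\in C$. Player $i$ runs entropic MD on the payoff vectors $g_i^t$ with $\eta_t=\eta_0/\sqrt t$. If the feedback is stochastic (sampled opponent actions, or the realized payoff vector under the exploration mixture $p_i^t$), write $g_i^t = \bar g_i^t + \xi_i^t$. Here $\bar g_i^t(a)=u_i(a,\sigma_{-i}^t)$, and $\xi_i^t$ is a martingale-difference sequence with respect to the natural filtration. Since $\sigma^t$ is determined by the past, the regret term in $\CDR{C}{y_C}$ is exactly $\sum_t\langle y_i-\sigma_i^t,\bar g_i^t\rangle$. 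Summing over $i\in C$ splits the quantity into an \emph{optimization term} and a \emph{noise term}:
\[
\sum_{i\in C}\sum_{t\le\tau}\langle y_i-\sigma_i^t, g_i^t\rangle \;-\;\sum_{i\in C}\sum_{t\le\tau}\langle y_i-\sigma_i^t,\xi_i^t\rangle .
\]

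For the optimization term I use the standard anytime OMD analysis with the entropic regularizer. The three-point identity gives
\[
\eta_t\langle y_i-\sigma_i^t,g_i^t\rangle\le D_{KL}(y_i\|\sigma_i^t)-D_{KL}(y_i\|\sigma_i^{t+1})+\tfrac{\eta_t^2}{2}\|g_i^t\|_\infty^2 .
\]
This uses $1$-strong convexity in $\|\cdot\|_1$, or equivalently Hoeffding's lemma for the log-partition function. Divide by $\eta_t$ and sum by parts over $t\le\tau$. The KL differences then telescope into $D_{KL}(y_i\|\sigma_i^1)/\eta_1+\sum_t(1/\eta_t-1/\eta_{t-1})D_{KL}(y_i\|\sigma_i^t)$.

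This sum-by-parts step is the main obstacle. With a time-varying step, the intermediate divergences $D_{KL}(y_i\|\sigma_i^t)$ are not controlled by $D_{KL}(y_i\|\sigma_i^1)$ in general. I would first try to avoid this by recasting the update as dual averaging / FTRL. With $\eta_t$ decreasing, multiplicative weights with cumulative payoffs, $\sigma_i^{t}\propto \sigma_i^1\exp(\eta_t\sum_{s<t}g_i^s)$, has the regret bound $D_{KL}(y_i\|\sigma_i^1)/\eta_\tau+\sum_{t\le\tau}\eta_t\|g_i^t\|_\infty^2/2$. Only the initial divergence appears, since the regularizer is $D_{KL}(\cdot\|\sigma_i^1)\ge0$. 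Using $1/\eta_\tau=\sqrt\tau/\eta_0$ and $\sum_{t\le\tau}\eta_0/(2\sqrt t)\le\eta_0\sqrt\tau$, together with $\|g_i^t\|_\infty\le1$, gives at most $\sqrt T\,[D_{KL}(y_i\|\sigma_i^1)/\eta_0+\eta_0]$ for every $\tau\le T$ simultaneously. This bound is deterministic, so it is uniform in the prefix. Summing over $i\in C$ produces the bracketed term. Should the paper's update be the lazy-projection form, this is exactly the argument; otherwise I would note the equivalence or accept the standard interpretation.

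For the noise term, each increment $\langle y_i-\sigma_i^t,\xi_i^t\rangle$ is a martingale difference bounded by $2$ in absolute value. The increments satisfy $|\langle y_i-\sigma_i^t,\cdot\rangle|\le\|y_i-\sigma_i^t\|_1\|\cdot\|_\infty$ with payoffs in $[0,1]$, and $y_i$ is fixed ex ante, which is essential here. Apply Azuma--Hoeffding to the summed martingale, or to each $i$ separately. A union bound over $\tau\le T$ (or Freedman/Doob maximal) with confidence $\delta/(2T)$ per prefix then gives the first term: deviation at most $2\sqrt{2T\ln(4T/\delta)}$ per player, hence at most $2|C|\sqrt{2T\ln(4T/\delta)}$ summed over the coalition via the triangle inequality. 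In the exact full-information case $\xi\equiv0$ and this term is merely slack.

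Under the exploration mixture, the played strategy is $p_i^t$ rather than $\sigma_i^t$. I would absorb the difference by interpreting the regret at $\sigma^t$ as in the statement, with $\sigma$ playing the role of the learner's state. Interiority comes automatically from $\gamma>0$, which makes $D_{KL}(y_i\|\sigma_i^1)$ finite. The martingale argument is unchanged.
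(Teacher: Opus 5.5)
Your architecture matches the paper's. You split $\CDR{C}{y_C}$ into a martingale part, handled by Azuma--Hoeffding with a union bound over prefixes, plus a per-player drift whose only divergence term is $D_{\!KL}(y_i\Vert\sigma_i^1)/\eta_\tau$, and you sum over $i\in C$. The two routes part ways exactly at the step you single out.

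The paper simply invokes ``the time-varying online mirror descent bound'' for the update of \S\ref{subsec:md-overview-rel}. That update is the greedy step $\sigma_i^{t+1}\propto\sigma_i^t e^{\eta_t g_i^t}$, i.e.\ $\sigma_i^t\propto\sigma_i^1\exp\bigl(\sum_{s<t}\eta_s g_i^s\bigr)$. You obtain the bound from dual averaging instead, $\sigma_i^t\propto\sigma_i^1\exp\bigl(\eta_t\sum_{s<t}g_i^s\bigr)$. There the regularizer $D_{\!KL}(\cdot\Vert\sigma_i^1)/\eta_t$ is nonnegative and nondecreasing in $t$, so the FTRL lemma gives $D_{\!KL}(y_i\Vert\sigma_i^1)/\eta_\tau+\tfrac12\sum_{t\le\tau}\eta_t$ deterministically for every $\tau$. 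That derivation is correct and yields exactly the bracketed term.

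Your fallback (``note the equivalence'') is not available, however. The two updates coincide only for constant steps. For the greedy update with $\eta_t=\eta_0/\sqrt t$, both the inequality the paper cites and the lemma itself fail.

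Here is a witness. Let player~2 have $u_2(\cdot,R)=1$, $u_2(\cdot,L)=0$, starting at $\ln\bigl(\sigma_2^1(R)/\sigma_2^1(L)\bigr)=-M$. Let $u_1=1$ on $(A,L)$ and $(B,R)$ and $0$ otherwise, with $\sigma_1^1$ uniform, and take $C=\{1\}$, $y_1=B$. Under the greedy update:
\begin{itemize}
\item Player~2 switches to $R$ at $t\approx M^2/(4\eta_0^2)$.
\item Player~1's log-odds of $A$ over $B$ is $\sum_{s<t}\eta_s\bigl(\sigma_2^s(L)-\sigma_2^s(R)\bigr)\approx M-\bigl|2\eta_0\sqrt t-M\bigr|$, which stays positive until $T\approx M^2/\eta_0^2$.
\end{itemize}
So player~1 sits on $A$ throughout the last three quarters of the horizon, where $B$ is the best response, and $\CDR{\{1\}}{B}(T)\approx T/2$. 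The right-hand side, by contrast, is $O(\sqrt{T\ln T})$ with constants independent of $M$, since here $D_{\!KL}(y_1\Vert\sigma_1^1)=\ln 2$.

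You should therefore state your proof as a proof for EMD in dual-averaging form. Alternatively, keep the greedy update but use a horizon-tuned constant step $\eta=\eta_0/\sqrt T$; there the two forms agree and the same prefix bound follows. With that commitment your argument proves the lemma. The unjustified step is the paper's drift bound, not yours.

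Two smaller points.
\begin{itemize}
\item For the noise term, apply Azuma to the summed martingale with increments bounded by $2|C|$, as the paper does. Per-player bounds combined by the triangle inequality need a further union bound over $i\in C$, which puts $|C|$ inside the logarithm.
\item The exploration clause is not settled by reinterpretation. If opponents play $p_{-i}^t$, player $i$'s gradients are $u_i(\cdot,p_{-i}^t)$, which can differ from $u_i(\cdot,\sigma_{-i}^t)$ by up to $2\gamma(|N|-1)$ (Lemma~\ref{lem:lipschitz}). The argument then picks up an extra $O(\gamma T)$ term unless $\gamma=O(T^{-1/2})$. The paper's proof is silent on this as well.
\end{itemize}
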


\begin{remark}[Why diminishing steps are needed for $\tilde O(\sqrt{T})$]\label{rem:stepsize}
With a constant step $\eta>0$, the mirror descent drift contains a linear term $\frac{\eta}{2}\sum_{t=1}^T\|g_t\|_\ast^2=\Theta(\eta T)$, so an $\tilde O(\sqrt{T})$ deviation‑regret bound cannot hold in general. The schedule $\eta_t=\eta_0/\sqrt{t}$ makes the drift $\sum_t\eta_t$ scale as $O(\sqrt{T})$, which is tight.
\end{remark}

\begin{proof}
Because $\CDR{C}{y_C}$ is a \emph{sum over $i\in C$} of per‑player unilateral terms
$u_i(y_i,\sigma_{-i}^t)-u_i(\sigma^t)$, it suffices to bound each player's prefix term and add.
Write $Z_t=\sum_{i\in C}\big(u_i(y_i,\sigma_{-i}^t)-u_i(\sigma^t)\big)$ and let
$\mathcal F_t=\sigma(\sigma^1,\dots,\sigma^t)$. Decompose
$\sum_{t=1}^{\tau}Z_t=M_\tau+\text{drift}$ with
$M_\tau:=\sum_{t=1}^{\tau}\bigl(Z_t-\E[Z_t\mid\mathcal F_{t-1}]\bigr)$.

\emph{Martingale part.} Since increments are bounded by $2|C|$, Azuma--Hoeffding and a union bound yield
$\max_{\tau\le T}M_\tau\le 2|C|\sqrt{2T\ln(4T/\delta)}$ with probability $\ge 1-\delta/2$.

\emph{Drift.} For each $i\in C$ and fixed $y_i$, the time‑varying online mirror descent bound
with the negative‑entropy regularizer (which is $1$‑strongly convex w.r.t.\ $\|\cdot\|_1$, dual
norm $\|\cdot\|_\infty$) gives
\[
\sum_{t=1}^{\tau}\langle g_i^t,\sigma_i^t-y_i\rangle
\ \le\
\frac{D_{\!KL}(y_i\Vert \sigma_i^1)}{\eta_\tau}
\;+\;\frac{1}{2}\sum_{t=1}^{\tau}\eta_t\,\|g_i^t\|_\infty^2.
\]
With $\eta_t=\eta_0/\sqrt{t}$ we have $\eta_\tau\ge \eta_0/\sqrt{T}$ and $\sum_{t\le T}\eta_t\le 2\eta_0\sqrt{T}$.
Since $\|g_i^t\|_\infty\le 1$, 
\[
\sum_{t=1}^{\tau}\langle g_i^t,\sigma_i^t-y_i\rangle
\ \le\ 
\frac{\sqrt{T}}{\eta_0}D_{\!KL}(y_i\Vert \sigma_i^1)
\;+\;\eta_0\sqrt{T}.
\]
Summing over $i\in C$ bounds the drift by the bracketed term times $\sqrt{T}$. Combine with the martingale bound and union bound over $\tau$ to conclude. (No constraint linking $\eta_0$ to $L$ is used; the cap $\eta_0\le 1/(4L)$ is needed only when one simultaneously invokes the potential‑descent inequality.)
\end{proof}

Together with Lemma~\ref{lem:tight}, this shows our $\sqrt{T}$ deviation-regret rate is order‑tight for the online‑mirror‑descent template.

\begin{lemma}[Order‑tightness of the template]\label{lem:tight}
For every constant $c<1$ there is a sequence of payoff (gradient) vectors in $[0,1]$, realizable
within a two‑player zero‑sum interaction, against which entropic MD with steps $\eta_t=\eta_0/\sqrt t$
incurs unilateral deviation--regret at least $c\sqrt{T}$ with probability $1-\delta$ for all large
$T$. Hence the $\Theta(\sqrt T)$ rate of Lemma~\ref{lem:regret} cannot be improved in general by
the OMD analysis.
\end{lemma}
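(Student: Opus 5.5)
The plan is to build a stochastic adversary with a large but fixed number of actions. The learning rule then becomes irrelevant: the learner's realized payoff concentrates around $T/2$ whatever $\eta_0$ is, while the best fixed action's payoff exceeds $T/2$ by an amount that grows with the number of actions. This avoids the $\eta_0$-dependence that a deterministic adaptive adversary against a specific step schedule would produce, which is what is needed to reach \emph{every} constant $c<1$. Throughout I read ``unilateral deviation--regret'' as $\max_{y_i}\sum_{t\le T}\langle g_i^t, y_i-\sigma_i^t\rangle$, the quantity the OMD template of Lemma~\ref{lem:regret} controls for the comparator $y_i$. For a comparator fixed before play the expected regret under the construction below is zero, so tightness has to be measured against the best comparator, chosen ex post.

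\emph{Construction.} Fix $c<1$ and $\delta$, and let $n$ be an integer to be chosen below, depending on $c$ and $\delta$ but not on $T$ or $\eta_0$. Player $1$ has $n$ actions. Player $2$ has $2^n$ actions indexed by $S\subseteq[n]$, with $u_1(a,S)=\indic[a\in S]\in[0,1]$ and $u_2=1-u_1$, so the game is zero-sum after the affine normalization. Player $2$ plays uniformly at random each round, for example by holding its mixed strategy fixed at uniform. Player $1$'s payoff vector $g^t=(\indic[a\in S_t])_a$ then has i.i.d.\ Bernoulli$(1/2)$ coordinates, independent across rounds and of $\mathcal F_{t-1}$. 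Player $1$ runs entropic MD with $\eta_t=\eta_0/\sqrt t$, so $\sigma^t$ is $\mathcal F_{t-1}$-measurable.

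\emph{Key steps.} (1) Learner side: $X_t=\langle g^t,\sigma^t\rangle-\tfrac12$ is a martingale difference bounded in $[-\tfrac12,\tfrac12]$. By Azuma--Hoeffding, $\sum_t\langle g^t,\sigma^t\rangle\le T/2+\sqrt{(T/2)\ln(2/\delta)}$ with probability at least $1-\delta/2$, uniformly in $\eta_0$ and $n$. (2) Comparator side: the centered counts $W_a=\sum_t(g^t(a)-\tfrac12)$ for $a\in[n]$ are independent, and $2W_a/\sqrt T$ converges in distribution to $N(0,1)$. Hence $\mathbb P\bigl(\max_a W_a < b\sqrt T\bigr)\to \Phi_N(2b)^n$, where $\Phi_N$ is the standard normal cdf; since each factor is below $1$ this product tends to $0$ as $n\to\infty$. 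With $b=c+\sqrt{\ln(2/\delta)/2}$, choose $n$ so that $\Phi_N(2b)^n<\delta/4$. Then for all large $T$, $\max_a W_a\ge b\sqrt T$ with probability at least $1-\delta/2$. (3) Combine: on the intersection of the two events, which has probability at least $1-\delta$, the regret satisfies $\max_a\sum_t g^t(a)-\sum_t\langle g^t,\sigma^t\rangle\ge b\sqrt T-\sqrt{(T/2)\ln(2/\delta)}=c\sqrt T$. A single realized sequence with this property exists, which gives the ``sequence of payoff vectors'' form. In fact the construction gives more: the rate $\sqrt{T\ln n}$ shows the $\sqrt T$ order in Lemma~\ref{lem:regret} is tight up to constants, for every $\eta_0$.

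\emph{Main obstacle.} The delicate step is (2): turning the CLT heuristic into a statement that holds for all large $T$ at fixed $n$. I would first apply Berry--Esseen to each $W_a$, which gives an $O(1/\sqrt T)$ error per coordinate, and use independence so the product of the $n$ probabilities stays below $\delta/4$ once $T\ge T_0(n)$. If that turns out to be fiddly, I would fall back on the Paley--Zygmund or Slud anti-concentration bound for binomial tails, $\mathbb P(W_a\ge b\sqrt T)\ge p(b)>0$ uniformly in large $T$. That gives $\mathbb P(\max_a W_a<b\sqrt T)\le(1-p(b))^n$, which is enough. A secondary point to state carefully is the interpretation of the comparator, as noted at the start.
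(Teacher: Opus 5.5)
Your proof is correct, and it takes a genuinely different route from the paper's. The paper (Appendix~\ref{app:lowerbound}) uses a single learner with two actions facing i.i.d.\ random payoff vectors. It invokes the classical expected $\Omega(\sqrt T)$ lower bound for prediction with expert advice, which gives only ``some universal $c>0$'', and then asserts a Freedman-type upgrade to probability $1-\delta$. You instead let the number of actions $n=n(c,\delta)$ grow and write the regret as $\max_a W_a-\sum_t X_t$. The learner term is controlled by Azuma--Hoeffding for \emph{any} predictable learner, so uniformly in $\eta_0$ and the initial point. The comparator term is controlled by anti-concentration of the maximum of $n$ independent centered walks.

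This buys what the statement's quantifiers require and the paper's sketch does not supply: an arbitrary constant (your scale is of order $\sqrt{T\ln n}$) at a prescribed $\delta$, with the lower tail of the comparator's lead handled explicitly. A Freedman bound only concentrates the learner's martingale; it cannot by itself lower-bound $\max_a W_a$. With two fixed actions the problem is not merely one of constants. Under $\eta_t=\eta_0/\sqrt t$ the log-odds $\sum_{s<t}\eta_s\,(g^s(1)-g^s(2))$ have variance of order $\eta_0^2\ln t$. The learner therefore essentially freezes on an action selected early, asymptotically independently of which action leads at time $T$. For any fixed $c>0$ its regret then falls below $c\sqrt T$ with probability at least roughly $1/2$ as $T\to\infty$. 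Letting $n$ grow is what circumvents this, at the price of a game whose size depends on $(c,\delta)$, which the statement permits.

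Your ``main obstacle'' is not one. Since $n$ is fixed before $T\to\infty$, the ordinary CLT for each coordinate together with continuity of a finite product already gives $\prod_a\mathbb P(W_a<b\sqrt T)\to\Phi_N(2b)^n$. Berry--Esseen is unnecessary.

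One detail needs fixing. Realizing the opponent by ``holding its mixed strategy fixed at uniform'' fails under the paper's definitions. Full-information EMD uses $g_1^t(a)=u_1(a,\sigma_2^t)$, and $\CDR{C}{y_C}$ is evaluated at $\sigma_{-i}^t$. A uniform mixed $\sigma_2^t$ therefore gives $g^t\equiv\tfrac12\mathbf 1$ and identically zero regret. Player~2 must instead play the pure action $S_t$ each round: $\sigma_2^t$ is the point mass at $S_t$, with $S_t$ drawn i.i.d.\ uniformly and independently of player~1 (an oblivious randomized opponent). Then $u_1(y,\sigma_2^t)=\langle g^t,y\rangle$, and your regret is exactly $\CDR{\{1\}}{y}(T)$ for the ex post best pure $y$, which is the same comparator the paper uses.
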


\begin{proof}
See Appendix~\ref{app:lowerbound} for a construction based on the standard
$\Omega(\sqrt{T})$ lower bound for prediction with expert advice and a Freedman‑type
high‑probability upgrade. The statement is an order‑tightness claim for the OMD template
(lower bound over adversarial gradient sequences); we do not claim a matching realized
deviation--regret for every EMD‑vs‑EMD trajectory.
\end{proof}

\begin{remark}[Bandit feedback]
The aggregate deviation‑regret bound extends to bandit feedback via importance‑weighted
EMD, at the cost of an additional $\sqrt{|\mathcal A_C|/\gamma}$ factor; see
Appendix~\ref{app:bandit-full} for the precise statement (Theorem~\ref{thm:bandit-edre}).
\end{remark}

\section{Existence and uniqueness of EDRE}\label{sec:exist}

\subsection{Exact potential preliminaries}

The multilinear extension of a finite potential game need not be concave on $\Delta(\mathcal A)$; where concavity is assumed below, we mark it explicitly. All unconditional claims (existence, deviation‑regret, Lyapunov monotonicity, fixed‑point selection) hold without concavity.

\begin{definition}[Exact potential game]\label{def:exact-pot}
A finite game admits an \emph{exact potential} $\Phi$ if
\[
u_i(a'_i,a_{-i})-u_i(a)
   \;=\;
\Phi(a'_i,a_{-i})-\Phi(a)
\quad\forall i,a_i',a.
\]
\end{definition}

\begin{definition}[Strict concavity]\label{def:strict-concave}
A differentiable $\Phi$ on $\Delta(\mathcal A)$ is \emph{strictly
concave} if  
$\Phi(\lambda x+(1-\lambda)y)>
 \lambda\Phi(x)+(1-\lambda)\Phi(y)$
for all $x\neq y$ and $\lambda\in(0,1)$.
\end{definition}

\begin{lemma}[Well‑defined proximal map]\label{lem:uhc}
Assume the payoff potential $\Phi$ is concave on $\Delta(\mathcal A)$ and fix $\lambda>0$.
For each $\sigma$, the problem
\[
\mathcal{T}_\lambda(\sigma)=
\argmax_{\sigma'\in \Delta(\mathcal A)}\Bigl[\Phi(\sigma')-\lambda
                       D_{\!KL}(\sigma'\!\|\sigma)\Bigr]
\]
has a \emph{unique} maximizer (since $-D_{\!KL}(\cdot\|\sigma)$ is strictly concave), hence
$\mathcal T_\lambda$ is a single‑valued map. Moreover, $\mathcal T_\lambda$ is continuous by Berge's Maximum Theorem~\citep{berge1963}
(strict concavity yields continuity of the argmax map).
\end{lemma}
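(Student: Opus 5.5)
The plan is to treat the two claims of Lemma~\ref{lem:uhc} separately: existence and uniqueness of the maximizer by a direct compactness-plus-strict-concavity argument, then continuity of $\sigma\mapsto\mathcal T_\lambda(\sigma)$ via Berge's Maximum Theorem. Throughout, $D_{\!KL}(\sigma'\|\sigma)=\sum_i D_{\!KL}(\sigma'_i\|\sigma_i)$ on the product of simplices. We work with $\sigma$ in the relative interior, which is the standing assumption for learning runs. There the divergence is finite and continuous in $\sigma'$. If $\sigma$ has zero coordinates, the objective equals $-\infty$ off the face $\{\sigma':\operatorname{supp}\sigma'_i\subseteq\operatorname{supp}\sigma_i\}$, and the same argument runs on that face.

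\emph{Existence and uniqueness.} Fix $\sigma$ and set $F_\sigma(\sigma')=\Phi(\sigma')-\lambda D_{\!KL}(\sigma'\|\sigma)$.
\begin{enumerate}[label=(\alph*)]
\item $\Phi$ is multilinear, hence continuous on the compact set $\Delta(\mathcal A)$.
\item $\sigma'\mapsto D_{\!KL}(\sigma'\|\sigma)$ is continuous there, using the convention $0\log 0=0$.
\item Weierstrass then gives a maximizer.
\item $D_{\!KL}(\cdot\|\sigma)$ equals negative entropy plus a linear term, so it is strictly convex on the product of simplices. Combined with the assumed concavity of $\Phi$ and $\lambda>0$, $F_\sigma$ is strictly concave.
\item If $x\neq y$ were two maximizers, the midpoint would give a strictly larger value, a contradiction. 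So $\mathcal T_\lambda$ is single-valued.
\end{enumerate}

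\emph{Continuity.} Apply Berge's theorem to the joint objective $F(\sigma,\sigma')$ with the constant correspondence $\sigma\mapsto\Delta(\mathcal A)$. This correspondence is compact-valued and trivially continuous. The theorem then makes the argmax correspondence upper hemicontinuous with nonempty compact values. A single-valued u.h.c.\ correspondence into a compact Hausdorff space is a continuous function, which gives continuity of $\mathcal T_\lambda$.

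\emph{Main obstacle.} Berge requires $F$ to be jointly continuous in $(\sigma,\sigma')$. The term $\sum_a\sigma'(a)\log(1/\sigma(a))$ blows up as $\sigma$ approaches the boundary. So I would state and prove continuity on the relative interior, or on any compact subset $\{\min_a\sigma_i(a)\ge\kappa\}$. On such sets $\log\sigma(a)$ is bounded and continuous, so $F$ is jointly continuous there.

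If continuity up to the boundary is genuinely needed, I would first try the first-order characterization. Strict concavity shows the maximizer has support equal to $\operatorname{supp}\sigma$, and it satisfies
\[
\sigma'_i(a)\propto\sigma_i(a)\exp\!\big(\partial_{a}\Phi(\sigma')/\lambda\big).
\]
One would then argue continuity of this implicit map directly. I expect this fallback to be the delicate step, and I would not rely on it for the lemma as stated.
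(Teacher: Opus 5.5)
Your existence and uniqueness argument matches the paper's: strict concavity of $\sigma'\mapsto\Phi(\sigma')-\lambda D_{\!KL}(\sigma'\|\sigma)$, and you make explicit the restriction to the face of $\sigma$ when $\sigma$ is on the boundary. Your continuity argument is the paper's Berge argument. Your diagnosis of the obstacle is also correct, and it is in fact a flaw in the paper's own proof. The paper asserts that the objective is continuous in $(\sigma',\sigma)$ on $\Delta(\mathcal A)\times\Delta(\mathcal A)$. But at a boundary $\sigma$, $D_{\!KL}(\sigma'\|\sigma)$ jumps to $+\infty$ as soon as $\sigma'$ charges an action outside $\operatorname{supp}\sigma$. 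So the divergence is only jointly lower semicontinuous, and Berge's theorem as invoked gives continuity only on the relative interior.

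The gap in your proposal is that you stop there. The lemma asserts continuity of $\mathcal T_\lambda$ on all of $\Delta(\mathcal A)$, and the constructive route in Theorem~\ref{thm:exist-edre} needs exactly that. Brouwer requires a continuous self-map of the compact simplex, and continuity on the open relative interior does not provide one.

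The boundary case is not delicate; a one-sided form of Berge suffices. Write $F(\sigma,\sigma')=\Phi(\sigma')-\lambda D_{\!KL}(\sigma'\|\sigma)$. This is jointly upper semicontinuous with values in $[-\infty,\infty)$, because $(p,q)\mapsto p\log(p/q)$ is lower semicontinuous on $[0,1]^2$.

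Let $\sigma^n\to\sigma$, set $x=\mathcal T_\lambda(\sigma)$, and suppose $x^{n_k}=\mathcal T_\lambda(\sigma^{n_k})\to\bar x$ along some subsequence. The maximizer $x$ charges only actions with $\sigma_i(a)>0$, and there $\sigma^n_i(a)\to\sigma_i(a)>0$. Hence $F(\sigma^{n},x)\to F(\sigma,x)$, and
\[
F(\sigma,\bar x)\ \ge\ \limsup_k F(\sigma^{n_k},x^{n_k})\ \ge\ \lim_k F(\sigma^{n_k},x)\ =\ F(\sigma,x).
\]
So $\bar x$ is a maximizer for $\sigma$, and $\bar x=x$ by uniqueness. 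Compactness then gives $\mathcal T_\lambda(\sigma^n)\to\mathcal T_\lambda(\sigma)$.

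Your fixed-point fallback also works. Passing to the limit in $x_i(a)\propto\sigma_i(a)\exp(\partial_{i,a}\Phi(x)/\lambda)$ is legitimate because the normalizers are bounded below uniformly. By concavity, this stationarity equation at a point of the relative interior of the face of $\sigma$ characterizes the maximizer. The support equality you invoke there comes from the infinite slope of $t\log t$ at $0$, not from strict concavity.
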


\begin{proof}
For fixed $\sigma$, the map $\sigma'\mapsto \Phi(\sigma')-\lambda D_{\!KL}(\sigma'\|\sigma)$ is the sum of a concave term and a \emph{strictly} concave term, hence strictly concave on the simplex; therefore the maximizer is unique. The feasible set is compact and the objective is continuous in $(\sigma',\sigma)$, so Berge's Maximum Theorem yields upper hemicontinuity of the argmax correspondence; since it is single‑valued, it is continuous. Thus $\mathcal T_\lambda$ is a continuous map $\Delta(\mathcal A)\to\Delta(\mathcal A)$.
\end{proof}

\subsection{Existence of strict EDRE}

\begin{lemma}[Interior proximal fixed point $\Rightarrow$ unilateral optimality]
\label{lem:kakutani-to-NE}
Let $\Gamma$ be an exact‑potential game with concave multilinear extension $\Phi$.
If $\sigma^\star\in\mathcal{T}_\lambda(\sigma^\star)$ for some $\lambda>0$ and $\sigma^\star$ lies
in the relative interior of $\Delta(\mathcal A)$, then $\sigma^\star$ is a constrained stationary
point of $\Phi$ on $\Delta(\mathcal A)$; in particular, for every player $i$, all actions in
$\mathrm{supp}(\sigma_i^\star)$ have equal payoff and no unilateral deviation yields a higher
payoff, so EDRE(i) holds with $\varepsilon=0$. For a \emph{boundary} (e.g.\ pure) fixed point of
$\mathrm{MD}_\eta$, optimality is checked directly: a pure profile is a fixed point of the
multiplicative‑weights map iff the played action is a best response, which is exactly the Nash
condition.
\end{lemma}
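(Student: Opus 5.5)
The plan is to treat the interior and boundary cases separately, using first-order optimality conditions in both.

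\emph{Interior case.} Suppose $\sigma^\star\in\mathcal T_\lambda(\sigma^\star)$. Because Lemma~\ref{lem:uhc} makes the objective $F(\sigma')=\Phi(\sigma')-\lambda D_{\!KL}(\sigma'\|\sigma^\star)$ strictly concave on the product of simplices, $\sigma^\star$ is its unique maximizer. Since $\sigma^\star$ lies in the relative interior, the KKT conditions hold with equality-constraint multipliers only. On the tangent space $\{\sum_a v_i(a)=0\ \forall i\}$ we have $\nabla_{\sigma'}F(\sigma')=\nabla\Phi(\sigma')-\lambda(\log(\sigma'/\sigma^\star)+\mathbf 1)$. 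At $\sigma'=\sigma^\star$ the logarithmic term vanishes and the constant $\mathbf 1$ is absorbed into the multiplier $\mu_i$. This leaves $\nabla_{\sigma_i}\Phi(\sigma^\star)=\mu_i\mathbf 1$ for each player $i$, which is exactly constrained stationarity of $\Phi$ on $\Delta(\mathcal A)$.

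Next I will use the exact-potential identity. Differentiating the multilinear extension gives $(\nabla_{\sigma_i}\Phi(\sigma))_{a_i}=\Phi(a_i,\sigma_{-i})=u_i(a_i,\sigma_{-i})+c_i(\sigma_{-i})$, where $c_i$ does not depend on $a_i$. Stationarity therefore says $u_i(a_i,\sigma^\star_{-i})$ is constant in $a_i$ over all of $\mathcal A_i$. In the interior every action is in the support, so all actions earn equal payoff and no deviation gains. This gives EDRE(i) with $\varepsilon=0$. I could also invoke concavity to say $\sigma^\star$ globally maximizes $\Phi$, but this is not needed, because unilateral optimality follows directly from the equalized payoffs.

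\emph{Boundary/pure case.} For a pure profile $a^\star$ with $\sigma_i=e_{a_i^\star}$, the multiplicative-weights update keeps zero coordinates at zero. The normalization then returns $e_{a_i^\star}$, so every pure profile is fixed by the bare formula. I will therefore read the claimed equivalence through the interior-limit characterization used in the paper: $a^\star$ is a fixed point that is the limit of (or stable for) interior EMD dynamics. Linearizing at $e_{a^\star_i}$, the ratio $\sigma_i(b)/\sigma_i(a^\star_i)$ is multiplied by $\exp\bigl(\eta(u_i(b,a^\star_{-i})-u_i(a^\star))\bigr)$ each round. This ratio stays non-expanding exactly when $u_i(b,a^\star_{-i})\le u_i(a^\star)$ for all $b$, which is the Nash condition.

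\emph{Main obstacle.} The hard step is this boundary statement. Taken literally, the ``iff'' fails, since every vertex is fixed by the multiplicative-weights map. I will first try to state precisely the intended notion, either KKT fixed points of the constrained prox problem or fixed points reachable from interior starts. With that notion fixed, I will verify the equivalence using the ratio dynamics above. Under the KKT-with-nonnegative-multipliers version, the Nash characterization follows from complementary slackness.
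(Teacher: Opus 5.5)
\textbf{Verdict: your proposal is correct.} The interior half follows the paper's route; on the boundary clause you correctly depart from the paper.

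\textbf{Interior case.} Your argument is the paper's: KKT at a relative-interior maximizer of $\Phi(\sigma')-\lambda D_{\!KL}(\sigma'\Vert\sigma^\star)$, with $\nabla_x D_{\!KL}(x\Vert y)|_{x=y}=\mathbf 1$ absorbed into the simplex multiplier. Your explicit constant in $\Phi(a_i,\sigma_{-i})=u_i(a_i,\sigma_{-i})+c_i(\sigma_{-i})$ makes precise a step the paper leaves implicit. Appendix~\ref{app:exist-md} asserts that $\nabla_{\sigma_i}\Phi$ \emph{equals} the payoff vector, which holds only up to that constant; it fails, for instance, for the Rosenthal potential.

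\textbf{Boundary case.} The paper's proof calls this case ``immediate from the form of the multiplicative-weights map.'' But that form fixes every vertex regardless of payoffs, as Theorem~\ref{thm:global-md} itself remarks. So the stated ``iff'' is false: $(A,B)$ in the coordination game of \S\ref{sec:worked} is an MW fixed point that is not Nash.

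Two of your candidate repairs need adjusting.
\begin{enumerate}
\item The prox reading does not help. Since $D_{\!KL}(\sigma'\Vert e_{a^\star})=+\infty$ unless $\sigma'=e_{a^\star}$, every vertex is also a fixed point of $\mathcal T_\lambda$. The KKT reading that works is constrained stationarity of $\Phi$ itself, with sign multipliers. Complementary slackness then gives $\partial_{\sigma_i(a)}\Phi\le\alpha_i$ with equality on the support, which is the mixed-Nash condition written in the paper's proof.
\item ``Limit of an interior trajectory'' yields only one direction. Such a limit is Nash: a strictly profitable $b$ would make $\sigma_i^t(b)/\sigma_i^t(a_i^\star)$ grow geometrically, contradicting convergence. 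The converse fails at weak equilibria. In the congestion game of \S\ref{sec:sep-iter2}, links $1$ and $2$ always cost $2$, so $\sigma_i^t(1)/\sigma_i^t(2)$ is EMD-invariant. Hence the Nash profile $(1,1)$ is the limit of no interior run.
\end{enumerate}

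\textbf{The statement to use.} Your linearized criterion gives an exact equivalence. Work in coordinates $\sigma_i(b)$ with $b\neq a_i^\star$. Every partial derivative of $\sigma_i^+(b)$ except $\partial/\partial\sigma_i(b)$ carries the factor $\sigma_i(b)=0$. So $D\mathrm{MD}_\eta(a^\star)$ is diagonal with entries $\exp\bigl(\eta(u_i(b,a^\star_{-i})-u_i(a^\star))\bigr)$. Its spectral radius is $\le 1$ iff $a^\star$ is Nash, and $<1$ iff $a^\star$ is strict Nash.
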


\begin{proof}
On the relative interior, the KKT conditions for the maximizer of
$\Phi(\sigma')-\lambda D_{\!KL}(\sigma'\|\sigma^\star)$ read
$\nabla_{\sigma_i}\Phi(\sigma^\star)-\lambda\,\nabla_{\sigma_i}D_{\!KL}(\sigma'\|\sigma^\star)\big|_{\sigma'=\sigma^\star}=\alpha_i\mathbf 1$
with the simplex equality multiplier $\alpha_i$. Since
$\nabla_x D_{\!KL}(x\|y)\big|_{x=y}=\mathbf 1$, the $-\lambda\mathbf 1$ term merges into
$\alpha_i\mathbf 1$, leaving $\nabla_{\sigma_i}\Phi(\sigma^\star)=\alpha_i'\mathbf 1$ on
$\mathrm{supp}(\sigma_i^\star)$ with $\le$ off support---the mixed‑Nash condition. The boundary
case is immediate from the form of the multiplicative‑weights map.
\end{proof}

\begin{theorem}[Existence]\label{thm:exist-edre}
Every finite exact‑potential game admits at least one profile satisfying Definition~\ref{def:edre} with $\varepsilon=0$
(a Nash equilibrium). Under the step‑size regimes in Definition~\ref{def:dynamic-edre}, item (i)
holds, and item (ii) holds relative to the fixed‑point set as stated therein. When that fixed‑point
set is a singleton, EMD converges pointwise for $0<\eta<1/L$.
\end{theorem}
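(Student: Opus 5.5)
The theorem bundles three claims, and I will prove them separately by combining earlier results.

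\emph{Static existence.} Every finite game has a mixed Nash equilibrium by Nash's theorem, with or without a potential. By Proposition~\ref{prop:strict-edre-subset-ne}, that equilibrium satisfies Definition~\ref{def:edre} with $\varepsilon=0$. For a potential game I would also give a self-contained route. The multilinear extension $\Phi$ is continuous on the compact set $\Delta(\mathcal A)$, so it attains a maximum at some $\sigma^\star$. At a global maximizer, no unilateral deviation can increase $\Phi$. The exact-potential identity, extended multilinearly, gives $u_i(\sigma_i',\sigma^\star_{-i})-u_i(\sigma^\star)=\Phi(\sigma_i',\sigma^\star_{-i})-\Phi(\sigma^\star)\le 0$, which is the Nash condition. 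This argument needs no concavity.

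Under concavity there is a constructive proximal alternative. Lemma~\ref{lem:uhc} makes $\mathcal T_\lambda$ a continuous self-map of the simplex, so Brouwer's theorem gives a fixed point. If the fixed point is interior, Lemma~\ref{lem:kakutani-to-NE} shows it is a Nash equilibrium. If it lies on the boundary, I would instead observe that $\mathcal T_\lambda(\sigma)=\sigma$ implies $\Phi(\sigma')\le\Phi(\sigma)+\lambda D_{\!KL}(\sigma'\|\sigma)$ for all $\sigma'$. Taking $\sigma'=\sigma+s(d-\sigma)$ along directions $d$ supported on $\mathrm{supp}(\sigma)$ and letting $s\to0$ yields first-order optimality relative to the face. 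Boundary fixed points can miss better actions outside the support, which is why I treat the proximal route only as a complement and rely on the maximizer argument for the main claim.

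\emph{Item (i).} Lemma~\ref{lem:regret} holds for arbitrary finite games with utilities in $[0,1]$ under diminishing steps from interior starts or under fixed exploration. It in particular covers potential games, so item (i) follows by direct citation. I would note that the bound does not depend on which equilibrium is the limit.

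\emph{Item (ii) and pointwise convergence.} By Remark~\ref{rem:L-safe}, $\Psi$ is $L$-smooth with $L\le1$, so Lemma~\ref{lem:sep-conv-iter2} applies. It gives monotonicity of $\Psi$, asymptotic regularity, and $\omega$-limit points in $\Fix(\mathrm{MD}_\eta)$, hence $\mathrm{dist}(\sigma^t,\Fix)\to0$ for $0<\eta<1/L$. It also gives pointwise convergence when $\Fix$ is a singleton.

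\emph{Main obstacle: the requirement $\sigma^\star\in\Fix(\mathrm{MD}_\eta)$.} A Nash equilibrium is in fact a fixed point of the multiplicative-weights map. On its support all payoffs equal a common value $v_i$, so the update multiplies every supported coordinate by $e^{\eta v_i}$ and renormalization leaves $\sigma_i^\star$ unchanged. Zero coordinates stay zero. So the Nash equilibrium from the first part is admissible as the $\sigma^\star$ of (ii).

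The case $\eta=1/L$ exactly is the delicate one, because summability of the KL steps fails there. I would either restrict set-selection to $\eta<1/L$, as Lemma~\ref{lem:sep-conv-iter2} does, or argue separately via monotone convergence of $\Psi$. This endpoint is where the statement most needs care.
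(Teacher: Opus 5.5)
Your argument is correct to the same standard as the paper's, and its skeleton is the paper's. It uses Nash's theorem plus Proposition~\ref{prop:strict-edre-subset-ne} for the static part and Lemma~\ref{lem:regret} for item~(i). For item~(ii) it uses the Lyapunov/$\omega$-limit argument of Lemma~\ref{lem:sep-conv-iter2}, which Theorem~\ref{thm:global-md} (the result the paper cites) repackages. The differences are in the side routes, and they favor you. Your maximizer-of-$\Phi$ argument is a fixed-point-free existence proof that uses only the multilinearly extended potential identity and needs no concavity; the paper's alternative instead applies Brouwer to the proximal map $\mathcal T_\lambda$. Your doubt about that proximal route is well founded and in fact fatal to it as written. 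Because $D_{\!KL}(\sigma'\Vert\delta_a)=\infty$ unless $\sigma'=\delta_a$, every pure profile is a fixed point of $\mathcal T_\lambda$, so Brouwer may return a non-Nash vertex, and Lemma~\ref{lem:kakutani-to-NE} certifies only interior fixed points. Your explicit check that a Nash equilibrium lies in $\Fix(\mathrm{MD}_\eta)$ supplies the requirement $\sigma^\star\in\Fix(\mathrm{MD}_\eta)$ of item~(ii), which the paper's proof leaves implicit. Finally, the $\eta=1/L$ endpoint is equally a gap in the paper, since Theorem~\ref{thm:global-md} gives $\mathrm{dist}(\sigma^t,\Fix(\mathrm{MD}_\eta))\to0$ only for $\eta<1/L$. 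Monotone convergence of $\Psi$ alone will not close it, but keeping the Bregman term the paper discards will. Optimality of the MD step gives $\langle\nabla\Psi(\sigma^t),\sigma^{t+1}-\sigma^t\rangle\le-\tfrac1\eta\bigl[D_{\!KL}(\sigma^{t+1}\Vert\sigma^t)+D_{\!KL}(\sigma^t\Vert\sigma^{t+1})\bigr]$. After Pinsker on both divergences (with the paper's constants), the one-step decrease of $\Psi$ at $\eta=1/L$ is at least $\tfrac L2\|\sigma^{t+1}-\sigma^t\|_1^2$. That restores asymptotic regularity, and the rest of your argument then goes through unchanged.
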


\begin{proof}
Existence of a strict static EDRE is immediate: every finite game has a mixed Nash equilibrium
by Nash's theorem, and by Proposition~\ref{prop:strict-edre-subset-ne} this is exactly a strict
EDRE. (When $\Phi$ is concave, an alternative constructive route applies: $\mathcal T_\lambda$ of
Lemma~\ref{lem:uhc} is a continuous self‑map of $\Delta(\mathcal A)$, so Brouwer's theorem yields a
fixed point, which by Lemma~\ref{lem:kakutani-to-NE} is a Nash equilibrium.) The dynamic items
follow from Lemma~\ref{lem:regret} (item i) and Theorem~\ref{thm:global-md} (item ii).
\end{proof}

\subsection{Mirror descent convergence}

\begin{theorem}[Global convergence of EMD]\label{thm:global-md}
With constant step $0<\eta\le 1/L$ and an interior initialization, EMD on an $L$‑smooth
cost potential $\Psi$ satisfies
\[
\Psi(\sigma^{t+1})-\Psi(\sigma^{t})\ \le\ \bigl(-\tfrac{1}{\eta}+L\bigr)D_{\!KL}(\sigma^{t+1}\|\sigma^{t}),
\]
so $\{\Psi(\sigma^{t})\}$ is non‑increasing (strictly decreasing when $\eta<1/L$). For $0<\eta<1/L$,
every $\omega$–limit point of $\{\sigma^t\}$ is a fixed point of the MD map, and
$\mathrm{dist}(\sigma^t,\Fix(\mathrm{MD}_\eta))\to0$. If the fixed‑point set
is a singleton, then $\sigma^t$ converges to it. (Pure strategies are fixed points of the
multiplicative‑weights map; from interior starts they are not attained in finite time but can
arise as limits.)
\end{theorem}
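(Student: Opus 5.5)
The plan is to prove the descent inequality first and then obtain the asymptotic claims exactly as in the argument for Lemma~\ref{lem:sep-conv-iter2}.

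\textbf{Step 1 (descent inequality).} Write one EMD step in proximal form,
\[
\sigma^{t+1}=\argmin_{\sigma}\Bigl\{\eta\langle\nabla\Psi(\sigma^t),\sigma\rangle+D_{\!KL}(\sigma\|\sigma^t)\Bigr\}.
\]
This is exactly the multiplicative-weights map, since $\nabla_{\sigma_i}\Psi=-u_i(\cdot,\sigma_{-i})$; the minimization separates across players because $D_{\!KL}$ is a sum over players.

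Compare the minimizer $\sigma^{t+1}$ with the feasible competitor $\sigma^t$, or equivalently use the first-order optimality condition together with the Bregman three-point identity. This gives
\[
\eta\langle\nabla\Psi(\sigma^t),\sigma^{t+1}-\sigma^t\rangle\le -D_{\!KL}(\sigma^{t+1}\|\sigma^t)-D_{\!KL}(\sigma^t\|\sigma^{t+1})\le -D_{\!KL}(\sigma^{t+1}\|\sigma^t).
\]
Next apply $L$-smoothness in $\|\cdot\|_1$, which holds with $L\le1$ by Remark~\ref{rem:L-safe}:
\[
\Psi(\sigma^{t+1})\le\Psi(\sigma^t)+\langle\nabla\Psi(\sigma^t),\sigma^{t+1}-\sigma^t\rangle+\tfrac L2\|\sigma^{t+1}-\sigma^t\|_1^2 .
\]
By Pinsker, $\tfrac12\|x-y\|_1^2\le D_{\!KL}(x\|y)$. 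Combining the two displays yields the stated inequality. When $\eta\le1/L$ the coefficient $-1/\eta+L$ is nonpositive, so $\Psi(\sigma^t)$ is non-increasing.

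For strict decrease when $\eta<1/L$, observe that $D_{\!KL}(\sigma^{t+1}\|\sigma^t)>0$ unless $\sigma^t$ is already a fixed point. The claim should therefore be read as strict decrease at non-fixed iterates.

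\textbf{Step 2 (summability and asymptotic regularity).} The multilinear $\Psi$ is bounded on the compact simplex. Telescoping the inequality from Step 1 gives
\[
\bigl(\tfrac1\eta-L\bigr)\sum_t D_{\!KL}(\sigma^{t+1}\|\sigma^t)\le\Psi(\sigma^1)-\inf\Psi<\infty .
\]
Hence $D_{\!KL}(\sigma^{t+1}\|\sigma^t)\to0$, and by Pinsker $\|\sigma^{t+1}-\sigma^t\|_1\to0$.

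\textbf{Step 3 (limit points, distance, singleton case).} Take a convergent subsequence $\sigma^{t_k}\to\sigma^\ast$. The map $\mathrm{MD}_\eta$ is continuous, so
\[
\mathrm{MD}_\eta(\sigma^\ast)=\lim_k\sigma^{t_k+1}=\sigma^\ast,
\]
where the last equality uses Step 2. Thus every $\omega$-limit point lies in $\Fix(\mathrm{MD}_\eta)$.

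Suppose $\mathrm{dist}(\sigma^t,\Fix(\mathrm{MD}_\eta))\not\to0$. Then some subsequence stays at distance at least $\epsilon>0$ from $\Fix(\mathrm{MD}_\eta)$. By compactness it has a further convergent subsequence, whose limit lies in $\Fix(\mathrm{MD}_\eta)$. This is a contradiction, so the distance tends to $0$.

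If $\Fix(\mathrm{MD}_\eta)=\{\sigma^\star\}$, every subsequence has a sub-subsequence converging to $\sigma^\star$, so the whole sequence converges to $\sigma^\star$.

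For the parenthetical remark:
\begin{itemize}
\item A pure profile is a fixed point because all weight sits on one coordinate.
\item Interior starts stay interior for all finite $t$, since every multiplicative factor is positive. So pure points can only appear as limits.
\end{itemize}

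\textbf{Main obstacle.} The delicate point is Step 1, specifically the use of smoothness in $\ell_1$ near the boundary and the correct sign and constant in the three-point identity. I would verify this carefully using the optimality condition $\eta\nabla\Psi(\sigma^t)+\nabla h(\sigma^{t+1})-\nabla h(\sigma^t)\in -N_\Delta(\sigma^{t+1})$.

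A second subtlety is the boundary case $\eta=1/L$. Monotonicity still holds there, but the summability in Step 2 fails. The limit-point conclusions are therefore claimed only for $\eta<1/L$, matching the statement.
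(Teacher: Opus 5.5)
Your proposal follows the paper's proof essentially step for step: the optimality/three-point inequality plus $L$-smoothness and Pinsker give the descent bound, telescoping gives asymptotic regularity, continuity of $\mathrm{MD}_\eta$ makes limit points fixed, and compactness gives $\mathrm{dist}(\sigma^t,\Fix(\mathrm{MD}_\eta))\to0$ and the singleton case (your direct subsequence argument is exactly what the paper's connectedness/LaSalle remark actually relies on). One caveat shared with the paper: Pinsker gives $\tfrac12\|x_i-y_i\|_1^2\le D_{\!KL}(x_i\|y_i)$ only player by player, so the step $\tfrac L2\|\sigma^{t+1}-\sigma^t\|_1^2\le L\,D_{\!KL}(\sigma^{t+1}\|\sigma^t)$ requires measuring smoothness in the norm $\bigl(\sum_i\|x_i\|_1^2\bigr)^{1/2}$, whereas with the summed $\ell_1$ norm of Remark~\ref{rem:L-safe} the same argument only yields the constant $|N|L$.
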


\begin{proof}[Proof sketch]
Potential monotonicity follows from the Bregman three‑point identity and $L$‑smoothness.
Asymptotic regularity ($\sum_t D_{\!KL}<\infty$, hence $\|\sigma^{t+1}-\sigma^t\|_1\to0$) holds
when $\eta<1/L$; combined with continuity of $\mathrm{MD}_\eta$ this yields that all $\omega$–limit
points are fixed points, and the connectedness/LaSalle argument of
Lemma~\ref{lem:sep-conv-iter2} gives $\mathrm{dist}(\sigma^t,\Fix)\to0$. See
Appendix~\ref{app:exist-md} for details.
\end{proof}

\begin{corollary}[Singleton fixed‑point set]\label{cor:singleton-fp}
If $\Fix(\mathrm{MD}_\eta)$ is a singleton, then the EDRE selected by
Definition~\ref{def:dynamic-edre}(ii) is unique, and for $0<\eta<1/L$ the iterates
converge to it from any interior initialization.
\end{corollary}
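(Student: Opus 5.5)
The corollary follows almost directly from Theorem~\ref{thm:global-md} together with the definition of selection in Definition~\ref{def:dynamic-edre}(ii). Assume $\Fix(\mathrm{MD}_\eta)=\{\sigma^\star\}$ and argue in two parts.

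\emph{Uniqueness.} Definition~\ref{def:dynamic-edre}(ii) requires any selected EDRE to satisfy $\sigma^\star\in\Fix(\mathrm{MD}_\eta)$. Because this set is a singleton, at most one candidate exists. To confirm that this candidate really is a static EDRE with $\varepsilon=0$, use Lemma~\ref{lem:kakutani-to-NE} and the accompanying observation about the multiplicative‑weights map. An interior fixed point of EMD equalizes payoffs across each player's actions, so it is Nash. A boundary fixed point needs more care: every action in the support must attain equal payoff, which is exactly what the fixed‑point equation gives. Off‑support optimality does not follow from the fixed‑point equation alone. I would not rely on that direction. Instead I would obtain it from existence: Theorem~\ref{thm:exist-edre} supplies a Nash equilibrium, and every Nash equilibrium is an EMD fixed point (support actions are equal‑payoff, so the weights are unchanged). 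Since the fixed‑point set is a singleton, $\sigma^\star$ must coincide with that Nash equilibrium. This makes it a strict EDRE, and by Proposition~\ref{prop:strict-edre-subset-ne} it is the unique selected one.

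\emph{Convergence.} Fix $0<\eta<1/L$ and an interior start. Theorem~\ref{thm:global-md} (equivalently, Lemma~\ref{lem:sep-conv-iter2}) gives $\sum_t D_{\!KL}(\sigma^{t+1}\|\sigma^t)<\infty$. By Pinsker's inequality this yields asymptotic regularity, and with continuity of $\mathrm{MD}_\eta$ every $\omega$‑limit point is a fixed point. The orbit lies in the compact set $\Delta(\mathcal A)$, so its $\omega$‑limit set $\Omega$ is nonempty and contained in $\{\sigma^\star\}$. Hence every convergent subsequence has limit $\sigma^\star$. A bounded sequence in finite dimensions whose convergent subsequences all share one limit converges to that limit, so $\sigma^t\to\sigma^\star$.

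The main subtlety is the uniqueness step, specifically ensuring that the single fixed point is Nash when it lies on the boundary. Routing this through "Nash $\Rightarrow$ fixed point" together with existence avoids the missing off‑support check. The convergence step is routine.
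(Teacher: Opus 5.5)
Your proof is correct, and its convergence half is the paper's own argument: the paper invokes Theorem~\ref{thm:global-md} to get that every $\omega$-limit point is a fixed point for $0<\eta<1/L$, and concludes that a singleton fixed-point set forces the limit. Where you differ is the uniqueness half, which the paper does not argue separately; it is left implicit in the requirement $\sigma^\star\in\Fix(\mathrm{MD}_\eta)$ of Definition~\ref{def:dynamic-edre}(ii). You additionally check that the lone fixed point really is a strict EDRE, and you do it via $\emptyset\neq\mathrm{NE}\subseteq\Fix(\mathrm{MD}_\eta)$ rather than via the boundary clause of Lemma~\ref{lem:kakutani-to-NE}. That choice is well founded. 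The ``iff'' in that clause is false: under multiplicative weights a zero coordinate stays zero, so every vertex of $\Delta(\mathcal A)$ is a fixed point whether or not its action is a best response. The paper itself concedes this in the parenthetical of Theorem~\ref{thm:global-md}. Pushing your own observation one step further shows how little the corollary actually says. All $\prod_i|\mathcal A_i|$ pure profiles lie in $\Fix(\mathrm{MD}_\eta)$, so the hypothesis can hold only when every player has a single action. In that case $\Delta(\mathcal A)$ is a point and both conclusions are immediate. Your argument therefore makes explicit the Nash property of the selected point, which the paper's two-line proof leaves unaddressed. But in any nontrivial game the statement is vacuous, and no boundary analysis is needed.
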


\begin{proof}
By Theorem~\ref{thm:global-md}, every $\omega$–limit point is a fixed point for $0<\eta<1/L$.
If the fixed‑point set is a singleton, the limit exists and equals that point.
\end{proof}

\subsection{Robust selection: what EDRE adds beyond Nash}
\label{subsec:robust-selection}

Every Nash equilibrium is a static EDRE (Definition~\ref{def:edre}) and, being an exact MD fixed
point, satisfies Definition~\ref{def:dynamic-edre}(ii) when held fixed; thus EDRE imposes no
\emph{static} restriction on the equilibrium set. The discriminating content is \emph{dynamic}:
which equilibria are actually reached, and from how large a set of initial conditions. We make
this precise and show it excludes unstable equilibria.

\begin{definition}[Robustly selected EDRE]\label{def:robust-edre}
A static EDRE $\sigma^\star\in\Fix(\mathrm{MD}_\eta)$ is \emph{robustly selected} under constant‑step
EMD ($0<\eta<1/L$) if its basin
$\mathcal B(\sigma^\star)=\{\sigma^1\in\operatorname{int}\Delta(\mathcal A):\ \sigma^t\to\sigma^\star\}$
has positive Lebesgue measure.
\end{definition}

\begin{proposition}[Linearly unstable equilibria are EDRE but not robustly selected]
\label{prop:unstable-excluded}
Let $\sigma^\star$ be an interior fixed point of $\mathrm{MD}_\eta$ that is hyperbolic and
\emph{linearly unstable}---the Jacobian $D\mathrm{MD}_\eta(\sigma^\star)$ on the tangent space of
$\Delta(\mathcal A)$ has spectral radius $>1$. Then its basin $\mathcal B(\sigma^\star)$ has
Lebesgue measure zero; hence $\sigma^\star$ is a static EDRE but is \emph{not} robustly selected.
Consequently, from Lebesgue‑almost‑every interior initialization the EMD limit (when it exists) is
a linearly stable fixed point; in an exact‑potential game these are the local maximizers of the
payoff potential $\Phi$.
\end{proposition}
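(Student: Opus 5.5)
The argument is the center–stable manifold theorem combined with the observation that $\mathrm{MD}_\eta$ is a local diffeomorphism of the open simplex. I will do three things: establish the regularity needed for the stable-manifold argument, bound the basin by a countable union of null sets, and then translate linear stability into local maximality of $\Phi$ in the potential case.

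\textbf{Step 1: smoothness and local invertibility.} Write $\mathrm{MD}_\eta$ on the relative interior in the affine tangent coordinates of $\prod_i\Delta(\mathcal A_i)$. The componentwise formula $\sigma_i(a)e^{\eta g_i(a;\sigma_{-i})}/Z_i$ is real-analytic there, because $g_i$ is multilinear. It maps the interior to the interior. I expect invertibility to be the main obstacle, since pulling null sets back through the iteration requires $\det D\mathrm{MD}_\eta\neq 0$ on the interior except on a null set. I would first try the potential case with $\eta<1/L$. In logit coordinates $\theta_i=\log\sigma_i$ (mod constants) the map reads $\theta\mapsto\theta+\eta\,g(\sigma(\theta))$. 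Its derivative is $I+\eta\,Dg\cdot D\sigma$, and $\|\eta\,Dg\|\le\eta L<1$ in the relevant norms, so a Neumann-series argument should give invertibility everywhere. For general games I would use analyticity instead. Then $\det D\mathrm{MD}_\eta$ is analytic on a connected domain, so it is either identically zero or vanishes only on a null set. It is not identically zero, because it equals $1$ in the limit $\eta\to0$ and is analytic in $\eta$ as well. That suffices for the Lusin-N$^{-1}$ property, namely that preimages of null sets are null.

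\textbf{Step 2: measure-zero basin.} By hyperbolicity and spectral radius $>1$, $D\mathrm{MD}_\eta(\sigma^\star)$ has at least one eigenvalue of modulus $>1$. The center–stable manifold theorem (Shub) gives a neighborhood $U$ of $\sigma^\star$ and an embedded local center–stable manifold $W^{cs}_{loc}$ of dimension strictly less than $\dim\Delta(\mathcal A)$, hence of Lebesgue measure zero. Any orbit that stays in $U$ for all times from some index on lies in $W^{cs}_{loc}$. If $\sigma^t\to\sigma^\star$, then there is $k$ with $\mathrm{MD}_\eta^{k}(\sigma^1)\in W^{cs}_{loc}$. Therefore
\[
\mathcal B(\sigma^\star)\subseteq\bigcup_{k\ge0}\mathrm{MD}_\eta^{-k}\bigl(W^{cs}_{loc}\bigr),
\]
which is a countable union of null sets by Step 1, and so is null. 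The claim that $\sigma^\star$ is a static EDRE follows from Lemma~\ref{lem:kakutani-to-NE}, or directly: at an interior fixed point all payoffs $g_i(a)$ are equal, which is the Nash condition, so Proposition~\ref{prop:strict-edre-subset-ne} applies.

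\textbf{Step 3: the consequence.} There are only finitely many hyperbolic unstable fixed points generically, and in any case any countable collection of them is handled by Step 2. So for almost every initialization, a limit that exists is not linearly unstable. In the potential case, Theorem~\ref{thm:global-md} ensures that such limits exist as fixed points, at least when $\Fix$ is discrete. To identify stability with local maximality, compute the Jacobian at an interior fixed point. In logit coordinates it is $I+\eta\,\mathrm{diag}(\sigma^\star)\text{-weighted}\,\nabla^2\Phi$ restricted to the tangent space. This is similar to $I+\eta\,H$ with $H$ symmetric, obtained by conjugating with $\mathrm{diag}(\sigma^\star)^{1/2}$. Because $\eta\|H\|<1$ by $\eta<1/L$, the eigenvalues stay above $0$. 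Hence the spectral radius is $\le1$ iff $H\preceq0$ on the tangent space, i.e.\ iff $\sigma^\star$ is a (second-order) local maximizer of $\Phi$. Boundary fixed points would be treated separately via the multiplicative factors $e^{\eta(g_i(a)-\bar g_i)}$ in the off-support directions. I would flag that only the interior hyperbolic case is fully rigorous under this plan.
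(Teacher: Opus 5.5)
Your overall route is the paper's. You take a local stable manifold of dimension $<\dim\Delta(\mathcal A)$ at the hyperbolic point $\sigma^\star$, and bound the basin by $\bigcup_{k\ge0}\mathrm{MD}_\eta^{-k}(W^{s}_{\mathrm{loc}})$. Your Step~1 goes beyond the paper. The paper only asserts nonsingularity of the Jacobian \emph{near} $\sigma^\star$. Pulling a null set back along the whole backward chain needs $D\mathrm{MD}_\eta$ to be nonsingular off a null subset of the \emph{entire} interior; otherwise a positive-measure piece of the critical set could be collapsed onto a null set.

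However, your justification that $\det D\mathrm{MD}_\eta\not\equiv 0$ fails as written. The step size $\eta$ is fixed. Knowing that $(\sigma,\eta)\mapsto\det D\mathrm{MD}_\eta(\sigma)$ is analytic and equals $1$ at $\eta=0$ only gives, via Fubini on the joint zero set, non-vanishing for all $\eta$ outside a null set. That need not include the $\eta$ in the statement. The repair works for every $\eta$. In logit coordinates the Jacobian is $I+\eta\,Dg(\sigma)\,D\sigma(\theta)$, with $D\sigma$ block-diagonal with blocks $\mathrm{diag}(\sigma_i)-\sigma_i\sigma_i^\top$. These blocks tend to $0$ as $\sigma$ approaches a vertex from the interior, while $Dg$ stays bounded. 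So the determinant tends to $1$, and your analyticity argument then goes through. This also makes the Neumann-series detour unnecessary. That detour would in any case have to control $\eta\,Dg\,D\sigma$, not $\eta\,Dg$.

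Two smaller points in Step~3.

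First, countability needs no genericity. A hyperbolic fixed point has no multiplier equal to $1$, so by the inverse function theorem it is isolated in $\Fix(\mathrm{MD}_\eta)$. Hence the hyperbolic fixed points form a countable set, and your union bound applies. The ``consequently'' clause also needs fixed points that are non-hyperbolic with spectral radius $>1$. For those, combine your center--stable manifold with a Lindel\"of covering argument, as in Panageas--Piliouras.

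Second, your linearization yields only the second-order \emph{necessary} condition $H\preceq0$ on the tangent space. So keep the ``(second-order)'' qualifier but drop the ``iff local maximizer''. In fact the situation is sharper. Because $\Phi$ is multilinear, the diagonal blocks of $\nabla^2\Phi$ vanish, so $\operatorname{tr}H=0$, and $H\preceq0$ forces $H=0$. Thus an interior fixed point of a potential game is either linearly unstable or has all multipliers equal to $1$. In the latter case linearization says nothing about maximality. For example, $\Phi=(p_1-\tfrac12)(p_2-\tfrac12)(p_3-\tfrac12)$ has the center as such a fixed point, and it is not a local maximum. The paper's one-line Lyapunov appeal does not settle this either.

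The real stability/maximizer correspondence lives at pure fixed points. There the off-support multipliers are $e^{\eta(g_i(b)-g_i(a_i))}$, and they are $<1$ exactly at strict Nash equilibria, i.e.\ strict local maximizers of $\Phi$.
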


\begin{proof}
The multiplicative‑weights map $\mathrm{MD}_\eta$ is a $C^1$ local diffeomorphism near
$\sigma^\star$ (its Jacobian on the tangent space is nonsingular). By the Stable Manifold Theorem,
the local stable set of the hyperbolic fixed point $\sigma^\star$ is an embedded $C^1$ manifold
whose dimension equals the number of multipliers strictly inside the unit disk; since at least one
multiplier lies outside, this dimension is $<\dim\Delta(\mathcal A)$, so the local stable set is
Lebesgue‑null. The basin is the countable union
$\mathcal B(\sigma^\star)=\bigcup_{t\ge0}\mathrm{MD}_\eta^{-t}(W^{s}_{\mathrm{loc}})$ of $C^1$
preimages of a null set under the local diffeomorphism $\mathrm{MD}_\eta$, hence is itself
Lebesgue‑null. This is the entropic‑mirror‑descent instance of the strict‑saddle avoidance
theorem of \citet{lee2019first}, whose results explicitly cover mirror descent via the Stable
Manifold Theorem (see also \citealp{panageas2017gradient} for non‑isolated critical points). In an
exact‑potential game, $\Phi$ is a strict Lyapunov function for EMD (Theorem~\ref{thm:global-md}),
so linearly stable interior fixed points are local maximizers of $\Phi$.
\end{proof}

\begin{remark}[What ``selection'' refines]\label{rem:refine}
Statically, EDRE coincides with $\varepsilon$‑Nash, so it refines nothing as a static set. But
robust selection (Definition~\ref{def:robust-edre}) refines the Nash set to its \emph{dynamically
stable, learnable} elements: a linearly unstable equilibrium---such as the interior mixed
equilibrium of the coordination game in \S\ref{sec:worked}, whose two‑strategy EMD multiplier
$1+\eta\,p^\star(1-p^\star)\,g'(p^\star)>1$ makes it a repeller---is a static EDRE yet attracts
only a measure‑zero set of initial conditions, whereas the payoff‑dominant pure equilibrium is
robustly selected. EDRE is therefore best read not as a static refinement of Nash but as a Nash
equilibrium \emph{equipped with a dynamic certificate}: reachable by a concrete learner, robust to
fixed deviations along the path (Definition~\ref{def:dynamic-edre}(i)), and---under robust
selection---separated from unstable equilibria. This also distinguishes EDRE from generic
no‑regret learning, whose time averages certify a \emph{correlated} equilibrium rather than a
product Nash equilibrium.
\end{remark}

\subsection{Stable polymatrix games (non‑potential): MD invariance}
\label{subsec:stable-polymatrix}

We broaden the selection/invariance property (Definition~\ref{def:dynamic-edre}(ii)) beyond exact‑potential games to a standard \emph{stable} class.

\paragraph{Pseudo‑gradient field and inner product.}
We write players' losses as $\ell_i=-u_i$ and define the pseudo‑gradient field
\[
F(\sigma)\;=\;\big(\nabla_{\sigma_i}\,\ell_i(\sigma)\big)_{i\in N}\ \in \ \mathbb{R}^{\sum_i |\mathcal A_i|}.
\]
On the product simplex we use the standard Euclidean inner product
\[
\langle F(\sigma)-F(\sigma'),\, \sigma-\sigma'\rangle \;:=\;
\sum_{i\in N} \left\langle \nabla_{\sigma_i}\ell_i(\sigma)-\nabla_{\sigma_i}\ell_i(\sigma'),\,
\sigma_i-\sigma'_i \right\rangle .
\]
Monotonicity/strong monotonicity below are understood with respect to this inner product
(and the $\ell_2$‑norm on the ambient space).

\begin{definition}[Monotone (stable) polymatrix game]
A polymatrix game with pseudo‑gradient field $F(\sigma)$ is
\emph{monotone} if
$\langle F(\sigma)-F(\sigma'), \sigma-\sigma'\rangle \ge 0$
for all $\sigma,\sigma'$ on the product simplex; it is
\emph{strongly monotone} if the left‑hand side is
$\ge \mu\|\sigma-\sigma'\|_2^2$ for some $\mu>0$.
\end{definition}

\begin{theorem}[MD invariance in stable polymatrix games]\label{thm:stable-polymatrix}
Let $\Gamma$ be a polymatrix game with Lipschitz monotone pseudo‑gradient
$F$ on $\Delta(\mathcal A)$, and let players run entropic MD with steps
$(\eta_t)$ satisfying $\eta_t\downarrow 0$, $\sum_t \eta_t=\infty$ and $\sum_t \eta_t^2<\infty$.
Then the Cesàro averages of play converge to the VI solution set
$\mathcal S=\{\sigma:\langle F(\sigma),x-\sigma\rangle\ge0,\ \forall x\}$.
If $F$ is strongly monotone, $\mathcal S=\{\sigma^\dagger\}$ and
$\sigma^t\to\sigma^\dagger$ (hence also $\bar\sigma(T)\to\sigma^\dagger$).
In particular, any limit point of the averages satisfies Definition~\ref{def:edre} with $\varepsilon=0$.
Moreover, with the diminishing-step schedule from Definition~\ref{def:dynamic-edre}(i), the deviation--regret property holds; the selection/invariance property (Definition~\ref{def:dynamic-edre}(ii)) holds with $\sigma^\star=\sigma^\dagger$ when $F$ is strongly monotone.

\end{theorem}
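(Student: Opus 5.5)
The proof runs through the standard mirror-descent variational-inequality template with a KL Lyapunov function. The main tool is the one-step entropic MD inequality, the same three-point/OMD inequality used for the drift term in Lemma~\ref{lem:regret}. Applied to the stacked losses $\ell_i$ with gradient $F(\sigma^t)$, it gives for any fixed $x\in\Delta(\mathcal A)$
\[
\eta_t\langle F(\sigma^t),\sigma^t-x\rangle\ \le\ D_{\!KL}(x\|\sigma^t)-D_{\!KL}(x\|\sigma^{t+1})+\tfrac{\eta_t^2}{2}\|F(\sigma^t)\|_\infty^2 .
\]
Here $D_{\!KL}(x\|\cdot)$ is summed over players, and $\|F\|_\infty\le 1$ because utilities lie in $[0,1]$.

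\textbf{Cesàro convergence.} Monotonicity gives $\langle F(x),\sigma^t-x\rangle\le\langle F(\sigma^t),\sigma^t-x\rangle$. Summing the inequality with weights $\eta_t$ and dividing by $\sum_{t\le T}\eta_t$ yields
\[
\langle F(x),\bar\sigma(T)-x\rangle\ \le\ \frac{D_{\!KL}(x\|\sigma^1)+\tfrac12\sum_t\eta_t^2}{\sum_{t\le T}\eta_t}\ \longrightarrow\ 0 .
\]
This uses $\sum\eta_t=\infty$, $\sum\eta_t^2<\infty$, and $D_{\!KL}(x\|\sigma^1)<\infty$ for interior starts. I interpret $\bar\sigma(T)$ as the $\eta$-weighted average; the uniform average needs an extra Abel-summation step, which is easy because $\eta_t$ is monotone. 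Every limit point $\bar\sigma$ therefore satisfies the Minty condition $\langle F(x),\bar\sigma-x\rangle\le 0$ for all $x$. Since $F$ is continuous (Lipschitz) and monotone, Minty's lemma converts this into the Stampacchia condition, so $\bar\sigma\in\mathcal S$. Compactness then gives $\mathrm{dist}(\bar\sigma(T),\mathcal S)\to0$.

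\textbf{Nash property.} $F$ is a product field, and for each $i$ the component $\langle\nabla_{\sigma_i}\ell_i(\sigma),\cdot\rangle$ is linear in $\sigma_i$. Testing the VI with $x=(x_i,\sigma_{-i})$ gives $\ell_i(x_i,\sigma_{-i})\ge\ell_i(\sigma)$. This is exactly Definition~\ref{def:edre} with $\varepsilon=0$.

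\textbf{Strongly monotone case.} Uniqueness of the solution is the usual argument: two solutions $\sigma,\sigma'$ would give $\mu\|\sigma-\sigma'\|_2^2\le0$. For last-iterate convergence, I set $x=\sigma^\dagger$ and use
\[
\langle F(\sigma^t),\sigma^t-\sigma^\dagger\rangle\ \ge\ \langle F(\sigma^\dagger),\sigma^t-\sigma^\dagger\rangle+\mu\|\sigma^t-\sigma^\dagger\|_2^2\ \ge\ \mu\|\sigma^t-\sigma^\dagger\|_2^2 .
\]
With $D_t=D_{\!KL}(\sigma^\dagger\|\sigma^t)$, this gives the recursion
\[
D_{t+1}\le D_t-\mu\eta_t\|\sigma^t-\sigma^\dagger\|_2^2+\tfrac{\eta_t^2}{2} .
\]
Robbins--Siegmund then implies that $D_t$ converges and that $\sum_t\eta_t\|\sigma^t-\sigma^\dagger\|^2<\infty$. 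Because $\sum\eta_t=\infty$, we get $\liminf_t\|\sigma^t-\sigma^\dagger\|=0$.

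\textbf{Main obstacle: upgrading the liminf to a limit.} Along a subsequence $\sigma^{t_k}\to\sigma^\dagger$, I need $D_{t_k}\to0$; together with convergence of $D_t$ this forces $D_t\to0$, and Pinsker gives $\sigma^t\to\sigma^\dagger$. The difficulty is that $D_{\!KL}(\sigma^\dagger\|\cdot)$ is continuous at $\sigma^\dagger$ only if the support of $\sigma^t$ covers $\mathrm{supp}(\sigma^\dagger)$, which holds because iterates stay interior. One still has to check that $D_{\!KL}(\sigma^\dagger\|\sigma^{t_k})=\sum_{a\in\mathrm{supp}\,\sigma^\dagger}\sigma^\dagger(a)\log(\sigma^\dagger(a)/\sigma^{t_k}(a))\to0$. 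This holds since the coordinates of $\sigma^{t_k}$ on $\mathrm{supp}\,\sigma^\dagger$ converge to positive values. So $D_{\!KL}$ is continuous at $\sigma^\dagger$ in its second argument along such sequences, which closes the argument. Cesàro convergence of the averages then follows from convergence of the iterates.

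\textbf{Remaining items.} The deviation--regret claim under $\eta_t=\eta_0/\sqrt t$ is Lemma~\ref{lem:regret} verbatim, since that lemma used no game structure. Selection follows because $\sigma^\dagger$, being a Nash equilibrium, is a fixed point of $\mathrm{MD}_\eta$: payoffs are constant on its support. Iterates converge to it, which gives Definition~\ref{def:dynamic-edre}(ii) with $\sigma^\star=\sigma^\dagger$.
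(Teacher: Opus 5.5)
Your argument follows the same template as the paper's appendix: a one-step entropic MD inequality with $D_{\!KL}(x\|\sigma^t)$ as potential, monotonicity to trade $F(\sigma^t)$ for $F(x)$, and Robbins--Siegmund plus a $\liminf$/subsequence argument under strong monotonicity. The parts you actually carry out are correct: the step-size-weighted average bound, the Minty-to-Stampacchia passage, the identification of $\mathcal S$ with Nash equilibria, and the last-iterate argument. Your handling of a boundary $\sigma^\dagger$ via continuity of $D_{\!KL}(\sigma^\dagger\|\cdot)$ at $\sigma^\dagger$ is cleaner than the paper's appeal to dual coordinates.

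\textbf{The gap.} The step that fails is your remark that the uniform Cesàro average $\bar\sigma(T)=\frac1T\sum_{t\le T}\sigma^t$ follows by an easy Abel summation. Dividing your one-step inequality by $\eta_t$ and summing gives, with $D_t(x)=D_{\!KL}(x\|\sigma^t)$,
\[
\sum_{t\le T}\langle F(\sigma^t),\sigma^t-x\rangle\ \le\ \frac{D_1(x)}{\eta_1}+\sum_{t=2}^{T}D_t(x)\Bigl(\frac{1}{\eta_t}-\frac{1}{\eta_{t-1}}\Bigr)+\frac{n}{2}\sum_{t\le T}\eta_t .
\]
Monotonicity of $\eta_t$ only makes the middle weights nonnegative. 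For the right side to be $o(T)$ you also need two further facts:
\begin{itemize}
\item $\sup_t D_t(x)<\infty$ for every test point $x$. The one-step inequality gives this for $x\in\mathcal S$, and for arbitrary $x$ e.g.\ when $\mathcal S$ contains a fully mixed point.
\item $T\eta_T\to\infty$, which $\eta_t\downarrow0$, $\sum_t\eta_t=\infty$, $\sum_t\eta_t^2<\infty$ do not imply.
\end{itemize}

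At this generality the uniform claim is actually false. Take matching pennies (a monotone two-node polymatrix game) with $\eta_t=1/((t+1)\log(t+1))$. The log-odds evolve by forward Euler on a Hamiltonian flow. Its convex energy, proportional to the KL divergence from the equilibrium, is nondecreasing with summable increments, so play approaches a nontrivial periodic orbit. The elapsed flow time is only $\sum_{t\le T}\eta_t\sim\log\log T$. Hence all but a vanishing fraction of the indices $t\le T$ lie within the last $\delta$ units of flow time, and $\bar\sigma(T)$ stays near the current point of the orbit and keeps cycling.

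You should therefore state the Cesàro conclusion, and the corollary that limit points of the averages are Nash, for the weighted averages $\sum_{t\le T}\eta_t\sigma^t/\sum_{t\le T}\eta_t$ that your argument controls. Alternatively, add hypotheses such as an interior point of $\mathcal S$ together with $T\eta_T\to\infty$. The paper's appendix reaches uniform averages only through an $O(\sqrt T)$ external-regret bound, i.e.\ effectively $\eta_t\propto 1/\sqrt t$, which is incompatible with $\sum_t\eta_t^2<\infty$. Your weighted version is the one the hypotheses support, but the Abel-summation remark should be dropped rather than relied on.

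\textbf{Two smaller points.} First, summing the per-player inequalities makes the quadratic term $\frac n2\eta_t^2$ rather than $\frac12\eta_t^2$; this is harmless. Second, strong monotonicity cannot hold in any finite game where some player has two or more actions. Since $F_i$ depends only on $\sigma_{-i}$, for $\sigma'=(\sigma_i',\sigma_{-i})$ with $\sigma_i'\neq\sigma_i$ the left side of the strong-monotonicity inequality is $0$ while the right side is positive. Your last-iterate branch, like the paper's, is therefore logically valid but vacuous, and so is the selection claim built on it.
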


\begin{proof}
See Appendix~\ref{app:stable-vi}. In the merely monotone case, Cesàro convergence of the averages
to $\mathcal S$ follows from the no‑regret property of EMD and the standard online variational‑inequality
averaging bound; iterates themselves need not converge (e.g.\ in zero‑sum games they may cycle, cf.\
Proposition~\ref{prop:rps-no-pointwise}). In the strongly monotone case, $V(\sigma)=\tfrac12\|\sigma-\sigma^\dagger\|_2^2$
is a \emph{strict} Lyapunov function for the limiting differential inclusion, $\mathcal S$ is a
singleton, and the stochastic‑approximation argument gives pointwise convergence $\sigma^t\to\sigma^\dagger$.
\end{proof}

\begin{example}[Non‑potential monotone polymatrix, 3‑player path]\label{ex:monotone-nonpotential}
Three players on a path $1$-$2$-$3$, actions $\{0,1\}$. On each edge
neighbors play matching‑pennies (zero‑sum). Let $p_i$ be the probability
of action $1$. The pseudo‑gradient $F(p)=Ap+b$ has skew‑symmetric
Jacobian
\[
A=\begin{psmallmatrix}
0 & -2 & 0\\[2pt]
2 & 0 & -2\\[2pt]
0 & 2 & 0
\end{psmallmatrix},
\quad A^\top=-A,
\]
hence $F$ is monotone, the VI set is nonempty and contains
$p^\mathrm{sym}=(\tfrac12,\tfrac12,\tfrac12)$; Cesàro averages converge
to the VI set by Theorem~\ref{thm:stable-polymatrix}.
\end{example}

\begin{example}[Non‑potential monotone polymatrix, 4‑player line]\label{ex:monotone-line4}
Players $1$-$2$-$3$-$4$ on a line; each edge is matching‑pennies. The
pseudo‑gradient $F(p)=Ap+b$ has
\[
A=\begin{psmallmatrix}
0 & -2 & 0 & 0\\[2pt]
2 & 0 & -2 & 0\\[2pt]
0 & 2 & 0 & -2\\[2pt]
0 & 0 & 2 & 0
\end{psmallmatrix},
\quad A^\top=-A,
\]
thus monotone; the VI set is nonempty and contains
$p^\mathrm{sym}=(\tfrac12,\tfrac12,\tfrac12,\tfrac12)$ and Cesàro
averages converge to it.

\begin{figure}[h]
\centering
\fbox{\begin{minipage}{0.6\linewidth}\centering
\large $1$ \;—\; $2$ \;—\; $3$ \;—\; $4$\\[2pt]
\normalsize Zero‑sum matching‑pennies on each edge.
\end{minipage}}
\caption{Second non‑potential monotone polymatrix example (4‑player line).}
\label{fig:line4}
\end{figure}
\end{example}

\section{Non‑convergence in zero‑sum games and complex limit behavior}
\label{sec:dynamics}

Outside potential games, constant‑step EMD need not converge pointwise, though its
Cesàro averages still converge to equilibrium. This distinction is central to EDRE:
the static certificate (Definition~\ref{def:edre}) applies to the limit of averages,
while deviation‑regret (Definition~\ref{def:dynamic-edre}(i)) controls the
entire path under diminishing steps.

\begin{proposition}[No pointwise convergence in 2p zero-sum RPS under constant-step EMD]
\label{prop:rps-no-pointwise}
In two-player zero-sum rock-paper-scissors with payoffs in $[0,1]$, entropic mirror descent with any constant step $\eta>0$
does not converge pointwise to the interior mixed equilibrium from an open set of initial conditions. However, the Cesàro averages converge to the mixed equilibrium.
\end{proposition}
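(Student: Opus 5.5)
The statement has two parts, and I would prove both for the same constant step $\eta>0$. Both rest on the exact bookkeeping of multiplicative weights in a zero‑sum game. Write the RPS payoff matrix as $A$, with player~1 receiving $u_1=x^\top A y$ and player~2 receiving the complementary payoff, and let $(x^\ast,y^\ast)$ be the uniform equilibrium. Define the energy
\[
K_t \;=\; D_{\!KL}(x^\ast\Vert x^t)+D_{\!KL}(y^\ast\Vert y^t).
\]
Expanding the normalised exponential update gives $K_{t+1}-K_t$ as a sum of two log‑partition terms, $\ln\E_{x^t}[e^{\eta(Ay^t)_a}]-\eta\,x^{t\top}Ay^t$ for player~1 and the analogue for player~2. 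The linear terms cancel because the game is zero‑sum, and because $(x^\ast,y^\ast)$ is interior and fully mixed.

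\emph{Non‑convergence.} Each remaining term is a cumulant‑generating gap. By strict convexity of $\lambda\mapsto\ln\E e^{\lambda Z}$, it is $\ge 0$, with equality iff the payoff vector $Ay^t$ (respectively $A^\top x^t$) is constant on the support. For RPS this happens only at the equilibrium itself. Hence $K_t$ is strictly increasing along every orbit that is not at $(x^\ast,y^\ast)$.

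Suppose some orbit converged to $(x^\ast,y^\ast)$. Then $K_t\to 0$, while $K_t\ge K_1>0$, which is a contradiction. So no interior start other than the equilibrium converges, and in particular this holds on the open set $\operatorname{int}\Delta\setminus\{(x^\ast,y^\ast)\}$. This uses nothing from Theorem~\ref{thm:global-md}; it is a strict anti‑Lyapunov argument.

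\emph{Cesàro convergence with the same constant $\eta$.} I would avoid the generic regret bound, since it only yields an $O(\eta)$ neighbourhood. Instead I would work in dual (score) coordinates. Let $Y^T=\sum_{t\le T}Ay^t = T\,A\bar y^T$, where $\bar y^T$ is the Cesàro average. The closed form of EMD gives
\[
\ln\frac{x^{T+1}_a}{x^{T+1}_b}=\ln\frac{x^1_a}{x^1_b}+\eta\,(Y^T_a-Y^T_b).
\]
For RPS, $A$ restricted to the tangent space of the simplex is invertible. Therefore $\bar y^T\to y^\ast$ is equivalent to $(Y^T_a-Y^T_b)/T\to 0$. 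Each log‑ratio is bounded by $\max_a\ln(1/x^{T+1}_a)\le 3K_{T+1}+O(1)$, so the dual differences are $O(K_{T+1})$. The same argument applies to player~2. So it suffices to prove $K_T=o(T)$.

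The crude bound, namely the Hoeffding estimate $K_{t+1}-K_t\le \eta^2/4$, only gives $K_T=O(\eta^2T)$, which is exactly the $O(\eta)$ obstruction. The sharper route is a second‑order estimate
\[
K_{t+1}-K_t\;\le\;C\eta^2\bigl(\Var_{x^t}(Ay^t)+\Var_{y^t}(A^\top x^t)\bigr).
\]
The key point is that both variances are small whenever the state is near a vertex of $\Delta\times\Delta$. If the variance were bounded by $c\,e^{-\kappa K_t}$, then $K_t$ would grow like $\kappa^{-1}\ln t$, and $K_T=O(\ln T)=o(T)$ would follow.

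\emph{Main obstacle.} The variance bound by $c\,e^{-\kappa K_t}$ is false pointwise. Near the edges of the RPS heteroclinic cycle, the state is close to the boundary, so $K_t$ is large, yet it is mid‑transition between two pure profiles, so the variance is of order one.

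I would first try to show that such transition steps are rare. Once $K_t$ is large, the orbit spends long stretches near the vertices of the cycle. The dwell time at a vertex grows like $e^{\Theta(K_t)}$ because escape requires a coordinate to grow from $e^{-\Theta(K_t)}$. Each transition, by contrast, lasts $O(1/\eta)$ steps. Summing the increments over one full cycle would then give growth $O(1)$ per cycle, while the number of cycles completed by time $T$ is $O(\ln T)$. This again gives $K_T=O(\ln T)$ and hence Cesàro convergence.

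If controlling the dwell times proves too delicate, the fallback is to establish the weaker conclusion that the averages lie within $O(\eta)$ of $(x^\ast,y^\ast)$. I would then flag that the exact Cesàro statement requires this cycle‑counting step.
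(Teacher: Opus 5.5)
Your non-convergence half is correct, and it takes a different route from the paper. The paper linearizes the multiplicative-weights map at $(u,u)$ and finds tangent multipliers $1\pm i\eta/\sqrt{3}$ of modulus $\sqrt{1+\eta^2/3}>1$, so the equilibrium is a local repeller. Your KL argument is global:
\begin{itemize}
\item the linear terms cancel, by zero-sum plus the equalizing property of interior minimax strategies;
\item each remaining log-partition gap is strictly positive unless $Ay^t$ (resp.\ $A^\top x^t$) is constant;
\item for RPS, constancy forces $y^t=y^\ast$ (resp.\ $x^t=x^\ast$).
\end{itemize}
This shows that \emph{every} interior non-equilibrium orbit fails to converge. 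That is stronger than the paper's local statement and needs no spectral computation.

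The Ces\`aro half has a genuine gap. Your reduction is sound. The log-ratio identity, invertibility of $A$ on $\mathbf 1^\perp$, and $\max_a\ln(1/x_a)\le 3D_{\!KL}(u\Vert x)+3\ln 3$ make exact Ces\`aro convergence at fixed $\eta$ equivalent to $K_T=o(T)$. You are also right that Hoeffding only gives $O(\eta^2T)$.

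The step meant to deliver $o(T)$ is unproven, and its scalings are wrong:
\begin{itemize}
\item \emph{Dwell time.} Each log-ratio moves by at most $\eta$ per step. Near a vertex of the cycle, the escaping log-ratio moves by $\Theta(\eta)$ per step from a depth of at most $3K_t+O(1)$. So a dwell lasts $O(K_t/\eta)$ steps, not $e^{\Theta(K_t)}$: the coordinate is exponentially small, but it grows geometrically.
\item \emph{Cost per cycle.} A transition of $\Theta(1/\eta)$ steps with increments $\Theta(\eta^2)$ adds $\Theta(\eta)$ to $K$, not $O(1)$.
\end{itemize}
With the corrected scalings, your heuristic gives cycles of length $\Theta(K/\eta)$, order $\sqrt{T}$ cycles by time $T$, and $K_T\approx\eta\sqrt{T}$ rather than $O(\ln T)$.

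That is still $o(T)$, so the strategy is not dead. To make it a proof, you would have to show that once $K_t$ is large the orbit actually shadows the heteroclinic cycle, with uniformly controlled transition costs. You would also have to rule out the orbit spending a positive fraction of time on boundary faces where the variances are of order one. That is the entire content of the claim, and the proposal does not supply it. As written, you reach only your fallback: convergence of the averages to an $O(\eta)$-neighbourhood of the equilibrium.

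For comparison, the paper handles this half by citing the no-regret property of multiplicative weights together with Freund--Schapire. As you noticed, the $O(\sqrt{T})$ regret behind that citation needs $\eta\asymp T^{-1/2}$. For fixed $\eta$, the standard bound $\ln 3/\eta+\eta T/8$ gives exactly the same $O(\eta)$-neighbourhood. So your diagnosis exposes a real weakness in the paper's argument as well.

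Neither argument establishes exact Ces\`aro convergence at constant step. Closing the gap needs either the shadowing analysis above or a weaker statement: approximate convergence, or convergence under horizon-tuned or diminishing steps.
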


\begin{proof}
See Appendix~\ref{app:rps-jacobian} for a complete linearization and spectral analysis.
Ces\`aro convergence follows from standard no‑regret guarantees for multiplicative
weights in two‑player zero‑sum games \citep{freund1999adaptive}.
\end{proof}

\begin{remark}[Complex dynamics under large step size]
Outside the diminishing‑step regime, constant‑step EMD can be genuinely chaotic. For a
two‑strategy congestion game, the one‑dimensional EMD map undergoes a period‑doubling
bifurcation as the step size grows and, at an explicit large step, possesses a period‑three
orbit---hence Li--Yorke chaos with positive topological entropy
(Appendix~\ref{app:dynamics-full}, Theorems~\ref{thm:flip-bifurcation}--\ref{thm:liyorke-chaos}).
A related but only numerically‑documented non‑convergence appears in a heterogeneous‑step RPS
ring, where the symmetric equilibrium is linearly unstable
(Appendix~\ref{app:rps-jacobian}) and play settles into sustained oscillations. None of this
affects the main EDRE guarantees, which hold under diminishing steps; rather, it delimits the
regime in which the selection/invariance property can be expected.
\end{remark}

\section{Computational complexity}\label{sec:complexity}

Computing equilibria is a central concern in algorithmic game theory, and the
complexity class PPAD (Polynomial Parity Arguments on Directed graphs) captures the
difficulty of problems whose solutions are guaranteed to exist by parity or fixed‑point
arguments but for which no efficient algorithm is known. The canonical example is
computing a Nash equilibrium of a bimatrix game: existence is guaranteed by Nash's
theorem, but computing an $\varepsilon$‑Nash equilibrium is PPAD‑complete
\citep{daskalakis2009complexity,chen2009settling}. Showing that a problem
is PPAD‑hard means that solving it is at least as difficult as computing an approximate Nash
equilibrium.

For EDRE, the static condition (Definition~\ref{def:edre}) is exactly $\varepsilon$‑Nash, so
computing a static EDRE in a general game inherits the PPAD‑hardness of computing approximate
Nash equilibria. On the other hand, for potential games we construct a continuous map whose
fixed points correspond to EDREs and which fits the PPAD framework, placing the problem in
(promise‑)PPAD. This yields a classification for potential games: computing EDRE is PPAD‑hard in
the worst case and in PPAD under a structural promise.

\subsection{Problem statements}\label{subsec:comp-problems}

\begin{description}[leftmargin=*]
\item[EDRE‑Search.]  
  \textbf{Input:} a polymatrix game $\Gamma$ with rational payoffs
  (bit‑length poly\,$(|\Gamma|)$) and $\varepsilon=2^{-p(|\Gamma|)}$.
  \textbf{Output:} an $(\varepsilon,\varepsilon)$‑EDRE of $\Gamma$.
\item[\emph{Promise‑}EDRE‑Search (potential‑games).]  
  \textbf{Input:} a polymatrix game \(\Gamma\) that is promised to admit an exact payoff potential \(\Phi\), with rational payoffs (bit‑length poly\((|\Gamma|)\)) and \(\varepsilon=2^{-p(|\Gamma|)}\).
  \textbf{Promise:} an \((\varepsilon,\varepsilon)\)‑EDRE exists.
  \textbf{Output:} an \((\varepsilon,\varepsilon)\)‑EDRE of \(\Gamma\).
\end{description}

\subsection{PPAD hardness and promise‑membership}
In general polymatrix games, EDRE need not exist, motivating the promise variant.

\begin{theorem}[PPAD hardness; promise‑membership for potential‑games]\label{thm:ppad-hard}
EDRE‑Search is PPAD‑hard. Moreover, \emph{Promise‑}EDRE‑Search (potential‑games) belongs to PPAD via the proximal map \(F_\lambda\) described below.
\end{theorem}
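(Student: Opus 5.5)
The proof splits into two parts: hardness, which is a reduction from a known PPAD-hard problem, and membership, which goes through a proximal fixed-point map.

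\emph{Hardness.} I will reduce from computing an $\varepsilon$-Nash equilibrium of a polymatrix game (equivalently, of a bimatrix game, which is a two-node polymatrix game). With $\varepsilon=2^{-p(|\Gamma|)}$ this is PPAD-complete by \citet{daskalakis2009complexity,chen2009settling}. The reduction is the identity map on instances. Any $(\varepsilon,\varepsilon)$-EDRE output by an EDRE-Search oracle satisfies in particular the static condition of Definition~\ref{def:edre}. That condition is verbatim the $\varepsilon$-Nash condition at a product distribution (Remark~\ref{rem:edre-vs-ne}), so the output is already an $\varepsilon$-Nash equilibrium and no post-processing is needed.

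One point needs care. If the second $\varepsilon$ encodes a dynamic requirement, such as an $\varepsilon$-approximate EMD fixed point, then an instance might have no EDRE at all, and the oracle would be allowed to fail. To rule this out, I will observe that every exact Nash equilibrium is an exact fixed point of $\mathrm{MD}_\eta$: on its support all actions earn equal payoff, so multiplicative weights leaves the support weights unchanged, and off-support weights are zero. Hence EDREs exist on every input and the reduction is total. Since rational-data polymatrix games have rational equilibria of polynomial bit-length, the hardness already applies to exact outputs.

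\emph{Membership.} For the promise version I will place the problem in PPAD through the Brouwer/\textsc{End-of-Line} framework. First I define the continuous map $F_\lambda(\sigma)=\mathcal T_\lambda(\sigma)$ of Lemma~\ref{lem:uhc}. If concavity cannot be assumed, I will instead use the explicit EMD map $\mathrm{MD}_\eta$ with $\eta\le 1/L$ and $L\le 1$ (Remark~\ref{rem:L-safe}); this map is a closed-form composition of exponentials and normalization and is Lipschitz on the simplex. The argument then has four steps.
\begin{enumerate}
\item The map is computable to polynomially many bits and has a polynomially bounded Lipschitz constant. This follows because the gradients of a polymatrix potential are linear in $\sigma$ with rational coefficients.
\item Approximate Brouwer fixed points of polynomially-computable Lipschitz maps are in PPAD, by the standard Etessami--Yannakakis / Daskalakis et al.\ simplicial (Sperner) construction.
\item Any $\delta$-approximate fixed point with $\delta=2^{-q(|\Gamma|)}$ is an $(\varepsilon,\varepsilon)$-EDRE. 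This is a quantitative version of Lemma~\ref{lem:kakutani-to-NE}.
\item The promise guarantees that the output condition is non-vacuous.
\end{enumerate}

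The main obstacle is step 3, the quantitative robustness of Lemma~\ref{lem:kakutani-to-NE}. I will write $\sigma^+=\mathrm{MD}_\eta(\sigma)$ with $\|\sigma^+-\sigma\|_1\le\delta$. For an action $a$ with $\sigma_i(a)$ bounded below, the ratio $\sigma_i^+(a)/\sigma_i(a)=e^{\eta u_i(a,\sigma_{-i})}/Z_i$ is $\delta$-close to $1$. This forces $u_i(a,\sigma_{-i})\approx \eta^{-1}\log Z_i$ up to $O(\delta/(\eta\sigma_i(a)))$. Any action with better payoff would gain mass at rate $\Omega(\eta\cdot\text{gap})$ times its own weight, which is a problem only when that weight is tiny.

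To handle tiny weights, I will split the actions into those with $\sigma_i(a)\ge\sqrt\delta$ and the rest. Low-weight actions contribute at most $|\mathcal A_i|\sqrt\delta$ to the expected payoff. A high-payoff action with low weight is the one real issue, because a small-mass better response is invisible to the fixed-point test. I will handle it by working with the exploration mixture $p_i=(1-\gamma)\sigma_i+\gamma\mathbf u_i$, which keeps every weight at least $\gamma/|\mathcal A_i|$. With weights bounded below, the payoff gap of any better response is at most $O(\delta|\mathcal A_i|/(\eta\gamma))+\gamma$. Choosing $\gamma=\varepsilon/2$ and $\delta$ polynomially small then gives the $\varepsilon$-Nash condition.
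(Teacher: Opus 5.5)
Your hardness half is the paper's argument: the identity reduction from $\varepsilon$-Nash on a bimatrix game viewed as a two-node polymatrix game, made total because exact Nash equilibria are exact multiplicative-weights fixed points. For membership you correctly isolate the real obstacle, and the paper's proof does not resolve it. Every pure profile is an exact fixed point of $\mathrm{MD}_\eta$, and also of $\mathcal T_\lambda$, since a KL-proximal step cannot leave $\mathrm{supp}(\sigma)$; hence it is also a fixed point of the paper's damped $F_\lambda$. The paper instead asserts (Lemma~\ref{lem:kakutani-to-NE}) that pure MW fixed points are best responses, which is false, and appeals to the promise in Lemma~\ref{lem:grid-to-edre}, so there is no repair to borrow from it.

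Your repair of step~3, however, fails as written. The ratio test yields $O(\delta|\mathcal A_i|/(\eta\gamma))$ only for a profile that has all weights $\ge\gamma/|\mathcal A_i|$ \emph{and} is itself a $\delta$-approximate fixed point of $\mathrm{MD}_\eta$. Mixing with $\mathbf u_i$ after the fact does not produce such a profile. Take $\sigma$ pure with a better response $a^\ast$ of gap $\Delta$; it is an exact fixed point.
\begin{itemize}
\item If $\delta$ is the fixed-point accuracy of $\sigma$, then $\delta=0$ and your bound certifies $\sigma$ as $(\varepsilon/2)$-Nash, which is false.
\item If $\delta$ is the accuracy of the mixture $p$, then coordinate $a^\ast$ alone moves by about $(\gamma/|\mathcal A_i|)(e^{\eta\Delta}-1)$. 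So $\delta|\mathcal A_i|/(\eta\gamma)\gtrsim\Delta$, and the test certifies nothing.
\end{itemize}
Nor can you search directly for approximate fixed points with weights bounded below. If some action is strictly dominant by margin $\Delta$, no such profile is a $\delta$-approximate fixed point of $\mathrm{MD}_\eta$ once $\delta\ll\gamma\eta\Delta/|\mathcal A_i|$.

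The missing idea is to build exploration into the map, e.g.\ $G_\gamma(\sigma)=(1-\gamma)\mathrm{MD}_\eta(\sigma)+\gamma\mathbf u$, and to replace the equal-payoff argument with a weight-versus-gap estimate. Write $m=|\mathcal A_i|$, $r_a=e^{\eta u_i(a,\sigma_{-i})}/Z_i$, and $\Delta_a=u_i(a^\ast,\sigma_{-i})-u_i(a,\sigma_{-i})$ for a best response $a^\ast$.
\begin{enumerate}
\item A $\delta$-approximate fixed point of $G_\gamma$ satisfies $\sigma_i(a)\bigl(1-(1-\gamma)r_a\bigr)\in[\gamma/m-\delta,\gamma/m+\delta]$.
\item For $\delta<\gamma/m$, positivity forces $(1-\gamma)r_{a^\ast}<1$.
\item Hence, for $\eta\le1$, $1-(1-\gamma)r_a>1-e^{-\eta\Delta_a}\ge\eta\Delta_a/2$, so $\sigma_i(a)\Delta_a\le 2(\gamma/m+\delta)/\eta$.
\item Summing over actions, the regret is at most $3\gamma/\eta$ when $\delta\le\gamma/(2m)$. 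In addition, $\|\mathrm{MD}_\eta(\sigma)-\sigma\|_\infty\le\gamma+\delta$.
\end{enumerate}
The ratios $r_a$ are \emph{not} close to $1$ at such points. Suboptimal actions are controlled because they carry weight $O(\gamma/(m\eta\Delta_a))$, not because payoffs equalize. Taking $\gamma=\varepsilon\eta/3$ closes step~3, using neither the potential nor the promise. The standard route through Nash's map \citep{daskalakis2009complexity} would also work.
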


\begin{proof}
\emph{Hardness (direct, no gadget).}
A bimatrix game $(A,B)$ with $A_{ij},B_{ij}\in[0,1]$ is itself a two‑node polymatrix game (the
single edge carries $(A,B)$). For this instance the static EDRE condition
(Definition~\ref{def:edre}) is \emph{exactly} the $\varepsilon$‑Nash condition for $(A,B)$, and
the dynamic conditions of Definition~\ref{def:dynamic-edre} impose no additional search
constraint: every \emph{exact} Nash equilibrium of a finite game is an \emph{exact} fixed point
of the multiplicative‑weights map (item ii), and the aggregate deviation‑regret bound (item i)
holds automatically for any diminishing‑step EMD run by Lemma~\ref{lem:regret}. Consequently the
bimatrix game always admits an $(\varepsilon,\varepsilon)$‑EDRE (any of its exact Nash equilibria,
which are $0$‑Nash hence $\varepsilon$‑Nash and exact MW fixed points), and \emph{any}
$(\varepsilon,\varepsilon)$‑EDRE is in particular an $\varepsilon$‑Nash equilibrium of $(A,B)$.
Thus any algorithm solving EDRE‑Search solves $\varepsilon$‑Nash on bimatrix games, which is
PPAD‑complete \citep{daskalakis2009complexity,chen2009settling}. Hence EDRE‑Search is PPAD‑hard.

\emph{Promise‑membership.}  
Define $F_\lambda(\sigma)=\sigma+\lambda(\mathcal{T}_\lambda(\sigma)-\sigma)$
with $\lambda=2^{-p(|\Gamma|)-2}$. The map $F_\lambda$ is polynomial‑time computable (softmax and
rational arithmetic) and Lipschitz on the relative interior of the simplex; iterates are clipped
to $\Delta(\mathcal A)$ to avoid boundary issues. Because payoffs are rational in $[0,1]$ and
$\lambda=2^{-p(|\Gamma|)-2}$ has polynomial bit‑length, the Lipschitz constant and all intermediate
numerics are bounded by $\mathrm{poly}(|\Gamma|)$. The associated Brouwer instance asks for $\sigma$
with $\|F_\lambda(\sigma)-\sigma\|_\infty\le \lambda\varepsilon/8$, on a grid of spacing
$\Theta(\lambda\varepsilon)$ (still of polynomial bit‑length, since $\lambda,\varepsilon$ are both
$2^{-\mathrm{poly}}$); this is a Brouwer fixed‑point instance in the sense of
\citet{daskalakis2009complexity}. Under the \emph{promise}, a near‑fixed point at this tolerance
corresponds to an $(\varepsilon,\varepsilon)$‑EDRE (Lemma~\ref{lem:grid-to-edre}).
For non‑potential instances, promise‑membership remains open.
\end{proof}

\subsection{A tractable frontier}\label{subsec:dichotomy}
\begin{proposition}[Strictly concave potential: EDRE in P]
\label{prop:polytime-concave}
In finite exact‑potential games where the multilinear extension $\Phi$ is
strictly concave on $\Delta(\mathcal A)$, the strict EDRE equals the
(unique) maximizer of $\Phi$ and can be computed in polynomial time to
accuracy $\varepsilon$ by solving $\min_{\sigma\in\Delta(\mathcal A)} -\Phi(\sigma)$.
\end{proposition}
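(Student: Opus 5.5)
The plan has two parts: first identify the strict EDRE with the unique maximizer of $\Phi$, then argue that this maximizer can be computed by convex optimization.

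\emph{Identification.} By Proposition~\ref{prop:strict-edre-subset-ne}, a strict EDRE is exactly a mixed Nash equilibrium, so it suffices to show that the Nash equilibria coincide with the maximizers of $\Phi$ over $\Delta(\mathcal A)$.
\begin{itemize}[leftmargin=1.5em]
\item \emph{Maximizer $\Rightarrow$ Nash.} The exact-potential identity of Definition~\ref{def:exact-pot} extends multilinearly: $u_i(\sigma_i',\sigma_{-i})-u_i(\sigma)=\Phi(\sigma_i',\sigma_{-i})-\Phi(\sigma)$. A global maximizer $\sigma^\star$ of $\Phi$ therefore admits no profitable unilateral deviation, so it is Nash.
\item \emph{Nash $\Rightarrow$ maximizer.} For each $i$ and each $x_i$, the Nash condition gives $\langle\nabla_{\sigma_i}\Phi(\sigma),x_i-\sigma_i\rangle\le0$, using $\nabla_{\sigma_i}\Phi=\nabla_{\sigma_i}u_i$ up to an additive constant on the simplex. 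Summing over $i$ gives the first-order optimality condition $\langle\nabla\Phi(\sigma),x-\sigma\rangle\le 0$ for all $x\in\Delta(\mathcal A)$.
\item \emph{Uniqueness.} For concave $\Phi$ on the convex set $\Delta(\mathcal A)$, this first-order condition characterizes global maximizers. Strict concavity (Definition~\ref{def:strict-concave}) then forces the maximizer, which exists by compactness and continuity, to be unique. Hence the strict EDRE set is the singleton $\{\argmax\Phi\}$.
\end{itemize}

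\emph{Computation.} The problem $\min_{\sigma\in\Delta(\mathcal A)}-\Phi(\sigma)$ is a convex program over a product of simplices with polynomially many linear constraints.
\begin{itemize}[leftmargin=1.5em]
\item The objective $\Phi$ and its gradient can be evaluated in polynomial time in the succinct (e.g.\ polymatrix) representation used in \S\ref{subsec:comp-problems}.
\item The gradient is $1$-Lipschitz by Remark~\ref{rem:L-safe}.
\item Apply the ellipsoid method (or projected gradient / mirror descent with its $O(1/k)$ rate for smooth convex objectives) to obtain $\hat\sigma$ with $\Phi(\sigma^\star)-\Phi(\hat\sigma)\le\varepsilon'$ in time $\mathrm{poly}(|\Gamma|,\log(1/\varepsilon'))$ for ellipsoid.
\end{itemize}

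\emph{Converting suboptimality into an $\varepsilon$-Nash certificate.} This is the main obstacle, because near-optimality in $\Phi$ does not immediately bound unilateral gains. Suppose some player has a deviation $x_i$ with gain $g=\max_{a_i}u_i(a_i,\hat\sigma_{-i})-u_i(\hat\sigma)$. Moving a fraction $s$ toward that deviation raises $\Phi$ by at least $sg-\tfrac{L}{2}s^2\|x_i-\hat\sigma_i\|_1^2\ge sg-2s^2$. Optimizing over $s$ gives an increase of at least $g^2/8$, so $g\le\sqrt{8\varepsilon'}$. Choosing $\varepsilon'=\varepsilon^2/8$ therefore yields a static $\varepsilon$-EDRE, and the running time stays polynomial in $\log(1/\varepsilon)$. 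If accuracy is instead meant in distance to $\sigma^\star$, strict concavity alone gives no modulus. In that case I would read ``accuracy $\varepsilon$'' in the Nash-gap sense above, and flag that a strong-concavity assumption would be needed for an iterate-distance guarantee.
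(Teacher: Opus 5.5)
The paper states Proposition~\ref{prop:polytime-concave} without any proof, so there is no argument of the paper's to compare yours against. Read as an implication, your argument is valid step by step: maximizer $\Rightarrow$ Nash from the potential identity, Nash $\Rightarrow$ first-order optimality $\Rightarrow$ global maximizer by concavity, uniqueness by strict concavity, then a convex program. Two things you overlooked, however, change what is actually being proved.

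First, the step you call the main obstacle is not one. The identity you use for ``maximizer $\Rightarrow$ Nash'' says that $\Phi$ is \emph{exactly} affine along unilateral deviations:
$\Phi(x_i,\hat\sigma_{-i})-\Phi(\hat\sigma)=u_i(x_i,\hat\sigma_{-i})-u_i(\hat\sigma)$.
Take $x_i$ to be a pure best response to $\hat\sigma_{-i}$. Then
$g=\Phi(x_i,\hat\sigma_{-i})-\Phi(\hat\sigma)\le\max_{\Delta(\mathcal A)}\Phi-\Phi(\hat\sigma)\le\varepsilon'$,
so $\varepsilon'=\varepsilon$ already certifies an $\varepsilon$-Nash profile. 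This holds in any potential game, concave or not. Your smoothness estimate $g\le\sqrt{8\varepsilon'}$ is correct, but it discards this exact structure and loses a square root.

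Second, and more seriously, the same affineness makes the hypothesis vacuous. Suppose some player has $|\mathcal A_i|\ge 2$. Choose $x_i\neq y_i$ in $\Delta(\mathcal A_i)$ and any $\sigma_{-i}$. On the segment $\lambda\mapsto(\lambda x_i+(1-\lambda)y_i,\sigma_{-i})$ the multilinear $\Phi$ is affine, so the strict inequality of Definition~\ref{def:strict-concave} fails. Hence $\Phi$ is strictly concave on $\Delta(\mathcal A)$ only if every action set is a singleton. In that case $\Delta(\mathcal A)$ is a single point and the proposition is trivial.

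Even plain concavity is nearly empty. The Hessian of $\Phi$ restricted to the tangent space has zero diagonal blocks. A negative semidefinite form that vanishes on each block subspace vanishes identically, since each block lies in its radical. So a concave multilinear $\Phi$ is affine on $\Delta(\mathcal A)$, i.e.\ each player's incentives are independent of the others' actions. The paper notes in \S\ref{sec:limits} that $\Phi$ is affine along each player's simplex, but draws the consequence only for strong convexity.

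So your proof is not wrong, but it should open with this observation, which by itself proves the statement: the hypothesis admits no nontrivial instances. The ellipsoid machinery, the Nash-gap conversion, and your closing caveat about distance-to-$\sigma^\star$ accuracy are all moot. So is the ``tractable frontier'' claimed in the Dichotomy remark that follows the proposition.
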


\begin{remark}[Dichotomy]
Combining Theorem~\ref{thm:ppad-hard} with Proposition~\ref{prop:polytime-concave} yields a
boundary: EDRE‑Search is PPAD‑hard in general polymatrix games but polynomial‑time in strictly
concave exact‑potential games. Counting EDRE (how many strict EDRE a game has) is left for future
work; see \S\ref{sec:limits}.
\end{remark}

\begin{remark}[Minimum‑cost steering]\label{rem:steering}
Given a target profile $\hat\sigma$, the minimum total transfer that makes $\hat\sigma$
satisfy Definition~\ref{def:edre} can be computed by a linear program (no integer variables
needed). The dynamic EDRE properties (Definition~\ref{def:dynamic-edre}(i)--(ii)) are
preserved up to range scaling. The LP formulation, proof, and range‑scaling analysis
are given in Appendix~\ref{app:steering-full}.
\end{remark}

\section{Worked example: $2\times 2$ coordination game}\label{sec:worked}

We illustrate the EDRE framework on a minimal game where all quantities can be computed
in closed form.

\paragraph{Game.}
Two symmetric players, actions $\{A,B\}$, with payoff matrix
\[
\begin{array}{c|cc}
 & A & B \\ \hline
A & 1,\;1 & 0,\;0 \\
B & 0,\;0 & \tfrac12,\;\tfrac12
\end{array}
\]
This is a coordination game with exact payoff potential
$\Phi(A,A)=1$, $\Phi(B,B)=\tfrac12$, $\Phi(A,B)=\Phi(B,A)=0$ and cost potential $\Psi=-\Phi$.

\paragraph{Nash equilibria.}
Let $p=\sigma_i(A)$ denote each player's probability of $A$.
Indifference requires $p\cdot 1=(1-p)\cdot\tfrac12$, giving $p=\tfrac13$.
The game has three NE: the pure profiles $(A,A)$ and $(B,B)$, and the symmetric mixed
equilibrium $p^\star=\tfrac13$. The potential values are
$\Phi(A,A)=1>\Phi(B,B)=\tfrac12>\Phi(p^\star,p^\star)=\tfrac13$.

\paragraph{EMD fixed points and EDRE selection.}
With utilities in $[0,1]$ we have $L\le 1$ (Remark~\ref{rem:L-safe}). For constant step
$0<\eta<1$, EMD ascends the potential $\Phi$. All three NE are fixed points of the EMD map
(at a pure profile the exponential weights do not change the vertex; at the mixed NE,
equal payoffs leave the ratio unchanged). However, the mixed NE is a saddle point of $\Phi$,
and $(B,B)$ is a local but not global maximizer. From any interior initialization, the
Lyapunov monotonicity of Theorem~\ref{thm:global-md} drives the iterates toward the
global maximizer $(A,A)$, which is therefore the EDRE selected by
Definition~\ref{def:dynamic-edre}(ii). Concretely, with action‑gain
$g(p)=u_i(A,p)-u_i(B,p)=\tfrac32 p-\tfrac12$ (so $g'\equiv\tfrac32$), the mixed NE has EMD multiplier
$1+\eta\,p^\star(1-p^\star)\,g'(p^\star)=1+\tfrac{3}{2}\eta\,p^\star(1-p^\star)>1$, so it is
linearly unstable: by Proposition~\ref{prop:unstable-excluded} it is a static EDRE but is reached
only from a Lebesgue‑null set of initializations, while $(A,A)$ and $(B,B)$ are robustly selected.
This is the sense in which EDRE's selection requirement excludes unstable equilibria.

\paragraph{Deviation‑regret bound.}
Suppose both players start at $\sigma_i^1=(\tfrac12,\tfrac12)$ and run EMD with diminishing
steps $\eta_t=\frac{1}{4\sqrt{t}}$ (i.e., $\eta_0=\tfrac14$). Consider the coalition
$C=\{1,2\}$ with fixed deviation $y_C=(B,B)$; we evaluate the \emph{aggregate
(summed‑unilateral)} deviation--regret $\CDR{C}{y_C}$. Each player's KL divergence is
$D_{\!KL}\!\bigl(\delta_B\,\big\|\,(\tfrac12,\tfrac12)\bigr)=\ln 2\approx 0.693$.
By Lemma~\ref{lem:regret}, with $T=1000$ and $\delta=0.05$,
\[
\max_{1\le\tau\le T}\CDR{C}{y_C}(\tau)
\;\le\;
4\sqrt{2000\ln(80{,}000)}
\;+\;\sqrt{1000}\Bigl[4\cdot 2\ln 2\;+\;\tfrac14\cdot 2\Bigr]
\;\approx\; 601+191\;=\;792.
\]
(The two players interact, so this bounds the aggregate unilateral quantity, not the joint
coalitional regret; cf.\ Remark~\ref{rem:interacting}.) Since EMD drives play toward $(A,A)$,
where each player deviating to $B$ yields payoff~$0$ versus~$1$, the \emph{actual} aggregate
deviation‑regret is negative. The bound is a worst‑case guarantee over the entire trajectory,
not a prediction of the realized regret.

\paragraph{Steering.}
If a designer instead wanted to enforce $(B,B)$ as an EDRE, the LP in
Appendix~\ref{app:steering-full} would compute transfers: at $(B,B)$, each player's
deviation gain from $A$ is $u_i(A,B)-u_i(B,B)=0-\tfrac12=-\tfrac12<0$, so no transfer
is needed---$(B,B)$ is already a strict NE and hence a strict EDRE. The interesting case
arises when the target profile is \emph{not} a NE; then the LP computes the minimum
subsidy that closes the deviation gap.


\section{Limitations}\label{sec:limits}
Our guarantees depend on the chosen learning rule (EMD) and on payoff normalization to $[0,1]$.
Linear rates require a genuinely strongly convex potential, which the multilinear extension of a
finite potential game never is (it is affine along each player's simplex); we therefore do not
state such rates.
In non‑strict potentials the selected solution set may be non‑singleton; EDRE then gives set‑valued (Cesàro) selection rather than a single point.
The deviation‑regret bound controls the \emph{aggregate (summed‑unilateral)} quantity; it equals
genuine joint coalitional regret only for non‑interacting coalitions, and we leave general
adaptive/interacting coalitional regret open.
The chaos result of Appendix~\ref{app:dynamics-full} is established rigorously for the
two‑strategy EMD map (an explicit period‑three orbit); the heterogeneous‑step RPS ring is studied
only numerically, and a rigorous bifurcation/chaos analysis of that higher‑dimensional setting
remains open. We do not claim a Neimark--Sacker bifurcation anywhere: the relevant
one‑dimensional bifurcation is period‑doubling, and the RPS‑ring linearization does not exhibit
the spectral configuration a Neimark--Sacker bifurcation requires.
Counting strict EDRE (an EDRE‑Count problem) is left for future work; a correct parsimonious
reduction would need a gadget in which unmatched/all‑idle profiles are not strict equilibria.
Extending EDRE to broader FTRL/OMD classes beyond entropic mirror descent is left for future work.

\section{Conclusion}\label{sec:conclusion}
We introduced Ergodic Deviation‑Robust Equilibrium (EDRE), an equilibrium concept that
couples near‑Nash optimality at a product distribution with high‑probability
$\tilde{\mathcal O}(\sqrt{T})$ aggregate deviation‑regret along the learning path and a
dynamics‑aware selection principle tied to entropic mirror descent. We proved existence in
exact‑potential games, established Lyapunov monotonicity and fixed‑point convergence for
constant‑step EMD, extended the selection/invariance property to monotone polymatrix games
via variational inequalities, and showed that the $\sqrt{T}$ deviation‑regret rate is
order‑tight for the OMD template.

On the complexity side, we showed that computing EDRE is PPAD‑hard in general polymatrix
games but belongs to (promise‑)PPAD for potential games. The dichotomy between
general and strictly concave potential games parallels the classical dichotomy in equilibrium
computation.

\paragraph{When to use EDRE.}
EDRE is most useful when a near‑Nash certificate at a product distribution is desired,
robustness along the learning path matters (not just at convergence), and a dynamics‑aware
selection principle is preferable to set‑valued NE or CCE outcomes. For small games, the LP
formulation in Appendix~\ref{app:steering-full} provides a practical tool for computing
minimum‑cost steering transfers, and the fixed‑point computation of $\Fix(\mathrm{MD}_\eta)$
via softmax best responses gives concrete EDRE targets.

\paragraph{Open directions.}
Extending EDRE to broader FTRL/OMD classes beyond entropic mirror descent, a rigorous
bifurcation/chaos analysis of EMD in higher‑dimensional games (the two‑strategy case is settled
here; the heterogeneous‑step RPS ring remains open), characterizing
EDRE in non‑potential games without the monotonicity assumption, and a correct counting
(\#P) theory for strict EDRE are natural next steps.

\paragraph{Ethical impact.}
This is a theoretical contribution; we foresee no negative social impact.

\appendix
\section*{Appendices}

\section{Notation glossary}\label{app:notation}
\begin{table}[h]
\centering\small
\begin{tabular}{@{}ll@{}}
\toprule
Symbol & Meaning \\ \midrule
$N$ & players ($|N|=n$) \\
$\mathcal A_i$ & action set of player $i$ \\
$\Delta(\mathcal A_i)$ & simplex of mixed strategies \\
$\sigma$ & mixed strategy profile \\
$u_i$ & payoff function of $i$ \\
$\Phi$ & exact \emph{payoff} potential (so $u_i(a'_i,a_{-i})-u_i(a)=\Phi(a'_i,a_{-i})-\Phi(a)$) \\
$\Psi$ & cost potential, $\Psi=-\Phi$ (used in descent inequalities) \\
$D_{\!KL}$ & Kullback--Leibler divergence \\
$C$ & coalition of players \\
$\CDR{C}{y_C}(T)$ & aggregate (summed‑unilateral) deviation-regret \\
$\eta,L$ & step size, smoothness constant \\
$\Var(\eta)$ & variance of learning rates \\
$\gamma$ & bandit exploration parameter (App.~\ref{app:bandit-full}) \\
$|\mathcal A_C|$ & joint action‑space size of coalition $C$ (App.~\ref{app:bandit-full}) \\
$\|\cdot\|_1,\,\|\cdot\|_\infty$ & $\ell_1/\ell_\infty$ norms used in smoothness and Pinsker bounds \\
\bottomrule
\end{tabular}
\end{table}

\section{Potential‑descent details}\label{app:exist-md}
For entropic MD on an $L$‑smooth \emph{cost} potential $\Psi$,
\[
\Psi(\sigma^{t+1})-\Psi(\sigma^{t})
\;\le\; -\tfrac{1}{\eta}D_{\!KL}(\sigma^{t+1}\|\sigma^{t})
\;+\;\tfrac{L}{2}\|\sigma^{t+1}-\sigma^{t}\|_1^{2}
\;\le\; \bigl(-\tfrac{1}{\eta}+L\bigr)D_{\!KL}(\sigma^{t+1}\|\sigma^{t}),
\]

which is $\le 0$ for $0<\eta\le 1/L$ and strictly negative when $\eta<1/L$. For $\eta<1/L$, summing over $t$ yields
$\sum_t D_{\!KL}(\sigma^{t+1}\|\sigma^{t})<\infty$; by Pinsker's inequality
(\citep[Ch.~11]{cover2006elements}), $\sum_t\|\sigma^{t+1}-\sigma^{t}\|_1^2<\infty$,
hence $\|\sigma^{t+1}-\sigma^{t}\|_1\to 0$ (asymptotic regularity). Continuity of
$\mathrm{MD}_\eta$ then makes every $\omega$–limit point a fixed point, and the connectedness
of the $\omega$–limit set (Lemma~\ref{lem:sep-conv-iter2}) gives
$\mathrm{dist}(\sigma^t,\Fix(\mathrm{MD}_\eta))\to0$.

\medskip

\noindent\textbf{Gradient of the potential.}
For the multilinear extension of an exact potential game,
$\nabla_{\sigma_i}\Phi(\sigma)$ equals the vector of expected payoffs
$[u_i(a_i,\sigma_{-i})]_{a_i}$; this identity underpins
Lemma~\ref{lem:kakutani-to-NE}.

\section{Fourier/block diagonalization for period‑2 heterogeneity}
\label{app:eigen}
Linearizing EMD at $\sigma^{\mathrm{sym}}$ under alternating steps $(\eta_A,\eta_B)$
yields a block‑circulant Jacobian with $2\times 2$ Fourier blocks $J_m(\eta_A,\eta_B)$
indexed by modes $m=0,\dots,k-1$. Each block is a perturbation of the identity by a
skew‑type term inherited from the per‑edge matching‑pennies structure, so (exactly as in the
homogeneous computation of Appendix~\ref{app:rps-jacobian}) the nontrivial multipliers have
modulus $\sqrt{1+s_m^2}\ge 1$, with strict inequality whenever the corresponding mode is
active. Consequently the symmetric profile is linearly unstable, and pointwise convergence
fails on an open set of initial conditions. We emphasize that this linear data does \emph{not}
furnish a standard (supercritical) Neimark--Sacker configuration: such a bifurcation requires a
single conjugate pair on the unit circle with the \emph{remaining} spectrum strictly inside the
disk, whereas here the off‑critical multipliers lie \emph{on or outside} the disk. Determining
the nonlinear invariant set that organizes the observed oscillations (Appendix~\ref{app:dynamics-full})
is therefore left as an open problem rather than asserted as an attracting invariant circle.

\section{RPS Jacobian and average convergence}
\label{app:rps-jacobian}

We work with the canonical zero-sum RPS matrix
\[
A=\begin{pmatrix}
0 & -1 & 1\\
1 & 0 & -1\\
-1 & 1 & 0
\end{pmatrix},
\quad A^\top=-A,
\]
whose row sums are zero, so the unique interior Nash equilibrium is $x^\star=y^\star=\tfrac13\mathbf{1}$ and $A y^\star=0$, $A^\top x^\star=0$.

\paragraph{EMD map and linearization.}
Player~1 updates $x$ by the multiplicative-weights (entropic MD) map
\[
x^{+}_a=\frac{x_a \exp\big(\eta\, (A y)_a\big)}{\sum_b x_b \exp\big(\eta\, (A y)_b\big)},
\quad
y^{+}_b=\frac{y_b \exp\big(-\eta\, (A^\top x)_b\big)}{\sum_c y_c \exp\big(-\eta\, (A^\top x)_c\big)}.
\]
Set $u=\tfrac13\mathbf{1}$ and write $x=u+\tilde x$, $y=u+\tilde y$ with $\mathbf{1}^\top \tilde x=\mathbf{1}^\top \tilde y=0$.
A first-order expansion of the normalized exponential map at $(x,y)=(u,u)$ (using $e^{\eta g}=1+\eta g+o(\|g\|)$ and the differential of $v\mapsto v/\mathbf{1}^\top v$) gives
\[
\tilde x^{+} \;=\; \tilde x \;+\; \frac{\eta}{3}\,A\,\tilde y \;+\; o(\|(\tilde x,\tilde y)\|),
\qquad
\tilde y^{+} \;=\; \tilde y \;-\; \frac{\eta}{3}\,A^\top\,\tilde x \;+\; o(\|(\tilde x,\tilde y)\|).
\]
Hence, on the tangent space, the Jacobian of the EMD map at $(u,u)$ is
\[
J \;=\;
\begin{pmatrix}
I & \tfrac{\eta}{3}A\\[2pt]
-\tfrac{\eta}{3}A^\top & I
\end{pmatrix}
\;=\;
I \;+\;
\begin{pmatrix}
0 & \tfrac{\eta}{3}A\\[2pt]
-\tfrac{\eta}{3}A^\top & 0
\end{pmatrix}.
\]

\paragraph{Spectrum.}
Let $B=\tfrac{\eta}{3}A$. The block matrix
$\begin{psmallmatrix}0&B\\-B^\top&0\end{psmallmatrix}$ has eigenvalues $\pm i\,s_j$, where $s_j$ are the singular values of $B$. Thus the eigenvalues of $J$ are $1\pm i\,s_j$.
For the RPS matrix above, $A$ has singular values $\{\sqrt{3},\sqrt{3},0\}$, so $s_j=\tfrac{\eta}{3}\sqrt{3}=\tfrac{\eta}{\sqrt{3}}$ (twice) and $0$.
Therefore $J$ has a unit eigenvalue (due to the simplex constraint) and a complex-conjugate pair $1\pm i\,\eta/\sqrt{3}$ whose modulus is $\sqrt{1+(\eta/\sqrt{3})^2}>1$ for every $\eta>0$.
Hence $(u,u)$ is not locally asymptotically stable, so EMD cannot converge pointwise to $(u,u)$ from any open neighborhood.

\paragraph{Cesàro convergence.}
EMD (with bounded payoffs) is a no-regret algorithm; both players' average external regret is $O(\sqrt{T})$.
In two-player zero-sum games, vanishing average regret implies the empirical play converges to the minimax set; since RPS has a unique interior minimax equilibrium, the Cesàro averages converge to $(u,u)$ \citep{freund1999adaptive}.

\section{Heterogeneous steps: instability and the open invariant‑set problem}
\label{app:hopf}
For the $k$‑player ring with alternating steps $(\eta_A,\eta_B)$, the linearization at the
symmetric fixed point $\sigma^{\mathrm{sym}}$ is the block‑circulant Jacobian analyzed in
Appendix~\ref{app:eigen}; its nontrivial multipliers have modulus $\ge 1$, so the fixed point is
linearly unstable and play does not converge pointwise.

We deliberately do \emph{not} claim a Neimark--Sacker bifurcation here. A supercritical
Neimark--Sacker bifurcation requires (i) a single conjugate pair on the unit circle with the
remaining spectrum strictly \emph{inside} the disk, and (ii) a nondegenerate first Lyapunov
coefficient evaluated at a nonresonant critical parameter ($\lambda\neq1$). Neither holds for
the natural homogeneous base point $(\eta_A,\eta_B)=(0,0)$: there the EMD map is the identity
($DT=I$, every multiplier equal to $1$), which is a $1{:}1$ resonance at which the Lyapunov‑coefficient
formula is singular; and for $\eta>0$ the off‑critical multipliers lie on or outside the disk
(Appendix~\ref{app:eigen}), so the required ``rest of the spectrum inside the disk'' configuration
is absent. We therefore record the rigorous bifurcation analysis---establishing whether a genuine
invariant circle forms under heterogeneity, and whether it can lose normal hyperbolicity and break
up into chaotic dynamics---as an \emph{open problem}. The experiments in
Appendix~\ref{app:dynamics-full} document the sustained oscillations empirically without asserting
their mechanism.

\section{Lipschitzness of multilinear payoffs}\label{app:lipschitz}
\begin{lemma}[Lipschitzness of multilinear payoffs]\label{lem:lipschitz}
If utilities are scaled to $[0,1]$, then for any player $i$ and any profiles $\sigma,\sigma'$,
\[
\big|u_i(a_i,\sigma_{-i})-u_i(a_i,\sigma'_{-i})\big|
\;\le\;\|\sigma_{-i}-\sigma'_{-i}\|_1
\quad\text{for all } a_i,
\]
and therefore $\big|u_i(\sigma)-u_i(\sigma')\big|\le\|\sigma-\sigma'\|_1$.
\end{lemma}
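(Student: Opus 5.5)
The first inequality will be shown by expanding $u_i(a_i,\cdot)$ multilinearly in the opponents' strategies and telescoping, one opponent at a time. The second will follow from the first together with linearity of $u_i$ in player $i$'s own strategy.

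Concretely, I will order the opponents as $j_1,\dots,j_m$ (all $j\neq i$). For $k=0,\dots,m$, let $\rho^{(k)}_{-i}$ be the hybrid profile that uses $\sigma'_{j}$ for the first $k$ opponents and $\sigma_j$ for the rest, so that $\rho^{(0)}=\sigma_{-i}$ and $\rho^{(m)}=\sigma'_{-i}$. Then
\[
u_i(a_i,\sigma_{-i})-u_i(a_i,\sigma'_{-i})=\sum_{k=1}^m\big(u_i(a_i,\rho^{(k-1)}_{-i})-u_i(a_i,\rho^{(k)}_{-i})\big).
\]
Consecutive hybrids differ only in coordinate $j_k$. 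By multilinearity, the $k$-th term equals $\sum_{b}\big(\sigma_{j_k}(b)-\sigma'_{j_k}(b)\big)\,v_k(b)$. Here $v_k(b)=u_i(a_i,b,\rho^{(k)}_{-i,-j_k})$ is an average of utilities, so $v_k(b)\in[0,1]$.

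The one point needing care is the constant: a naive H\"older bound gives exactly $\|\sigma_{j_k}-\sigma'_{j_k}\|_1\cdot\|v_k\|_\infty\le\|\sigma_{j_k}-\sigma'_{j_k}\|_1$. One could sharpen this to $\tfrac12$ by subtracting the constant $\tfrac12$ from $v_k$, using that the difference of distributions sums to zero, but this is unnecessary for the stated bound. Summing over $k$ then gives $\sum_{j\neq i}\|\sigma_j-\sigma'_j\|_1=\|\sigma_{-i}-\sigma'_{-i}\|_1$, where the $\ell_1$ norm on the product space is the sum of the blockwise norms.

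For the second claim, I will add and subtract $u_i(\sigma_i,\sigma'_{-i})$. The term
\[
u_i(\sigma_i,\sigma_{-i})-u_i(\sigma_i,\sigma'_{-i})=\sum_{a_i}\sigma_i(a_i)\big(u_i(a_i,\sigma_{-i})-u_i(a_i,\sigma'_{-i})\big)
\]
is a convex combination of differences, so by the first part it is at most $\|\sigma_{-i}-\sigma'_{-i}\|_1$ in absolute value. The remaining term $u_i(\sigma_i,\sigma'_{-i})-u_i(\sigma'_i,\sigma'_{-i})$ is bounded by $\|\sigma_i-\sigma'_i\|_1$ by the same H\"older step, since payoffs lie in $[0,1]$. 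Adding the two pieces yields $\|\sigma-\sigma'\|_1$.

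There is no real obstacle here. The only thing to state explicitly is the product-space convention that $\|\sigma\|_1=\sum_j\|\sigma_j\|_1$, which is what makes the telescoped sum match the right-hand side.
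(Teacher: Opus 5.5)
Your proof is correct and follows the same idea as the paper's one-line argument: an expectation of payoffs in $[0,1]$ is $1$-Lipschitz in $\ell_1$, and the second inequality then follows by combining over player $i$'s own actions. Your telescoping over opponents makes explicit the step the paper compresses into ``convex combination,'' and your add-and-subtract of $u_i(\sigma_i,\sigma'_{-i})$ makes explicit its ``summing over actions''; your stated blockwise-sum convention for $\|\cdot\|_1$ is exactly what makes the two bounds add up.
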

\begin{proof}
$u_i(a_i,\cdot)$ is a multilinear expectation of numbers in $[0,1]$, hence a convex combination
with coefficients summing to $1$; it is therefore $1$‑Lipschitz in $\ell_1$. Summing over
actions gives the second inequality.
\end{proof}

\section{Grid‑to‑EDRE lemma}\label{app:grid2edre}
\begin{lemma}\label{lem:grid-to-edre}
Let $\|F_\lambda(\sigma)-\sigma\|_\infty \le \lambda\,\varepsilon/8$.
Under the \emph{promise} that $\Gamma$ admits an $(\varepsilon,\varepsilon)$-EDRE, the profile $\sigma$
is an $(\varepsilon,\varepsilon)$-EDRE.
\end{lemma}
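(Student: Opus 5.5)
The plan is to turn the near-fixed-point condition for $F_\lambda$ into a bound on how far $\sigma$ moves under one proximal step. From there I will read off approximate first-order (KKT) optimality of $\sigma$ for the potential $\Phi$. Exact-potential structure then converts this into the $\varepsilon$-Nash (static EDRE) inequality.

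\textbf{Step 1: bound the proximal displacement.} Since $F_\lambda(\sigma)-\sigma=\lambda(\mathcal T_\lambda(\sigma)-\sigma)$, the hypothesis gives
\[
\|\mathcal T_\lambda(\sigma)-\sigma\|_\infty\le\varepsilon/8 .
\]
Multiplying by the number of coordinates gives an $\ell_1$ bound, at the cost of a polynomial factor; I expect this factor to be absorbed by the choice of grid exponent $p$.

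\textbf{Step 2: first-order optimality at $\sigma'=\mathcal T_\lambda(\sigma)$.} I will use the entropic form of the prox. Its optimality condition reads
\[
\sigma'_i(a)\propto\sigma_i(a)\exp\bigl(\tfrac{1}{\lambda}\,[\nabla_{\sigma_i}\Phi(\sigma')]_a\bigr),
\]
and the appendix identifies $\nabla_{\sigma_i}\Phi(\sigma')$ with the payoff vector $[u_i(a_i,\sigma'_{-i})]_{a_i}$. Write
\[
v_i=u_i(\cdot,\sigma'_{-i}),\qquad \bar v_i=\langle\sigma_i,v_i\rangle .
\]
For any $a$, the relation
\[
\log\frac{\sigma_i'(a)}{\sigma_i(a)}=\frac{v_i(a)-c_i}{\lambda}
\]
then shows the following. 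If some action had $v_i(a)>\bar v_i+\varepsilon/2$, and $a$ carried non-negligible mass, the tiny $\lambda$ would multiply its weight by $\exp(\Omega(\varepsilon/\lambda))$. That would move $\sigma_i$ by far more than $\varepsilon/8$, contradicting Step 1.

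\textbf{Step 3: transfer to $\sigma$.} Lemma~\ref{lem:lipschitz} gives
\[
|u_i(a_i,\sigma'_{-i})-u_i(a_i,\sigma_{-i})|\le\|\sigma'-\sigma\|_1,
\]
so the approximate best-response inequality passes from $\sigma'$ to $\sigma$. The total error must stay below $\varepsilon$, which yields
\[
\max_{a_i}u_i(a_i,\sigma_{-i})\le u_i(\sigma)+\varepsilon .
\]
The second $\varepsilon$ in "$(\varepsilon,\varepsilon)$" is the MD-fixed-point tolerance. I will discharge it by noting that a one-step EMD move with step $\eta$ from $\sigma$ is controlled by the same payoff gaps. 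The promise is used only to guarantee that $\mathcal T_\lambda$ is single-valued and continuous (concavity, Lemma~\ref{lem:uhc}) and that the target set is nonempty, so the statement is not vacuous.

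\textbf{Main obstacle.} The hard case in Step 2 is an action $a$ with a large payoff gap but tiny mass $\sigma_i(a)$. Multiplication by $\exp(\varepsilon/(2\lambda))$ may then still leave a small absolute change, so the contradiction with Step 1 fails. I would first try a case split on $\sigma_i(a)$. If $\sigma_i(a)\ge\lambda\varepsilon$, the multiplicative blow-up contradicts Step 1. If $\sigma_i(a)$ is smaller, I need either a lower bound on coordinates from the clipping and grid, or an argument that such actions cannot hurt the Nash gap. The second route is plausible only if the gap is measured against the support of $\sigma_i$ rather than against all actions. If neither works, I would fall back on the promise: combine the existence of an EDRE with continuity of $F_\lambda$ on the grid to argue that near-fixed points lie near genuine fixed points.
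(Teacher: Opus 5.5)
There is a genuine gap, and it cannot be closed: the lemma is false as stated. Your Step~1 is exactly the paper's first step, and your Step~3 is its appeal to Lemma~\ref{lem:lipschitz}. The paper then only asserts that this ``perturbs deviation gains by at most $\varepsilon$ after the standard $\varepsilon$-grid refinement.'' It never shows that deviation gains at $\mathcal T_\lambda(\sigma)$ are small, so it does not supply your missing Step~2 either. The obstacle you isolate there is not a technicality.

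\emph{Boundary counterexample.} $D_{\!KL}(\sigma'\|\sigma)=+\infty$ unless $\mathrm{supp}(\sigma')\subseteq\mathrm{supp}(\sigma)$. So every pure profile, Nash or not, is an exact fixed point of $\mathcal T_\lambda$ and hence of $F_\lambda$. In a potential game the promise holds automatically by Nash's theorem, so it excludes nothing.

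\emph{Interior counterexample.} Give one player actions $b,a$ with payoffs $\tfrac12$ and $\tfrac12+2\varepsilon$ regardless of the others, and make all other players dummies. Then $\Phi$ is linear, hence concave, and Lemma~\ref{lem:uhc} applies. For $\sigma_i=(1-\delta,\delta)$ the prox is the Gibbs tilt
\[
\mathcal T_\lambda(\sigma)_i(a)=\frac{\delta\,e^{2\varepsilon/\lambda}}{1-\delta+\delta\,e^{2\varepsilon/\lambda}}.
\]
With the paper's $\lambda=2^{-p-2}=\varepsilon/4$ we get $2\varepsilon/\lambda=8$, so the displacement is at most $\delta(e^8-1)$. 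Taking $\delta=\varepsilon/(8e^8)$ gives $\|F_\lambda(\sigma)-\sigma\|_\infty\le\lambda\varepsilon/8$. Yet $u_i(a,\sigma_{-i})-u_i(\sigma)=2\varepsilon(1-\delta)>\varepsilon$, so $\sigma$ violates Definition~\ref{def:edre} at level $\varepsilon$. For small $\varepsilon$, so does every profile within $\ell_\infty$-distance $\varepsilon$ of $\sigma$. Also $\delta\gg\lambda\varepsilon=\varepsilon^2/4$, so this is not a grid artifact. For general $\lambda$, take $\delta=\tfrac{\varepsilon}{8}e^{-2\varepsilon/\lambda}$.

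This also shows where each of your fallbacks breaks.
\begin{itemize}
\item With $\lambda=\varepsilon/4$, the tilt factor for a gap of $\varepsilon/2$ is $e^{\varepsilon/(2\lambda)}=e^2$, a constant, not $\exp(\Omega(\varepsilon/\lambda))$ with $\varepsilon/\lambda$ large. The multiplicative argument therefore only detects gaps of this size on actions carrying mass of order $\varepsilon$. Your threshold $\sigma_i(a)\ge\lambda\varepsilon$ is far too low, and no coordinate lower bound coming from the grid rescues it.
\item Measuring the gap only on $\mathrm{supp}(\sigma_i)$ changes Definition~\ref{def:edre}, which quantifies over all actions.
\item The route through genuine fixed points fails for two reasons. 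Continuity alone gives no closeness at the prescribed tolerance $\lambda\varepsilon/8$. And the exact fixed points of the KL-prox include non-Nash boundary profiles.
\end{itemize}

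Two smaller slips. First, the promise is existence of an EDRE, not concavity; the concavity in Lemma~\ref{lem:uhc} is a separate hypothesis, generally false for multilinear potentials. Second, the factor $\sum_i|\mathcal A_i|$ from your $\ell_\infty$-to-$\ell_1$ conversion cannot be absorbed into $p$, because the same $\varepsilon$ sits in both the hypothesis and the conclusion.

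A correct version needs extra hypotheses. For instance:
\begin{itemize}
\item every coordinate of $\sigma$ is at least $\rho$, and $\lambda\le\varepsilon/(2\ln(2/\rho))$, so that the tilt at least halves the mass of every action that is $\varepsilon/2$-suboptimal for $u_i(\cdot,\sigma'_{-i})$;
\item the residual tolerance is divided by $\sum_i|\mathcal A_i|$.
\end{itemize}
Under these, bound the mass \emph{lost} by suboptimal actions (rather than the mass gained by a better one) by the displacement. That does yield the $\varepsilon$-Nash inequality.
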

\begin{proof}
(EDRE(i)) Pick $\sigma'\in\mathcal T_\lambda(\sigma)$ so that
$F_\lambda(\sigma)=(1-\lambda)\sigma+\lambda\sigma'$.
Then
\[
\|\sigma'-\sigma\|_\infty
=\frac{\|F_\lambda(\sigma)-\sigma\|_\infty}{\lambda}
\le \varepsilon/8.
\]
By Lemma~\ref{lem:lipschitz} (with utilities in $[0,1]$), this perturbs deviation gains by at most $\varepsilon$ after the standard $\varepsilon$-grid refinement used in the promise reduction.
\end{proof}

\section{Transfers preserve dynamic EDRE properties (i)–(ii) with range‑aware scaling}
\label{app:transfer}
Let $u_i^\tau(a)=u_i(a)+\tau_{i,a_i}\!\ge0$.  The extended potential
$\Phi^\tau(a)=\Phi(a)+\sum_i\tau_{i,a_i}$ is an exact potential; its
multilinear extension adds an affine term, leaving $L$ unchanged, so
Theorem~\ref{thm:global-md} still applies (Definition~\ref{def:dynamic-edre}(ii)).  With range $B=\max_i(\max_a u_i^\tau(a)-\min_a u_i^\tau(a))$, the martingale term scales by $B$
and the mirror-descent drift scales by $B^2$ under diminishing steps $\eta_t=\eta_0/\sqrt{t}$:
\[
\max_{1\le \tau\le T}\CDR{C}{y_C}(\tau)
\;\le\;
2B|C|\sqrt{2T\ln\!\frac{4T}{\delta}}
+\sqrt{T}\Bigg[
\frac{1}{\eta_0}\sum_{i\in C}D_{\!KL}(y_i\Vert\sigma_i^1)
+\eta_0\,B^2|C|
\Bigg]
\quad\text{w.p.\ }1-\delta.
\]
Rescaling by $1/B$ recovers the canonical constants while preserving Definition~\ref{def:edre}.

\section{Stable polymatrix MD invariance (full proof of Thm.~\ref{thm:stable-polymatrix})}
\label{app:stable-vi}
Let $F$ be $L$-Lipschitz and monotone on $\Delta(\mathcal A)$ with the Euclidean inner product.
Players run entropic MD with mirror map $h(\cdot)$ (negative entropy), Bregman divergence $D_h$,
and diminishing steps $(\eta_t)$ with $\eta_t\downarrow0$, $\sum_t\eta_t=\infty$, $\sum_t\eta_t^2<\infty$.

\paragraph{Monotone case (Cesàro convergence via no‑regret).}
Entropic MD is a no‑regret algorithm: for each player $i$, the external regret against any fixed
$x_i$ is $O(\sqrt T)$. Averaging the first‑order optimality (monotone VI) characterization over
$t=1,\dots,T$ and using $\langle F(\sigma^t),\sigma^t-x\rangle \le$ (per‑player regret terms) gives,
for the time average $\bar\sigma(T)$,
$\langle F(x),\bar\sigma(T)-x\rangle \le O(1/\sqrt T)$ for all $x$ (using monotonicity to pass from
$F(\sigma^t)$ to $F(x)$). Hence every limit point of $\bar\sigma(T)$ lies in the VI set
$\mathcal S=\{\sigma:\langle F(\sigma),x-\sigma\rangle\ge0\ \forall x\}$. In the merely monotone
case the iterates $\sigma^t$ themselves need not converge (e.g.\ zero‑sum/RPS cycles,
Proposition~\ref{prop:rps-no-pointwise}); only the Cesàro averages do.

\paragraph{Strongly monotone case (iterate convergence via Bregman--Lyapunov mirror descent).}
Entropic MD is stochastic mirror descent for the variational inequality defined by $F$, run in
the dual variable: with the negative‑entropy mirror map $h$ and its convex conjugate $h^\ast$,
the update is
\[
y^{t+1}=y^{t}-\eta_t\big(\widehat F(\sigma^t)+\xi^t\big),\qquad \sigma^{t+1}=\nabla h^\ast(y^{t+1}),
\]
where $\widehat F$ is the (unbiased) pseudo‑gradient estimate and $\xi^t$ is a martingale‑difference
noise with bounded conditional variance (zero in full information). \emph{We use the Bregman
divergence $D_h(\sigma^\dagger\Vert\sigma^t)$, not the Euclidean distance, as the Lyapunov
function}, since the natural geometry of entropic MD is the one induced by $h$. The standard
mirror‑descent three‑point identity gives
\[
\E\!\big[D_h(\sigma^\dagger\Vert\sigma^{t+1})\mid\mathcal F_t\big]
\;\le\;
D_h(\sigma^\dagger\Vert\sigma^{t})
-\eta_t\,\langle F(\sigma^t),\,\sigma^t-\sigma^\dagger\rangle
+\tfrac12\eta_t^2\,\E\!\big[\|\widehat F(\sigma^t)+\xi^t\|_\infty^2\mid\mathcal F_t\big].
\]
Strong monotonicity yields $\langle F(\sigma^t),\sigma^t-\sigma^\dagger\rangle\ge\mu\|\sigma^t-\sigma^\dagger\|_2^2\ge0$
(using $\langle F(\sigma^\dagger),\sigma^t-\sigma^\dagger\rangle\ge0$ at the VI solution). With
$\sum_t\eta_t=\infty$, $\sum_t\eta_t^2<\infty$ and bounded second moments, the Robbins--Siegmund
almost‑supermartingale theorem \citep{robbins1971convergence} ensures that
$D_h(\sigma^\dagger\Vert\sigma^t)$ converges almost surely and that
$\sum_t\eta_t\|\sigma^t-\sigma^\dagger\|_2^2<\infty$. Since $\sum_t\eta_t=\infty$, the latter
forces $\liminf_t\|\sigma^t-\sigma^\dagger\|_2=0$; along a subsequence attaining this $\liminf$,
$\sigma^{t_k}\to\sigma^\dagger$, so $D_h(\sigma^\dagger\Vert\sigma^{t_k})\to0$, and therefore the
almost‑sure limit of $D_h(\sigma^\dagger\Vert\sigma^t)$ must be $0$, whence $\sigma^t\to\sigma^\dagger$
almost surely (the stochastic‑approximation/internally‑chain‑transitive viewpoint of
\citet{benaim1999dynamics} applied to the \emph{mirror} flow $\dot y=-F(\nabla h^\ast(y))$ gives
the same conclusion). If
$\sigma^\dagger$ lies in the relative interior, $D_h(\sigma^\dagger\Vert\cdot)\to0$ is equivalent
to $\sigma^t\to\sigma^\dagger$ in norm; if $\sigma^\dagger\in\partial\Delta(\mathcal A)$, the
convergence is read off in the dual coordinate $y$, where $h^\ast$ is smooth. We do \emph{not}
use a Euclidean projected‑gradient surrogate for the entropic iterate, as the two have different
continuous‑time limits and the Euclidean Lyapunov function is not adapted to the entropic
geometry near the boundary.

\section{Bandit deviation-regret: proof details}\label{app:bandit}
We outline a Freedman‑type argument for Theorem~\ref{thm:bandit-edre}.
Let $Z_t=\sum_{i\in C}[u_i(y_i,\sigma_{-i}^t)-u_i(\sigma^t)]$ and
$M_T=\sum_{t=1}^T\bigl(Z_t-\E[Z_t\mid\mathcal F_{t-1}]\bigr)$.
Under the IW estimator with exploration $\gamma$, the conditional variance
is $\mathcal O(|\mathcal A_C|/\gamma)$; Freedman's inequality \citep{freedman1975tail} gives
\[
\max_{1\le \tau\le T} M_\tau
\;\le\;
2\sqrt{2T\,\tfrac{|\mathcal A_C|}{\gamma}\,
\ln\!\tfrac{4T}{\delta}}
\;+\; 3\ln\!\tfrac{4T}{\delta}.
\]
Bounding the drift via standard IW‑MD analysis with $\eta_t=\eta_0/\sqrt{t}$ and
$\gamma=\min\{1,\sqrt{|\mathcal A_C|/T}\}$ keeps it commensurate with the martingale term,
yielding the stated rate.

\section{Lower bound details for Lemma~\ref{lem:tight}}
\label{app:lowerbound}
Consider a single player with two actions and adversarially chosen payoffs in $[0,1]$.
Let $g_t\in\{0,1\}^2$ be the payoff vector at round $t$, chosen obliviously as i.i.d.
Rademacher-encoded Bernoulli differences so that the best fixed action in hindsight
achieves a payoff advantage of order $\sqrt{T}$ in expectation. Then any algorithm
suffers external regret $\ge c\sqrt{T}$ in expectation for some universal $c>0$.
For unilateral deviation--regret ($|C|=1$) with $y_C$ equal to the best
fixed action, the deviation--regret coincides with external regret. A standard
Freedman inequality applied to the martingale difference between realized and
conditional expected increments gives the high-probability form
$c\sqrt{T}$ with probability $1-\delta$ for all large $T$. The bound is over adversarial
gradient sequences (realizable within a two‑player zero‑sum interaction); it establishes
order‑tightness of the OMD template rather than a matching realized regret for a specific
EMD‑vs‑EMD run.
\qed

\section{Bandit feedback: full statement and proof}\label{app:bandit-full}

\begin{theorem}[Aggregate deviation-regret under IW‑EMD]\label{thm:bandit-edre}
Assume utilities in $[0,1]$. Each player $i$ uses an exploration mixture
$p_i^t=(1-\gamma)q_i^t+\gamma\,\mathbf{u}_i$ with $\gamma\in(0,1]$ and
$\mathbf{u}_i$ uniform on $\mathcal A_i$, and updates $q_i^t$ by entropic
mirror descent with the importance‑weighted estimator
\[
\widehat g_i^t(a)\;=\;\frac{\indic\{a=a_i^t\}}{p_i^t(a)}\,
u_i(a_i^t,a_{-i}^t),\qquad a\in\mathcal A_i,
\]
using steps $\eta_t=\eta_0/\sqrt{t}$ with $\eta_0>0$.
Then for any coalition $C$ and any fixed deviation profile $y_C$, the aggregate
(summed‑unilateral) deviation--regret satisfies, with probability at least $1-\delta$,
\[
\CDR{C}{y_C}(T)
\;\le\;
4\,|C|\,\sqrt{\;
2T\,\frac{|\mathcal A_C|}{\gamma}\,
\ln\!\Bigl(\frac{4T}{\delta}\Bigr)}\;,
\qquad
|\mathcal A_C|=\prod_{i\in C}|\mathcal A_i|.
\]
\end{theorem}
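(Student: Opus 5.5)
We propose to follow the template of Lemma~\ref{lem:regret}: split the regret into a martingale part and a mirror‑descent drift, bound each player $i\in C$ separately, and sum over the coalition. Write $Z_t=\sum_{i\in C}\bigl(u_i(y_i,\sigma_{-i}^t)-u_i(\sigma^t)\bigr)$, where $\sigma^t=(p_i^t)_i$ is the sampling profile, and let $\mathcal F_t$ be the history up to round $t$. For each $i$ we decompose
\[
\sum_{t\le T}\bigl(u_i(y_i,\sigma_{-i}^t)-u_i(p_i^t,\sigma_{-i}^t)\bigr)
=\underbrace{\sum_t\langle \widehat g_i^t,\,y_i-q_i^t\rangle}_{\text{OMD term}}
+\underbrace{\sum_t\langle g_i^t-\widehat g_i^t,\,y_i-q_i^t\rangle}_{\text{estimation martingale}}
+\underbrace{\sum_t\langle g_i^t,\,q_i^t-p_i^t\rangle}_{\text{exploration bias}}.
\]
Here $g_i^t=[u_i(a,\sigma_{-i}^t)]_a$ is the true payoff vector. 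The key fact is unbiasedness, $\E[\widehat g_i^t\mid\mathcal F_{t-1}]=g_i^t$. It holds because $a_{-i}^t$ is drawn independently from $\sigma_{-i}^t$ and player $i$'s own action from $p_i^t$.

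\emph{Step 1 (bias).} Since $\|q_i^t-p_i^t\|_1\le 2\gamma$ and $\|g_i^t\|_\infty\le1$, the exploration bias is at most $2\gamma T$.

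\emph{Step 2 (OMD term).} Since $\widehat g_i^t\ge0$, we use the local‑norm version of the entropic OMD bound, with $e^{x}\le 1+x+x^2$ for $x\le 1$ (or its nonnegative‑gain analogue). This gives
\[
\sum_t\langle \widehat g_i^t,y_i-q_i^t\rangle\le \frac{D_{\!KL}(y_i\|q_i^1)}{\eta_T}+\sum_t\eta_t\sum_a q_i^t(a)\,\widehat g_i^t(a)^2 .
\]
In conditional expectation the last sum is at most $\sum_t\eta_t\sum_a q_i^t(a)/p_i^t(a)\le \sum_t\eta_t|\mathcal A_i|/(1-\gamma)$. Converting the realized sum to its conditional expectation costs another martingale, handled together with Step~3. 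The uniform initialization bounds $D_{\!KL}\le\ln|\mathcal A_i|$; for a possibly pure $y_i$ we first mix it with $\gamma\mathbf u_i$, at an additional cost of $O(\gamma T)$.

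\emph{Step 3 (martingales).} Each increment of the estimation martingale is bounded by $|\mathcal A_i|/\gamma$. Its conditional variance is $\le\sum_a y_i(a)^2/p_i^t(a)+1\le |\mathcal A_i|/\gamma+1$. Summing over $i\in C$, the conditional variance of the aggregate is at most $|C|^2|\mathcal A_C|/\gamma$; we use the crude bound $\sum_i|\mathcal A_i|\le |C|\,|\mathcal A_C|$, which is why the statement carries $|\mathcal A_C|$. We then apply Freedman's inequality, as sketched in Appendix~\ref{app:bandit}, with a union bound over $\tau\le T$ and over the two martingale families. Each contributes $2|C|\sqrt{2T(|\mathcal A_C|/\gamma)\ln(4T/\delta)}$ plus a lower‑order $\frac{|\mathcal A_C|}{\gamma}\ln(4T/\delta)$ range term, so the two together give the leading constant~$4$.

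\emph{Step 4 (assembly).} We add the pieces and choose $\eta_0$ and $\gamma$, taking $\gamma=\min\{1,\sqrt{|\mathcal A_C|/T}\}$ as in Appendix~\ref{app:bandit}. With this choice the drift $\sqrt T(\ln|\mathcal A_i|/\eta_0+2\eta_0|\mathcal A_i|)$, the bias $2\gamma|C|T$ and the Freedman range term are each dominated by the leading square‑root term. The reason is that $T|\mathcal A_C|/\gamma\ge T$ and $\ge\gamma^2T^2$ up to constants. This absorption yields the stated single‑term bound.

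The main obstacle is the constant bookkeeping in Step~4. The right‑hand side has no separate drift or $\ln|\mathcal A_i|$ term, so every lower‑order piece must be absorbed into $4|C|\sqrt{2T|\mathcal A_C|\ln(4T/\delta)/\gamma}$. For small $T$, or $\gamma$ near $1$, this may force a restriction such as $\eta_0\in[\,\cdot\,,\,\cdot\,]$ and $\ln(4T/\delta)\ge\ln|\mathcal A_i|$. I would first verify absorption for $\gamma=\sqrt{|\mathcal A_C|/T}$. Where it fails, I would state the needed mild condition (for instance $\eta_0=1$ and $T\ge|\mathcal A_C|$) explicitly rather than loosen the constant.
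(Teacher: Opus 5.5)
\textbf{The statement as written is false, so Step~4 cannot be completed.} You are right that the lower-order pieces cannot all be absorbed, but this is not a small-$T$ or $\gamma\approx 1$ bookkeeping issue. The theorem allows arbitrary $\gamma\in(0,1]$ and arbitrary $\eta_0>0$, with a right-hand side that does not depend on $\eta_0$ or $\ln|\mathcal A_i|$. Take $C=\{i\}$ with two actions, $u_i(a,\cdot)\equiv1$, $u_i(b,\cdot)\equiv0$, $y_i=\delta_a$ and $q_i^1$ uniform. The right-hand side is then $8\sqrt{T\ln(4T/\delta)/\gamma}$.
\begin{itemize}
\item For $\gamma=1$ we have $p_i^t=\mathbf u_i$ for every $t$. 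With $\sigma^t=p^t$, as in your setup, $\CDR{C}{y_C}(T)=T/2$ deterministically: $5000$ against a bound of about $2950$ at $T=10^4$, $\delta=0.05$.
\item For any fixed $\gamma$, exploration alone costs at least $\gamma T/2$ (your own Step~1 shows the bias is $\Theta(\gamma T)$). This exceeds the bound once $\gamma^3T>256\ln(4T/\delta)$.
\item Even with the tuned $\gamma=\sqrt{2/T}$, the log-odds $\ln(q_i^t(a)/q_i^t(b))$ are nondecreasing with increments at most $\eta_t/p_i^t(a)\le 2\eta_t$, so they stay below $4\eta_0\sqrt T$. For $\eta_0=10^{-6}$ and $T=10^8$ this keeps both $q_i^t(b)$ and $p_i^t(b)$ above $0.49$. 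The regret is then at least $4.9\times10^7$, against a right-hand side of about $3.2\times10^7$.
\end{itemize}
A provable version must tie $\gamma$ to $T$ and confine $\eta_0$ to a $T$- and $|\mathcal A_i|$-dependent range; these are changes of hypotheses. Your Step~4 silently replaces the theorem's arbitrary $\gamma$ with $\gamma=\min\{1,\sqrt{|\mathcal A_C|/T}\}$. The paper's own proof does the same: it is only a sketch with your architecture (Freedman plus the claim that the importance-weighted drift stays ``commensurate'' under that $\gamma$), so it does not close this either. Your three-term split, which places the importance-weighting noise in $\langle g_i^t-\widehat g_i^t,y_i-q_i^t\rangle$, is the right decomposition.

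\textbf{Step~2 also fails for the estimator in the statement, independently of the above.} The inequality $e^x\le 1+x+x^2$ needs $x=\eta_t\widehat g_i^t(a)\le 1$. Since $\widehat g_i^t(a)$ can reach $|\mathcal A_i|/\gamma$, this forces $\eta_0\le\gamma/|\mathcal A_i|$, binding at $t=1$.
\begin{itemize}
\item Under that constraint your drift is $\sqrt T\ln|\mathcal A_i|/\eta_0\ge\sqrt T\,|\mathcal A_i|\ln|\mathcal A_i|/\gamma$. For $\gamma=\sqrt{|\mathcal A_C|/T}$ this is linear in $T$. For large $T$, no $\gamma$ makes both this drift and the $\gamma T$ bias $O(\sqrt{T|\mathcal A_C|\ln(4T/\delta)/\gamma})$.
\item With your suggested $\eta_0=1$ and $\gamma=\sqrt{|\mathcal A_C|/T}$, the condition cannot be guaranteed for $t<|\mathcal A_i|^2T/|\mathcal A_C|$, which is the whole horizon when $|C|=1$.
\end{itemize}
There is no constraint-free ``nonnegative-gain analogue''. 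The step-size-free bound $e^{-x}\le 1-x+x^2/2$ ($x\ge0$) is for nonnegative \emph{losses}. For gains, the per-round term $\eta_t^{-1}\ln\bigl(1+q_i^t(a_i^t)(e^{\eta_t\widehat g_i^t(a_i^t)}-1)\bigr)-q_i^t(a_i^t)\widehat g_i^t(a_i^t)$ can be of order $\widehat g_i^t(a_i^t)$ itself when $p_i^t(a_i^t)\ll\eta_t$. It is then not controlled by $\eta_t\sum_a q_i^t(a)\widehat g_i^t(a)^2$. A repair needs a different algorithm plus explicit hypotheses on $\eta_0$. Options are the loss-based estimator $\indic\{a=a_i^t\}(1-u_i(a_i^t,a_{-i}^t))/p_i^t(a)$, or time-varying exploration $\gamma_t\ge\eta_t|\mathcal A_i|$.

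\textbf{Smaller point: the drift bound needs a bounded divergence.} For the mirror-descent (rather than dual-averaging) update with decreasing $\eta_t$, the bound $D_{\!KL}(y_i\Vert q_i^1)/\eta_T$ requires $\sup_t D_{\!KL}(y_i\Vert q_i^t)<\infty$. Mixing $y_i$ with $\gamma\mathbf u_i$ does not give this, because $q_i^t$ is not kept away from the boundary. The mixing is also unnecessary for the initial term, since $D_{\!KL}(y_i\Vert\mathbf u_i)\le\ln|\mathcal A_i|$ already holds for pure $y_i$.
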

\begin{remark}[Unbiasedness, variance, and the full‑information limit]
Conditional on $\mathcal F_{t-1}$, the estimator is unbiased:
$\mathbb E[\widehat g_i^t(a)\mid \mathcal F_{t-1}] = u_i(a,\sigma_{-i}^t)$.
With $p_i^t=(1-\gamma)q_i^t+\gamma\,\mathbf u_i$ and $\mathbf u_i$ uniform, we have
$\min_a p_i^t(a)\ge \gamma/|\mathcal A_i|$, so the predictable
quadratic variation scales as $O(|\mathcal A_C|/\gamma)$, yielding the stated bound via Freedman.
The full‑information rate of Lemma~\ref{lem:regret} is recovered by setting $\gamma=1$ and
replacing the importance‑weighted estimator with the exact gradient (so the variance factor
$|\mathcal A_C|/\gamma$ is absent); it is \emph{not} obtained merely by taking $|\mathcal A_C|=1$.
A safe parameter choice is $\eta_t=\eta_0/\sqrt{t}$ and $\gamma=\min\{1,\sqrt{|\mathcal A_C|/T}\}$.
\end{remark}

\section{Complex dynamics under large step size: period‑doubling and Li--Yorke chaos}
\label{app:dynamics-full}

\subsection{A two‑strategy congestion game and its one‑dimensional EMD map}
\label{subsec:1d-reduction}
We exhibit a concrete finite game in which entropic MD at a large (constant) step size is
provably chaotic. Consider a two‑route congestion game: a unit mass of self‑interested play is
split between route~$1$ and route~$2$, with $p\in[0,1]$ the mass on route~$1$. Route~$1$ has a
linear latency $\ell_1(p)=G_1\,p$ (congestible) and route~$2$ a fixed latency
$\ell_2\equiv G_0$ (an outside option). Writing the \emph{gain} of route~$1$ over route~$2$ as
\[
g(p)\;=\;\ell_2-\ell_1(p)\;=\;G_0-G_1\,p ,\qquad G_0,G_1>0,
\]
agents run entropic MD (multiplicative weights) on these payoffs with step $\eta>0$. Specializing
the EMD update of \S\ref{subsec:md-overview-rel} to the two‑strategy simplex
$\Delta(\{1,2\})\cong[0,1]$ (equivalently, two symmetric players running EMD on the invariant
diagonal $p_1=p_2=p$, or a single nonatomic population), the state $p$ evolves by the
one‑dimensional map
\begin{equation}\label{eq:1d-map}
T_\eta(p)\;=\;\frac{p\,e^{\eta\,\ell_2}}{p\,e^{\eta\,\ell_2}+(1-p)\,e^{\eta\,\ell_1(p)}}
\;=\;\frac{p}{\,p+(1-p)\,e^{-\eta\,g(p)}\,}.
\end{equation}
The interior equilibrium is $p^\star=G_0/G_1$ (where $g(p^\star)=0$); $p=0$ and $p=1$ are the
pure fixed points. The map \eqref{eq:1d-map} depends on the latencies only through the products
$\eta\,\ell_1,\eta\,\ell_2$ (equivalently $\eta G_0,\eta G_1$), so rescaling all latencies by a
constant and the step size by its reciprocal leaves the dynamics---and every orbit below---unchanged;
the concrete values used here are thus on a natural latency scale, consistent with the paper's
congestion example in \S\ref{sec:sep-iter2}, and the normalization to $[0,1]$ payoffs used for the
general theory can be recovered by such a rescaling. A direct differentiation of \eqref{eq:1d-map} gives, with
$D(p)=p+(1-p)e^{-\eta g(p)}$,
\begin{equation}\label{eq:1d-deriv}
T_\eta'(p)\;=\;\frac{e^{-\eta g(p)}\,\bigl[\,1-\eta G_1\,p(1-p)\,\bigr]}{D(p)^2}.
\end{equation}

\subsection{Period‑doubling (flip) bifurcation of the equilibrium}

\begin{theorem}[Period‑doubling bifurcation]\label{thm:flip-bifurcation}
For the map \eqref{eq:1d-map}, the interior equilibrium $p^\star=G_0/G_1$ has multiplier
\[
T_\eta'(p^\star)\;=\;1-\eta\,G_1\,p^\star(1-p^\star),
\]
which decreases monotonically in $\eta$ and equals $-1$ at the threshold
\[
\eta_{\mathrm{PD}}\;=\;\frac{2}{G_1\,p^\star(1-p^\star)} .
\]
Hence $p^\star$ is linearly stable for $\eta<\eta_{\mathrm{PD}}$ and loses stability through a
\emph{period‑doubling} (flip) bifurcation at $\eta=\eta_{\mathrm{PD}}$, after which a stable
$2$‑cycle appears. This is a one‑dimensional codimension‑one bifurcation; it is \emph{not} a
Neimark--Sacker bifurcation (the state space is one‑dimensional, so there is no complex
multiplier pair and no invariant circle).
\end{theorem}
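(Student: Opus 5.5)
The plan is to verify the multiplier formula by evaluating the derivative \eqref{eq:1d-deriv} at the fixed point, and then to invoke the standard flip-bifurcation theorem for one-dimensional maps. That theorem requires checking a transversality condition and a nondegeneracy coefficient.

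\textbf{Step 1 (multiplier).} At $p^\star=G_0/G_1$ we have $g(p^\star)=0$, so $e^{-\eta g(p^\star)}=1$ and $D(p^\star)=p^\star+(1-p^\star)=1$. Substituting into \eqref{eq:1d-deriv} gives $T_\eta'(p^\star)=1-\eta G_1 p^\star(1-p^\star)$ directly. I would note that $p^\star\in(0,1)$ requires $G_0<G_1$, which I will add as the standing interiority assumption. Under it, $G_1p^\star(1-p^\star)>0$, so the multiplier is affine and strictly decreasing in $\eta$. It crosses $-1$ exactly at $\eta_{\mathrm{PD}}=2/(G_1p^\star(1-p^\star))$ and lies in $(-1,1)$ for $0<\eta<\eta_{\mathrm{PD}}$, which gives linear stability there.

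\textbf{Step 2 (transversality).} The derivative of the multiplier with respect to $\eta$ is $-G_1p^\star(1-p^\star)\neq0$, so the eigenvalue crosses $-1$ with nonzero speed. The fixed point $p^\star$ does not depend on $\eta$, which simplifies the parameter-dependent computation.

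\textbf{Step 3 (nondegeneracy and criticality).} This is the main obstacle. The flip theorem needs the coefficient
\[
c=\tfrac14\bigl(T''(p^\star)\bigr)^2+\tfrac16 T'''(p^\star)
\]
(up to the standard normalization $\tfrac12(T'')^2+\tfrac13T'''$ for $f'=-1$) to be nonzero. Its sign decides whether the $2$-cycle is stable and appears for $\eta>\eta_{\mathrm{PD}}$. My first attempt would be to change to the logit coordinate $x=\log\frac{p}{1-p}$, in which the map becomes $x\mapsto x+\eta g(p(x))=x+\eta(G_0-G_1\sigma(x))$ with $\sigma$ the logistic function. Derivatives are then easy to compute: $f'=1-\eta G_1\sigma'$, $f''=-\eta G_1\sigma''$, $f'''=-\eta G_1\sigma'''$, using $\sigma'=\sigma(1-\sigma)$, $\sigma''=\sigma'(1-2\sigma)$, and $\sigma'''=\sigma'(1-6\sigma+6\sigma^2)$.

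At the bifurcation $\eta G_1\sigma'=2$, so the coefficient reduces to a polynomial in $p^\star$. I expect it to be $\propto (1-2p^\star)^2+\tfrac{2}{3}(1-6p^\star+6p^{\star2})\cdot$(positive factor), to be simplified to a definite sign. This sign computation is where care is needed, and I would first check the symmetric case $p^\star=1/2$, where $\sigma''=0$ and $\sigma'''<0$, giving supercriticality. If the sign is not definite over all $p^\star\in(0,1)$, I would state supercriticality (a stable $2$-cycle) on the parameter range where $c$ has the right sign and note the subcritical alternative elsewhere. Since the logit change of coordinates is a smooth conjugacy, the bifurcation type transfers back to $p$.

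\textbf{Step 4 (not Neimark--Sacker).} The phase space is one-dimensional, so the single real multiplier cannot form a complex-conjugate pair. This gives the final remark of the statement.
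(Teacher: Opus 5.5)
Your approach is sound, and it proves more than the paper's proof does. The paper's proof stops at your Steps 1--2. It evaluates \eqref{eq:1d-deriv} at $p^\star$ and observes that the multiplier is affine and decreasing in $\eta$. It then treats the transversality condition $\partial_\eta T_\eta'(p^\star)\neq0$ as ``the generic nondegeneracy condition'' and concludes that a $2$-cycle is produced. It never checks the cubic coefficient. So the claims that the flip is nondegenerate and that the emerging $2$-cycle is \emph{stable} are asserted there rather than proved. Your Step 3 is exactly what those claims need.

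You can finish Step 3 outright, but the form you guessed has the wrong sign. In the logit chart, using $\eta G_1p^\star(1-p^\star)=2$, you get $f''(x^\star)=-2(1-2p^\star)$ and $f'''(x^\star)=-2(1-6p^\star+6p^{\star 2})$. Hence
\[
c=\tfrac14 f''(x^\star)^2+\tfrac16 f'''(x^\star)=(1-2p^\star)^2-\tfrac13\bigl(1-6p^\star+6p^{\star 2}\bigr)=\tfrac23-2p^\star(1-p^\star)\ \ge\ \tfrac16>0 .
\]
Equivalently, the Schwarzian derivative $f'''/f'-\tfrac32(f''/f')^2$ at $x^\star$ equals $-6c\le-1<0$. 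The $\sigma'''$ term enters with a minus sign and no extra positive factor. With your ``$+$'' version the coefficient would be negative at $p^\star=\tfrac12$, which contradicts your own (correct) symmetric check.

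So the flip is supercritical for every interior $p^\star$, i.e.\ whenever $G_0<G_1$, as you noted. Since the multiplier decreases through $-1$, the stable $2$-cycle lies on the side $\eta>\eta_{\mathrm{PD}}$. Your fallback of restricting to a parameter range, which would not have proved the statement as written, is therefore unnecessary.

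The transfer back to the $p$-coordinate is legitimate. Periodic orbits and their multipliers are preserved under the diffeomorphism $p\mapsto\log\frac{p}{1-p}$ of $(0,1)$ onto $\mathbb R$. Moreover, when $|f'(x^\star)|=1$ the Schwarzian transforms as $S(h\circ f\circ h^{-1})(h(x^\star))=Sf(x^\star)/h'(x^\star)^2$, so even the sign of $c$ does not depend on the chart.
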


\begin{proof}
Evaluate \eqref{eq:1d-deriv} at $p^\star$. There $g(p^\star)=0$, so $e^{-\eta g(p^\star)}=1$
and $D(p^\star)=p^\star+(1-p^\star)=1$, giving $T_\eta'(p^\star)=1-\eta G_1\,p^\star(1-p^\star)$.
This is affine and strictly decreasing in $\eta$ (as $p^\star(1-p^\star)>0$), passes through
$+1$ at $\eta=0$ and through $-1$ at $\eta_{\mathrm{PD}}=2/(G_1 p^\star(1-p^\star))$. A smooth
one‑dimensional map whose fixed‑point multiplier decreases through $-1$ with nonzero speed
undergoes a period‑doubling bifurcation (the generic nondegeneracy condition
$\partial_\eta T_\eta'(p^\star)=-G_1 p^\star(1-p^\star)\neq0$ holds), producing a $2$‑cycle.
\end{proof}

\subsection{Li--Yorke chaos via an explicit period‑three orbit}

Beyond the period‑doubling threshold the map \eqref{eq:1d-map} develops a turning point: by
\eqref{eq:1d-deriv}, $T_\eta'(p)=0$ at $p$ with $p(1-p)=1/(\eta G_1)$, so for $\eta G_1>4$ the
map is non‑invertible (unimodal on the absorbing interval). This is precisely the setting in
which one‑dimensional chaos can arise, and we now certify it rigorously.

\begin{theorem}[Li--Yorke chaos]\label{thm:liyorke-chaos}
Fix the congestion game with $G_0=0.7$, $G_1=3$ (so $p^\star=7/30$ and
$\eta_{\mathrm{PD}}=2/(3\cdot\tfrac{7}{30}\cdot\tfrac{23}{30})=3.7267\ldots$), and take step
$\eta=7$. Then the map $T_\eta$ of \eqref{eq:1d-map} possesses a period‑three orbit
\[
x_1\;\xrightarrow{\,T_\eta\,}\;x_2\;\xrightarrow{\,T_\eta\,}\;x_3\;\xrightarrow{\,T_\eta\,}\;x_1,
\qquad
\begin{aligned}
x_1&=0.000197784073392\ldots,\\
x_2&=0.025773667425\ldots,\\
x_3&=0.674028548501\ldots,
\end{aligned}
\]
with $x_1<x_2<x_3$ (verified: $|T_\eta^3(x_1)-x_1|<10^{-30}$ and $|T_\eta(x_1)-x_1|>2\times10^{-2}$,
so the period is exactly three). Consequently, by the theorem of \citet{liyorke1975}, $T_\eta$
exhibits \emph{Li--Yorke chaos}: it has periodic points of every period and an uncountable
``scrambled'' set $S$ on which no two distinct points are asymptotic. Moreover $T_\eta$ has
positive topological entropy \citep{blockcoppel1992}. The conclusion is robust: a period‑three
orbit persists for all $\eta$ in an interval containing $[6.4,7.6]$.
\end{theorem}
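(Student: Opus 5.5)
The plan is to avoid certifying the numerical orbit directly. Instead I will verify the one‑sided \emph{Li--Yorke condition}: there is a point $a$ with $T_\eta^3(a)\le a<T_\eta(a)<T_\eta^2(a)$. By \citet{liyorke1975}, for a continuous self‑map of an interval this already forces a period‑three point, periodic points of every period, and an uncountable scrambled set. Continuity is clear. By \eqref{eq:1d-map}, $T_\eta$ is a continuous self‑map of $[0,1]$, since $D(p)=p+(1-p)e^{-\eta g(p)}>0$ on $[0,1]$ and $0\le T_\eta(p)\le 1$. So the whole theorem reduces to a finite, checkable inequality at one rational point. The listed values $x_1<x_2<x_3$ then serve as the guide for where to place $a$.

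\textbf{Step 1 (locate $a$ and check the chain).} With $G_0=0.7$, $G_1=3$, $\eta=7$, near $p=0$ we have $\eta g(p)\approx 4.9$. Hence $T_\eta(p)\approx e^{4.9}p/(1+(e^{4.9}-1)p)$, which is strongly expanding there. I will take $a$ a rational slightly above $x_1$, say $a=2\times10^{-4}$. A hand estimate gives $T_\eta(a)\approx 0.026$. At $p\approx 0.026$ one has $\eta g\approx 4.36$, so $T_\eta^2(a)\approx 0.675$. At $p\approx0.675$ one has $\eta g\approx -9.25$, so $T_\eta^3(a)\approx 0.326^{-1}\cdot 0.675\, e^{-9.25}\approx 2\times 10^{-4}$.

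The margin in the last inequality $T_\eta^3(a)\le a$ is thin. I will therefore do the three evaluations in outward‑rounded interval arithmetic, or with explicit rational bounds on $e^{x}$ from truncated Taylor series with remainder. If the margin at $a=2\times10^{-4}$ is insufficient, I will move $a$ upward. Moving $a$ up increases $T_\eta(a)$ and $T^2_\eta(a)$; near $0.675$ the map is decreasing in the relevant sense (the turning point $p(1-p)=1/(\eta G_1)$ lies at $p\approx0.05$), so this pushes $T^3_\eta(a)$ down. In the worst case I will pick $a$ slightly below $x_2$'s preimage region so that $T^2_\eta(a)$ is larger. Strict inequalities $a<T_\eta(a)<T_\eta^2(a)$ are then immediate from the enclosures, and they also show that the resulting orbit has exact period three (the three points are distinct, so it is not a fixed point). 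This recovers the stated bounds $|T_\eta(x_1)-x_1|>2\times10^{-2}$, and a Newton/Krawczyk step on $T_\eta^3(x)-x$ near $x_1$ can certify the quoted digits if desired.

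\textbf{Step 2 (entropy and robustness).} Positive topological entropy follows from period three via Block--Coppel (indeed $h_{\mathrm{top}}\ge\log\frac{1+\sqrt5}{2}$). For the interval $\eta\in[6.4,7.6]$, the Li--Yorke inequality involves only continuous functions of $(\eta,a)$. I will cover $[6.4,7.6]$ by finitely many subintervals $[\eta_k,\eta_{k+1}]$ and, for each, choose a rational $a_k$. On each subinterval I will verify $T_\eta^3(a_k)\le a_k<T_\eta(a_k)<T_\eta^2(a_k)$ using interval arithmetic in $\eta$, treating $\eta$ itself as an interval input. This replaces any appeal to persistence of a hyperbolic orbit with a direct certificate.

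\textbf{Main obstacle.} The hard part is the rigorous enclosure of $T_\eta^3(a)-a$. The orbit passes through the tail of the map, where $T_\eta\approx e^{-9.25}$ times something, right after a strongly expanding passage near $0$. The composite derivative is therefore large, and naive floating‑point error could flip the sign. I will handle this with high‑precision interval arithmetic, or by choosing $a$ where the inequality $T^3_\eta(a)<a$ holds with a relative margin of several percent rather than at the exact orbit point. Near the ends of the $\eta$‑range this margin may shrink, so the subdivision there may need to be finer.
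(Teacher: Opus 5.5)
Your proof is correct, but it certifies something different from the paper. The paper first establishes the specific orbit. It runs an interval-Newton step for $T_\eta^3(x)-x$ on a box of radius $10^{-13}$ about $x_1$, with $(T_\eta^3)'-1\subset[-2.71,-2.70]$ and disjoint boxes $X,T_\eta(X),T_\eta^2(X)$. It then feeds $a=x_1$ itself into Li--Yorke's Theorem~1, where $T_\eta^3(a)\le a$ holds with equality.

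You use the orbit only as a guide and certify the strict chain $T_\eta^3(a)<a<T_\eta(a)<T_\eta^2(a)$ at a nearby rational. The period-three point, and every other period, then becomes an output of Li--Yorke rather than an input. This needs only three outward-rounded evaluations and no derivative enclosure.

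Your placement of $a$ is right. $T_\eta'>0$ at $x_1,x_2$ (both left of the critical point $\approx 0.0501$) and $T_\eta'<0$ at $x_3$. So $(T_\eta^3)'(x_1)\approx 126.5\cdot 4.14\cdot(-0.00325)\approx -1.70$, and $T_\eta^3(a)<a$ just above $x_1$. At $a=2\times10^{-4}$ one gets $T_\eta(a)\approx 0.02605$, $T_\eta^2(a)\approx 0.6752$, $T_\eta^3(a)\approx 1.94\times10^{-4}$, a $3\%$ margin. Error amplification is mild, because the expansion near $0$ is nearly cancelled by the contraction at $x_3$, so your ``main obstacle'' is not serious.

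The real payoff of your route is the robustness claim. A finite cover of $[6.4,7.6]$ with interval-valued $\eta$ and per-piece $a_k$ is an actual proof. The paper's appeal to persistence of transverse solutions only yields a neighborhood of $\eta=7$, unless one also excludes a fold ($(T_\eta^3)'=1$) along the branch. Your cover is feasible: at $\eta=6.4$, $a=10^{-3}$ gives $T_\eta^3(a)\approx 9.27\times10^{-4}$, and at $\eta=7.6$, $a=2.4\times10^{-4}$ gives $T_\eta^3(a)\approx 1.5\times10^{-5}$.

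What your route does not deliver on its own is the quoted digits of $x_1,x_2,x_3$, which are part of the statement. For those, the Krawczyk/Newton step is not optional. A derivative-free alternative is a two-sided bracket: take a rational $a'<x_1$ with $T_\eta^3(a')>a'$, apply the intermediate value theorem, and use $T_\eta(x)\ge T_\eta(a')>a$ on $[a',a]$ to exclude a fixed point.

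Two small fixes. First, exact period three does not follow from $a,T_\eta(a),T_\eta^2(a)$ being distinct, since they are not an orbit; it comes from Li--Yorke's conclusion, which gives least period $k$ for every $k$. Second, your fallback of moving $a$ below the preimage of $x_2$ (that is, below $x_1$) goes the wrong way. Since $(T_\eta^3)'(x_1)<0$, one has $T_\eta^3(a)>a$ for $a$ just below $x_1$.
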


\begin{proof}
The displayed values satisfy $T_\eta(x_1)=x_2$, $T_\eta(x_2)=x_3$, $T_\eta(x_3)=x_1$ to the
stated precision (obtained by solving $T_\eta^3(x)=x$ to $30$ significant digits and checking
$T_\eta(x_1)\neq x_1$, so the orbit has exact period three, not one). Existence and exact period
are moreover certified \emph{rigorously} by an interval‑Newton computation: on a box $X\ni x_1$ of
radius $10^{-13}$, the interval‑Newton operator for $T_\eta^3(x)-x$ maps $X$ strictly into itself,
with the enclosure $(T_\eta^3)'(X)-1\subset[-2.71,-2.70]$ bounded away from $0$, which proves a
\emph{unique} period‑three point in $X$; the boxes $X,\,T_\eta(X),\,T_\eta^2(X)$ are pairwise
disjoint, so the period is exactly three. Since $x_1<x_2<x_3$ and
the cycle is $x_1\to x_2\to x_3\to x_1$, the point $a=x_1$ satisfies
\[
T_\eta^3(a)=a\ \le\ a\ <\ T_\eta(a)=x_2\ <\ T_\eta^2(a)=x_3,
\]
which is exactly the hypothesis $f^3(a)\le a<f(a)<f^2(a)$ of \citet[Thm.~1]{liyorke1975} for the
continuous interval map $f=T_\eta:[0,1]\to[0,1]$. That theorem then yields periodic points of all
periods and an uncountable scrambled set; positive topological entropy follows from the existence
of a period‑three orbit by the Sharkovskii/horseshoe argument of \citet{blockcoppel1992} applied
to the (non‑invertible) interval map $T_\eta$. Robustness over the $\eta$‑window follows by
continuity of $T_\eta^3$ in $\eta$ and persistence of the transverse period‑three solutions.
\end{proof}

\begin{remark}[Numerical corroboration]\label{rem:lyap}
The largest Lyapunov exponent of $T_\eta$, estimated as
$\lambda=\lim_{N\to\infty}\tfrac1N\sum_{t<N}\log|T_\eta'(p^t)|$ from interior initial conditions,
is positive in the chaotic regime---for instance $\lambda\approx+0.48$ at $\eta=8$
($G_0=0.7,G_1=3$)---consistent with the sensitive dependence implied by
Theorem~\ref{thm:liyorke-chaos}. As is typical for one‑dimensional maps, $\lambda$ is not
positive for \emph{all} $\eta>\eta_{\mathrm{PD}}$ (periodic windows are interleaved with chaotic
parameters), so we anchor the chaos claim to the explicit period‑three certificate above rather
than to a Lyapunov estimate.
\end{remark}

\subsection{Heterogeneous‑step RPS ring (numerical)}

This higher‑dimensional setting is studied numerically only; unlike the two‑strategy map above,
we do not establish a rigorous bifurcation/chaos statement for it.
$k\ge2$ players arranged on a ring; each simultaneously plays RPS against
her two neighbors.  Player $i$'s fixed step size is $\eta_i>0$ and her
EMD update is
\[
p_{i}^{\,t+1}(a)=
\frac{p_{i}^{\,t}(a)\,
      \exp\!\bigl(\eta_i\,u_i(a,p_{-i}^{\,t})\bigr)}
     {\sum_{b}p_{i}^{\,t}(b)\,
      \exp\!\bigl(\eta_i\,u_i(b,p_{-i}^{\,t})\bigr)}.
\]
We define the learning‑rate variance as
$\Var(\eta)=\frac1k\sum_{i=1}^{k}(\eta_i-\bar\eta)^2$,
$\bar\eta=\tfrac1k\sum_{i=1}^{k}\eta_i$.

\subsection{Period‑2 heterogeneity: observed oscillations}

Under period‑2 alternating step‑sizes $(\eta_A,\eta_B)$, the symmetric profile
$\sigma^{\mathrm{sym}}$ of the $k$‑player ring is linearly unstable: its EMD multipliers have
modulus $\ge 1$ (Appendices~\ref{app:eigen}--\ref{app:hopf}), so play does not converge
pointwise. Empirically (below) we find that, once the step heterogeneity exceeds a
data‑dependent threshold, the iterates settle into \emph{sustained oscillations} rather than a
fixed point. We do not claim a specific bifurcation mechanism (Neimark--Sacker) or chaotic
dynamics \emph{for this ring}: as discussed in Appendix~\ref{app:hopf}, the available linear data do not certify a
standard Neimark--Sacker configuration, and characterizing the limiting invariant set is an open
problem. The measurements below are therefore reported as empirical evidence of non‑convergence
under heterogeneity, not as confirmation of a bifurcation.

\subsection{Experiments}
Replicated RPS with $k\in\{2,3,4,5,6\}$; two heterogeneity regimes placed below and above an
empirical scan center $\Var^\circ=\tfrac{2}{3(k+1)}$ (chosen as a convenient calibration anchor,
not derived from a bifurcation threshold); $T=5000$
iterations, burn‑in $T/2$; metric $\bar\Delta=\frac{2}{T}\sum_{t=T/2}^{T-1}\|p^{t+1}-p^{t}\|_1$; 20 seeds. Payoffs use the zero‑sum RPS matrix rescaled to $[0,1]$.

\begin{table}[h]
\centering\small
\begin{tabular}{@{}cccc@{}}
\toprule
$k$ & Scan center $\Var^\circ$ & $\bar\Delta$ (below) & $\bar\Delta$ (above) \\
\midrule
2 & 0.222 & $6.4\times10^{-5}$ & $3.7\times10^{-2}$ \\
3 & 0.167 & $5.8\times10^{-5}$ & $2.9\times10^{-2}$ \\
4 & 0.133 & $6.1\times10^{-5}$ & $2.4\times10^{-2}$ \\
5 & 0.111 & $5.5\times10^{-5}$ & $2.2\times10^{-2}$ \\
6 & 0.095 & $6.2\times10^{-5}$ & $2.0\times10^{-2}$ \\ \bottomrule
\end{tabular}
\caption{Average variation $\bar\Delta$ in the heterogeneous‑step RPS ring. Values $<10^{-4}$ indicate numerical convergence; larger values indicate sustained oscillations.}
\label{tab:exp-variance}
\end{table}

For all $k\in\{2,\ldots,6\}$ a paired Wilcoxon signed‑rank test across seeds rejects equality of $\bar\Delta$ between the two regimes at $p<10^{-3}$ after Holm--Bonferroni correction, confirming a statistically significant onset of sustained oscillations as step heterogeneity increases.

\section{Minimum‑cost EDRE‑steering: LP formulation}\label{app:steering-full}

Given a target profile $\hat\sigma$, the minimum total transfer that makes $\hat\sigma$
satisfy Definition~\ref{def:edre} is given by
\[
\begin{array}{ll}
\text{minimize}   & \displaystyle\sum_{i,a}\tau_{i,a} \\[4pt]
\text{subject to} & u_i(a,\hat\sigma_{-i})+\tau_{i,a} \;\le\; u_i(\hat\sigma) \quad \forall i,\ a,\\[2pt]
                  & \tau_{i,a}\ge 0 \quad \forall i,\ a.
\end{array}
\]
If a strict margin $\varepsilon>0$ is desired, replace the right‑hand side by $u_i(\hat\sigma)-\varepsilon$.

\begin{proposition}[Optimality and preservation of dynamic EDRE properties]
\label{prop:lp-steering}
The LP returns transfers $\tau$ of minimum $\ell_1$ cost that make $\hat\sigma$ satisfy Definition~\ref{def:edre}. The dynamic EDRE properties (Definition~\ref{def:dynamic-edre}(i)--(ii)) are preserved up to range scaling: with $B=\max_i(\max_a u_i^\tau(a)-\min_a u_i^\tau(a))$, the deviation‑regret bound scales by $B$ in the martingale term and $B^2$ in the drift (see Appendix~\ref{app:transfer}).
\end{proposition}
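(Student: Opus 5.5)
The proposition has two parts: LP optimality and preservation of the dynamic properties. I would prove them separately, first optimality and then the dynamic part.

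\emph{Optimality.} Feasibility and the static certificate are immediate. For fixed $\hat\sigma$, both $u_i(a,\hat\sigma_{-i})$ and $u_i(\hat\sigma)$ are constants computed from the original game. Hence the constraints are linear in $\tau$, and any feasible $\tau$ makes every unilateral deviation gain nonpositive, which is Definition~\ref{def:edre} with $\varepsilon=0$, or with margin $\varepsilon$ in the strict variant. Minimum cost follows from two observations:
\begin{itemize}[leftmargin=1.5em]
\item The constraints decouple across the pairs $(i,a)$.
\item Each $\tau_{i,a}$ appears in exactly one constraint, of the form $\tau_{i,a}\le u_i(\hat\sigma)-u_i(a,\hat\sigma_{-i})$, together with $\tau_{i,a}\ge 0$.
\end{itemize}
So the LP reduces to separable one-dimensional problems that can be checked directly. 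I would also verify that the feasible region is nonempty (it contains $\tau=0$ whenever $\hat\sigma$ is already Nash), which is exactly where the sign convention matters. Given the literal constraint direction, I would reconcile it with the interpretation that transfers subsidize the target action. The consistent reading is that $\tau_{i,a}$ is a penalty on deviation actions, with the transferred game defined by $u_i^\tau(a)=u_i(a)-\tau_{i,a_i}$ on non-target actions. Either way, the LP value equals $\sum_{i,a}\max\{0,\,u_i(a,\hat\sigma_{-i})-u_i(\hat\sigma)\}$, so optimality is verified coordinatewise.

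\emph{Dynamic properties.} I would rely entirely on Appendix~\ref{app:transfer}. Transfers depend only on the player's own action, so $\Phi^\tau=\Phi+\sum_i\tau_{i,a_i}$ remains an exact potential. Its multilinear extension differs from $\Phi$ by a function that is affine in each $\sigma_i$, so $\nabla\Phi^\tau-\nabla\Phi$ is constant and the smoothness constant $L$ is unchanged. Therefore Theorem~\ref{thm:global-md} and Lemma~\ref{lem:sep-conv-iter2} apply verbatim, which gives item (ii).

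For item (i), I would redo the proof of Lemma~\ref{lem:regret} with payoffs in a range of width $B$. EMD is invariant to adding constants to a player's payoff vector, so I would shift payoffs so that gradients satisfy $\|g_i^t\|_\infty\le B$. Two terms then change:
\begin{itemize}[leftmargin=1.5em]
\item The martingale increments are bounded by $2B|C|$, so Azuma gives the factor $B$.
\item The drift term $\tfrac12\sum_t\eta_t\|g_i^t\|_\infty^2$ picks up the factor $B^2$.
\end{itemize}
This reproduces the displayed bound.

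The main obstacle is not the analysis but bookkeeping. The sign and definition of transfers must match between the LP constraint and $u_i^\tau$, so that the game being steered is the same game whose dynamics are analyzed. I would resolve this first by fixing one convention and restating both the LP and Appendix~\ref{app:transfer} under it.
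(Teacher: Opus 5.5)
There is a genuine gap, and it is in the optimality half. Your dynamic half matches the paper's own argument (Appendix~\ref{app:transfer}). Your per-player constant shift is a useful detail the appendix skips: it is what makes the range $B$, rather than $\max|u_i^\tau|$, bound $\|g_i^t\|_\infty$. (The reason $L$ is unchanged is that $\sum_i\langle\sigma_i,\tau_i\rangle$ is jointly linear. ``Affine in each $\sigma_i$'' would not suffice, since $\Phi$ itself has that property.)

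The paper gives no argument for the optimality half, and you cover it with a false identity. Read literally, each coordinate problem is $0\le\tau_{i,a}\le u_i(\hat\sigma)-u_i(a,\hat\sigma_{-i})$. So the LP is feasible if and only if $\hat\sigma$ is already a Nash equilibrium of the \emph{untransferred} game, and then its optimum is $\tau=0$. Hence your claim that ``either way, the LP value equals $\sum_{i,a}\max\{0,u_i(a,\hat\sigma_{-i})-u_i(\hat\sigma)\}$'' fails for every non-Nash target. There the right side is positive, while the LP is infeasible and returns nothing. Yet own-action transfers that make $\hat\sigma$ an equilibrium do exist: for a pure target, subsidize $\hat a_i$ by $\max_a(u_i(a,\hat a_{-i})-u_i(\hat a))^+$. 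The feasibility issue you flag as ``where the sign convention matters'' is therefore the whole story. The printed constraint does not encode equilibrium of $u_i^\tau(a)=u_i(a)+\tau_{i,a_i}$, whose right-hand side would be $u_i(\hat\sigma)+\sum_{a'}\hat\sigma_i(a')\tau_{i,a'}$. So the first sentence of the proposition, as printed, is trivial for Nash targets and false otherwise.

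Your penalty reading does not rescue it. It flips the sign of $\tau$ in the constraint, so it is a different LP rather than a reading of this one. Write $g_{i,a}:=u_i(a,\hat\sigma_{-i})-u_i(\hat\sigma)$. The flipped LP's value $\sum_{i,a}g_{i,a}^+$ is the minimum cost only when no action in $\mathrm{supp}\,\hat\sigma_i$ must be penalized, i.e.\ essentially for pure targets. Take $\hat\sigma_i=(\tfrac12,\tfrac12)$ on $\{a,b\}$ with $u_i(a,\hat\sigma_{-i})=1$ and $u_i(b,\hat\sigma_{-i})=0$. The flipped LP sets $\tau_{i,a}=\tfrac12$. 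But $a$ is a support action, so the penalty also lowers $u_i^\tau(\hat\sigma)$ to $\tfrac14$, and $a$ (now earning $\tfrac12$) is still a profitable deviation. The true minimum cost is $1$.

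Even for pure targets, the penalty minimum $\sum_i\sum_a g_{i,a}^+$ can differ from the subsidy minimum $\sum_i\max_a g_{i,a}^+$ under the convention of Appendix~\ref{app:transfer}. You also keep $\Phi^\tau=\Phi+\sum_i\tau_{i,a_i}$ in your dynamic half, so your two halves concern different games.

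The clean repair is to impose the constraint in the transferred game:
$u_i(a,\hat\sigma_{-i})+\tau_{i,a}\le u_i(\hat\sigma)+\sum_{a'}\hat\sigma_i(a')\tau_{i,a'}$, with $\tau\ge0$.
\begin{itemize}
\item For fixed $\hat\sigma$ this is linear in $\tau$.
\item Its feasible set is exactly the set of transfers making $\hat\sigma$ an equilibrium of $u^\tau$, so minimality holds by definition of the LP optimum and no closed form is needed.
\item It is always feasible: raise each support action's payoff to $\max_a u_i(a,\hat\sigma_{-i})$.
\item Your dynamic argument then applies to that same game.
\end{itemize}
Otherwise you must state that the LP as printed only certifies targets that are already Nash.
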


\bibliographystyle{plainnat}
\bibliography{edre}

\end{document}